\theoremstyle{definition}
\newtheorem{definition}{Definition}[section]
\newtheorem{example}[definition]{Example}
\newtheorem{assumption}[definition]{Assumption}
\newtheorem{examples}[definition]{Examples}
\newtheorem{construction}[definition]{Construction}
\newtheorem{problem}[definition]{Open Problem}
\theoremstyle{plain}
\newtheorem{theorem}[definition]{Theorem}
\newtheorem{lemma}[definition]{Lemma}
\newtheorem{proposition}[definition]{Proposition}
\newtheorem{corollary}[definition]{Corollary}
\theoremstyle{remark}
\newtheorem{remark}[definition]{Remark}
\newcommand{\takeout}[1]{\empty}
\numberwithin{equation}{section}
\newcommand{\modcol}[1]{{\color{blue}#1}}
\newcommand{\ibox}{\modcol{
\blacktriangleright}}
\newcommand{\unpdg}{\ddagger} 
\newcommand{\comp}{\mathop{\triangleright}}
\newcommand{\curry}[3]{\mathsf{curry}^{#1}_{#2,#3}}
\newcommand{\eval}[2]{\mathsf{eval}_{#1,#2}}
\newcommand{\iprod}[2]{\mathsf{can}^{-1}_{#1,#2}}
\def\ev{\mathsf{eval}}
\newcommand{\WG}{\mathfrak{W}}
\newcommand{\presh}[2]{\mathsf{presh}(#1,#2)}
\newcommand{\fxtrm}{\modcol{\mathfrak{f}}}
\newcommand{\dltrm}{\modcol{\mathfrak{p}}\!}
\newcommand{\dfsp}{\;}
\newcommand{\fix}[2]{\fxtrm#1.\!\!\dfsp#2} 
\newcommand{\delay}[1]{\dltrm\dfsp#1}
\newcommand{\deq}{=}
\newcommand{\klei}[1]{{#1}^*}
\newcommand{\iter}[1]{\mathsf{it}(#1)}
\newcommand{\om}{\mathtt{\omega}}
\newcommand{\fxob}{\Omega}
\newcommand{\mS}{T}
\newcommand{\siin}{\mathtt{\sigma}}
\def\mtrlo{\xymatrixcolsep{5pc}\xymatrix@1}
\def\mtr{\xymatrix@1}
\newcommand{\commu}{\circlearrowleft }
\newcommand{\cat}[1]{\mathcal{#1}}
\newcommand{\catC}{\cat{C}}
\newcommand{\catD}{\cat{D}}
\def\A{\cat A}
\def\C{\catC}
\def\D{\catD}
\def\cpo{\mathsf{CPO}}
\def\cpob{\mathsf{CPO}_\bot}
\def\cms{\mathsf{CMS}}
\def\setc{\mathsf{Set}}
\def\refeq#1{(\ref{#1})}
\def\prl{\pi_\ell}
\def\prr{\pi_r}
\def\can{\mathsf{can}}
\def\Id{\mathsf{Id}}
\def\id{\mathsf{id}}
\def\ol#1{\overline{#1}}
\def\sol#1{#1^\dagger}
\def\ssol#1{#1^\unpdg}
\def\Tr{\mathsf{Tr}}
\def\op{\mathsf{op}}
\def\subto{\hookrightarrow}
\def\inr{\mathsf{inr}}
\def\inl{\mathsf{inl}}
\def\can{\mathsf{can}}
\newcommand{\tragger}{{\Tr_\dagger}}
\newcommand{\dace}{{\dagger_\Tr}}
\newcommand{\Nat}{\mathds{N}}
\newcommand{\tlnt}[1]{\tlnote[inline,marginclue]{#1}}
\begin{document}
\title{Guard Your Daggers and Traces: Properties of Guarded (Co-)recursion}
\subtitle{invited to a special issue of Fundamenta Informaticae (FiCS'13)}


\author{Stefan Milius \and Tadeusz Litak}
\titlehead{Chair for Theoretical Computer Science (Informatik 8) \\
Friedrich-Alexander University Erlangen-N\"{u}rnberg, Germany \\
\texttt{mail@stefan-milius.eu \ \ tadeusz.litak@fau.de}}

\maketitle

\begin{abstract}
  Motivated by the recent interest in models of guarded (co-)recursion,
  we study their equational properties. We formulate axioms for guarded
  fixpoint operators generalizing the axioms of iteration theories of
  Bloom and \'Esik.  Models of these axioms include both standard
  (e.g., cpo-based) models of iteration theories and models of guarded
  recursion such as complete metric spaces or the topos of trees
  studied by Birkedal et al. We show that the standard result on the
  satisfaction of all Conway axioms by a unique dagger operation
  generalizes to the guarded setting. We also introduce the notion of
  guarded trace operator on a category, and we prove that guarded
  trace and guarded fixpoint operators are in one-to-one
  correspondence. Our results are intended as first steps leading, hopefully, towards
  future description of classifying theories for guarded recursion. 
\end{abstract}

\section{Introduction}

Our ability to  describe concisely potentially infinite
computations or infinite behaviour of systems relies on 
recursion, corecursion and iteration. Most programming
languages and specification formalisms include a fixpoint
operator. In order
to give semantics to such operators one usually considers either
\begin{itemize}
\item models based on complete partial orders where fixpoint operators are
interpreted by  least fixpoints using the Kleene-Knaster-Tarski
theorem or
\item models based on complete metric spaces and unique
fixpoints via Banach's theorem or 
\item  term models where unique
fixpoints arise by unfolding specifications
syntactically. 
\end{itemize}

In the last of these cases, 
one only considers \emph{guarded} (co-)recursive definitions;  see
e.g.~Milner's solution theorem for CCS~\cite{milner89} or Elgot's
iterative theories~\cite{elgot75}. 
Thus, the fixpoint operator becomes a
partial operator defined only on a special class of maps. For a
concrete example, consider complete metric spaces which form a category
with all non-expansive maps as morphisms, but unique fixpoints are
taken only of contractive maps.

Recently, there has been a wave of interest in expressing guardedness by a
new type constructor $\ibox$, a kind of ``later'' modality, which
allows to make the fixpoint operator total
 \cite{Nakano00:lics,Nakano01:tacs,AppelMRV07:popl,BentonT09:tldi,KrishnaswamiB11:lics,KrishnaswamiB11:icfp,BirkedalMSS12:lmcs,BirkedalM13:lics,AtkeyMB13:icfp,Litak14:trends}.  For example, in the case
of complete metric spaces, $\ibox$ can be  an endofunctor scaling the metric
of any given space by a fixed factor $0<r<1$ so that non-expansive
maps of type $\ibox X \to X$ are precisely $r$-contractive ones. This allows to define a guarded (parametrized) fixpoint operator on the model that assigns to \emph{every} morphism $e: \ibox X \times Y
\to X$ a morphism $\sol e: Y \to X$. 
 Languages with a guarded fixpoint operator can be also interpreted in  the ``topos of
trees'', i.e.,\ presheaves on $\omega^{op}$~\cite{BirkedalMSS12:lmcs}
or, more generally, sheaves on complete Heyting
algebras with a well-founded basis~\cite{DiGianantonioM04:fossacs,BirkedalMSS12:lmcs}. Note that by using $\ibox$, guarded recursion becomes a generalization of standard recursion (since $\ibox$ can be the identity functor) rather than a specialization as in previous approaches.\smnote{I added this sentence to address the referee's ``General comment.''}

This paper initiates the study of the essential 
properties of such operators. 
Iteration theories~\cite{be93} are known to 
axiomatize equalities of \emph{unguarded} fixpoint terms in models based on complete partial
orders (see also \cite{sp00}). We make here the
first steps towards similar completeness results in the guarded setting. 

We begin with formalizing the notion of a guarded fixpoint operator on a cartesian category. We discuss a number of models, including not only all those mentioned above, but also some  not mentioned so far in the context of $\ibox$-guarded (co-)recursion. 
In fact, we consider the inclusion of examples such as the lifting
functor on $\cpo$ and, more broadly, 
 \emph{let-ccc's with a fixpoint object}~\cite{cp92} (see Examples
\ref{ex:cats}.(\ref{ex:cpo})--(\ref{ex:letccc}) and Theorem~\ref{th:proplet}) 
or completely iterative monads (Section \ref{sec:cim}) a pleasant
by-product of our work and a potentially fruitful connection for
future research.

In Section  \ref{sec:prop}, we formulate $\ibox$-guarded generalizations of standard
axioms of Conway and iteration theories (see, e.g.,~\cite{be93,sp00})  and prove their
soundness. 
In particular, 
models with \emph{unique} guarded fixpoint
operators satisfy all our axioms (Theorem~\ref{thm:unique}). Without the assumption of uniqueness, some problems appear (notably, Open Problems \ref{prob:letdd}, \ref{prob:dinat} and \ref{prob:donetwo}) and generalizations of several known derivations, like that of the Beki\v{c} identity from the Conway axioms (Proposition \ref{prop:bekic}) require some ingenuity. We believe these are positive signs:  sticking $\ibox$ in ``all the
right places'' cannot always be done on autopilot and subtle aspects of (co)-recursion invisible to the unguarded eye come to light, even on the purely equational level. For natural examples, however, most  properties in question seem to hold even without requiring uniqueness, as witnessed, e.g., by Theorems \ref{th:proplet} and \ref{th:cpoconway}.  


Hasegawa~\cite{h97} proved that giving a parametrized fixpoint operator on a cartesian category  is
equivalent to giving a \emph{traced cartesian structure} \cite{jsv96} on that
category.\footnote{\emph{Cartesian} here refers to the monoidal
  product being the ordinary categorical product.} In Section~\ref{sec:tr}, 
  we introduce a natural notion of a \emph{guarded trace operator}
on a category, and we prove in Theorem~\ref{thm:tr} that guarded traces
and guarded fixpoint operators are in one-to-one correspondence.
This extends to an isomorphism between the (2-)categories of guarded traced
cartesian categories and guarded Conway categories (Corollary \ref{cor:iso}). Just like in the unguarded case, the notion of trace would make sense in a general monoidal setting (Remarks \ref{rem:mono} and \ref{rem:comono}). We leave this as an exciting avenue for future research.

Finally, Section~\ref{sec:conc} concludes and discusses further work. 

A few words are due on differences with a previously published extended abstract of this paper~\cite{ml13}.
We obviously provide full proof details of all results. We also discuss
additional equational properties of guarded fixpoint operators in
Sections~\ref{sec:bekic} and~\ref{sec:dinat}. Moreover,
Theorem~\ref{th:proplet} concerning let-ccc's is new and so is, e.g., Example \ref{ex:bizjak} (the last provided by Ale\v{s} Bizjak).

We decided to move some more technical proofs to an appendix in order to make the paper more readable.
%

\subsection{Notational Conventions}

We assume  familiarity with basic notions
of category theory. We denote the product of two objects by
$
\xymatrix@1{
  A & A \times B \ar[l]_-{\prl} \ar[r]^-{\prr} & B
  }
$
and $\Delta: A \to A \times A$ denotes the diagonal. For every functor
$F$ we write $\can = \langle F\prl, F\prr\rangle: F(A\times B) \to FA
\times FB$ for the canonical morphism. We denote the terminal object
in a cartesian category as 1 and the unique morphism for each $X$ as
$! : X \to 1$. Wherever convenient, we use freely other standard 
conventions such as identifying $X$ and $1 \times X$ or dropping subscripts of natural transformations if they are clear from the context.

 $\cpo$ denotes the
category of complete partial orders (cpo's), i.e.\ partially ordered sets (possibly without a least element) having joins of $\omega$-chains. The
morphisms of $\cpo$ are Scott-continuous maps, i.e.\ maps preserving
joins of $\omega$-chains.  $\cpob$ is the full subcategory of
$\cpo$ given by all cpo's with a least element $\bot$. We will also
consider the category $\cms$ of complete 1-bounded metric spaces and
non-expansive maps, i.e.\ maps $f: X \to Y$ such that for all $x,y \in
X$, $d_Y(fx,fy) \leq d_X(x,y)$; see Krishnaswami and Benton~\cite{KrishnaswamiB11:lics,KrishnaswamiB11:icfp}
      or Birkedal et al.~\cite[Section~5]{BirkedalMSS12:lmcs} and references therein.

Instead of writing ``the following square commutes'' or ``the
following diagram commutes'', we write $\commu$ in the middle of the
diagram in question. We  also use 
objects to denote their identity morphisms. Finally, we sometimes write $X = Y$ to indicate that two objects in a category are isomorphic.


\section{Guarded Fixpoint Operators}
\label{sec:fix}

In this section we define the notion of a guarded fixpoint operator on
a cartesian category and present an extensive list of examples. Some
of these examples like the lifting functor $(-)_\bot$ on $\cpo$ (see
Example \ref{ex:cats}.\ref{ex:cpo}) or completely iterative monads
(see Section \ref{sec:cim}) do not seem to have been considered as
instances of the guarded setting before. We also discuss in detail the connection with \emph{models of guarded fixpoint terms} of Birkedal et al. \cite{BirkedalMSS12:lmcs}, see Proposition \ref{prop:mgrt}.



\subsection{Definition and Examples of Guarded Fixpoint Operators}

\begin{assumption} \label{mainassumption}
  We assume throughout the rest of the paper that $(\catC,\ibox)$ is a pair consisting of a category $\catC$ with finite
  products (also known as a \emph{cartesian category}) and a pointed endofunctor $\ibox: \catC \to
  \catC$, i.e.\ we have a natural transformation $p: \Id \to
  \ibox$. The endofunctor $\ibox$ is called \emph{delay} or the \emph{``later'' modality}.
\end{assumption}

\begin{remark}
In references like Birkedal et al. \cite{BirkedalMSS12:lmcs,BirkedalM13:lics}, much more is assumed about both the underlying category and the delay endofunctor. Modelling simply-typed lambda calculus requires cartesian closure. Dependent types require additional conditions like being a \emph{type-theoretic fibration category} (see, e.g., \cite[Definition IV.1]{BirkedalM13:lics}).   In such a case, one also wants to impose some limit-preservation or at least finite-limit-preservation condition on the delay endofunctor \cite[Definition 6.1]{BirkedalMSS12:lmcs}---e.g., to ensure the transfer of the guarded fixpoint operator to slices. We do not impose any of those restrictions 
 because we do not need them in our derivations. For more on the connection with the setting of Birkedal et al. \cite{BirkedalMSS12:lmcs}, see Proposition \ref{prop:mgrt} below.
\end{remark}


\begin{definition}
  \label{def:dagger}
  A \emph{guarded fixpoint operator} on $(\catC,\ibox)$ is a family of
  operations
  \[
  \dagger_{X,Y} : \catC(\ibox X \times Y, X) \to \catC(Y,X)
  \]
  such that for every $f: \ibox X \times Y \to X$,
    \begin{equation}\label{eq:fixp}
      \vcenter{
        \xymatrix@C+1pc{
          Y 
          \ar[r]^-{\sol f}
          \ar[d]^{\qquad\quad \commu}_{\langle \sol f, Y\rangle }
          &
          X
          \\
          X \times Y
          \ar[r]_-{p_X \times Y}
          &
          \ibox X \times Y
          \ar[u]_{f}
        }
      }
    \end{equation}
    \iffull
    \[
    \inferrule{\Gamma, x: \ibox X \vdash F: X}{\Gamma \vdash \fix{x}{F} =  F [\delay{\fix{x}{F}}/x]}
    \]
    \fi
    where (as usual) we drop the subscripts and write $\sol f:Y \to X$ in lieu of
    $\dagger_{X,Y}(f)$. We call the triple  $(\catC,\ibox, \dagger)$ a \emph{guarded fixpoint category}.

    Moreover, $(\catC,\ibox,\dagger)$ is called a \emph{unique guarded fixpoint category} if
    for every $f: \ibox X \times Y \to X$, 
    $\sol f$ is the unique morphism satisfying~\refeq{eq:fixp}. In this
    case, we can just write a pair $(\catC,\ibox)$ instead of a triple $(\catC,\ibox,\dagger)$.
\end{definition}

If one does not require that $f^\dagger$ is the unique solution of~\refeq{eq:fixp}, one usually assumes that $\dagger$ satisfies further properties. For example, Simpson and Plotkin~\cite{sp00} require that a parametrized fixpoint operator $\dagger$ is natural in $Y$ in the base definition. 
We will come to the study of properties of $\dagger$ such as naturality in Section~\ref{sec:prop}. Let us begin with a list of examples. Note that in
most cases, we do not explicitly mention the action of $\ibox$ on morphisms whenever
it is canonical; for Example \refeq{ex:presheaves}, it is given in
Appendix A.


\begin{examples}\label{ex:cats}
  \begin{enumerate}[(1)]
    \item \label{ex:identity} Taking as $\ibox$ the identity functor on $\catC$ and $p_X$
      the identity on $X$ we arrive
      at the special case of categories with an ordinary fixpoint
      operator $\catC(X \times Y, X) \to \catC(Y,X)$ (see e.g.\
      Hasegawa~\cite{h97,h99} or Simpson and Plotkin~\cite{sp00}). Concrete
      examples are: the category $\cpob$ with its usual least fixpoint
      operator or (the dual of) any iteration theory of Bloom and
      \'Esik~\cite{be93}.  
   
    \item \label{ex:constant} Taking $\ibox$ to be the constant
      functor $\ibox X \deq 1$ 
      and $p_X \deq \mathord{!}: X \to 1$, 
      a trivial
      guarded fixpoint operator is given by the family of identity
      maps on the hom-sets $\catC(Y,X)$.  

      
    \item \label{ex:cms} Take $\catC$ to be $\cms$, 
      $r \in (0,1)$, $\ibox_r: \cms \to \cms$ to be the endofunctor
      keeping the carrier of the space and multiplying all
      distances by $r$, and $p_X: X \to \ibox_r X$ to be the obvious
      ``contracted identity'' mapping. Note that a non-expansive
      mapping $f: \ibox_rX \to X$ is the same as an
      \emph{$r$-contractive} endomap, i.e.\ an endomap satisfying
      $d(fx,fy) \leq r \cdot d(x,y)$ for every $x, y \in X$. An
      application of Banach's unique fixpoint theorem yields a guarded fixpoint operator: for every $f: \ibox_r X \times Y \to X$ we consider the map 
\[
\Phi_f: \cms(Y,X) \to \cms(Y,X), \qquad \Phi_f(m) \deq f \cdot (p_X \times Y) \cdot \langle m, Y\rangle;
\]
notice that $\cms(Y,X)$ is a complete metric space with the $\sup$-metric 
\[
d_{Y,X}(m,n) \deq \sup_{y \in Y}\{d_X(my, ny)\}.
\]
Using that non-expansive maps are closed under composition, product and
pairing, it is easy to show that $\Phi_f$ is an $r$-contractive map, and so
its fixpoint is the unique non-expansive map $\sol f: Y \to X$
satisfying~\refeq{eq:fixp}.

    \item \label{ex:toptrees} Let $\cat A$ be a category with finite products and
       $\C$ be the presheaf category $\presh{\omega}{\A} \deq \A^{\omega^\op}$ of
      $\omega^{\mathsf{op}}$-chains in $\A$. The delay functor 
      is given by
      $\ibox X(0) \deq 1$ and $\ibox X(n+1) \deq X(n)$ for $n \geq 0$,
      whereas $p_X$ is given by $(p_X)_0 \deq \mathord{!} : X(0) \to 1$ and $(p_X)_{n+1}
      \deq X (n+1 \geq n): X(n+1) \to X(n)$. For
      every $f: \ibox X \times Y \to X$ there is a unique $\sol f: Y
      \to X$ satisfying~\refeq{eq:fixp} given by 
      $\sol f_0 \deq f_0: Y(0) \to X(0)$ and 
      \[
      \sol f_{n+1}
      \deq
      (\xymatrix@1{
        Y(n+1) \ar[rrrr]^-{\langle \sol f_n \cdot Y(n+1 \geq n),
          Y(n+1)\rangle}
        &&&&
        X(n) \times Y(n+1)
        \ar[r]^-{f_{n+1}}
        &
        X(n+1)
      }).\]
      Notice that for $\A = \setc$, $\C$ is the ``topos of trees'' 
 of Birkedal et al. ~\cite{BirkedalMSS12:lmcs}.

      The next example generalizes this one. 

    \item
      \label{ex:presheaves} 
      Assume $\WG \deq (W, \leq)$ is a well-founded poset, i.e,
      contains no infinite descending chains. As usual we write $x <
      y$ whenever $x \leq y$ and $x \neq y$. Furthermore, let $\catD$
      be a (small) complete category and $\catC \deq \presh{\WG}{\D}$,
      i.e., $\catC = \catD^{(W, >)}$. Define $\ibox X(w)$ to be the
      limit of the diagram whose nodes are $X(v)$ for $v < w$ and
      whose arrows are restriction morphisms: $\ibox X(w) \deq \lim_{v
        < w} X(v)$. As restriction mappings from $X(w)$ itself form a
      cone on that diagram, a natural $p_X: X \to \ibox X$ is given by
      the universal property of the limits. Note that for any minimal element $r \in W$, we have
      that $(\ibox X)(r)$ is the terminal object $1$ of $\catD$. The
      $\dagger$-operation on $f: \ibox X \times Y \to X$ is defined by
      induction on $(W, \leq)$: assuming that $\sol f_v$ is already defined for all $v < w$ let
      \[
      \sol f_w \deq (\xymatrix@1{
        Y(w) \ar[rr]^-{\langle k, Y(w)\rangle} && \ibox X(w) \times Y(w) \ar[r]^-{f_w} & X(w)
      }),
      \]
      where $k: Y(w) \to \ibox X(w)$ is the morphism uniquely induced by the cone 
      \[
      \xymatrix@1{
        Y(w) \ar[rr]^-{Y(w > v)} && Y(v) \ar[r]^-{\sol f_v} & X(v)
      }
      \qquad\text{for every $v < w$.}
      \]
      This includes the case of a minimal element $r$ for which the above definition yields $\sol f_r \deq
      f_r: Y(r) = 1\times Y(r) \to X(r)$.

      One can prove that $\sol f$ is a unique morphism of presheaves satisfying~\refeq{eq:fixp}; more details can be found in Appendix~\ref{sec:appA}. This result generalizes Birkedal et al. \cite [Theorem~2.4]{BirkedalMSS12:lmcs}.

       Regarding the examples given by Birkedal et al. \cite{BirkedalMSS12:lmcs}, see also Proposition \ref{prop:mgrt} below, which also establishes that a let-ccc with a fixpoint object with $\ibox$ and $\dagger$ just defined forms a guarded fixpoint category.

\item \label{ex:cpo} Let $\ibox$ be the lifting functor $(-)_\bot$ on
  $\cpo$, i.e.\ for any cpo $X$, $X_\bot$ is the cpo with a newly
  added least element and the natural transformation $p_X: X \to X_\bot$
  is the embedding of $X$ into $X_\bot$. Taking least fixpoints yields
  a guarded
  fixpoint operator. To see this
  notice that the hom-sets $\cpo(X,Y)$ are cpos with the pointwise
  order: $f \leq g$ iff $f(x) \leq g(x)$ for all $x \in X$. Now any
  continuous $f: X_\bot \times Y \to X$ gives rise to a continuous map
  $\Phi_f$ on $\cpo(Y, X_\bot)$:
  \[
  \Phi_f: \cpo(Y, X_\bot) \to \cpo(Y, X_\bot), \qquad
  \Phi_f(m) \deq p_X \cdot f \cdot\langle m, Y\rangle.
  \]
  Using the least fixpoint $s$ of $\Phi_f$  define
  $
  \sol f \deq (\xymatrix@1{
    Y 
    \ar[r]^-{\langle s, Y\rangle}
    &
    X_\bot \times Y
    \ar[r]^-f
    &
    X
  })$. 
  As $s = \Phi_f(s)$, one can easily show
  $\sol f$ satisfies~\refeq{eq:fixp}. 
  Just as Example (\ref{ex:presheaves}) was more general than~(\ref{ex:toptrees}),
  the present example is also known to be
  an instance of a more general situation:
  
\item \label{ex:letccc}  Crole and Pitts \cite{cp92,Crole:phd} define a
  \emph{let-ccc with a fixpoint object}%
  \takeout{
    \footnote{In fact, Crole and
      Pitts introduced \emph{let-ccc}'s and \cite{cp92} 
      postulates Definition 2.1 in a form which is equivalent to the
      one stated above only in a cartesian closed setting. However,
      the definition given herein is immediately suggested by 
      Lemma 2.2 \cite{cp92} and an explict claim theirein: \emph{if one simply has a
        category with finite products and a strong monad, the
        definition of fixpoint object should be strengthened to a
        parametrised form}. This is the route we follow
      here. Cf. Proposition \ref{prop:mgrt} below for an analogous
      discussion.} }
  as a tuple  
  \[
  (\catC, \mS,\eta,\mu, s, \mS\fxob \stackrel{\siin}{\longrightarrow} \fxob, 1 \stackrel{\om}{\longrightarrow} \fxob),
  \]
  where 
  \begin{itemize}
  \item $\C$ is a cartesian closed category,
  \item $(T,\eta, \mu, s)$ is a strong monad on $\C$, i.e.~a monad together with a \emph{strength} viz. a family of morphisms $s_{A,B}: A \times TB \to T(A \times B)$ natural in $A$ and $B$ and compatible with the monad structure in the obvious way,
  \item $\siin: T\fxob \to \fxob$ is an initial algebra for the functor $T$ and
  \item $\om$ is the equalizer of $\siin \cdot \eta_\fxob$ and
    identity on $\fxob$ (i.e., the unique fixpoint of $ \siin
    \cdot \eta_\fxob$) .
  \end{itemize}
  
  Recall from Crole and Pitts~\cite{Crole:phd,cp92} (and cf.~Manes~\cite{manes03}) that a strong monad can be specified by an object assignment $T$, a family of morphisms $\eta_A$ and an operation $\klei{(-)}: \C(A \times B, TC) \to \C(A \times TB, TC)$ satisfying the following laws:
  \begin{enumerate}[(a)]
  \item \label{kl:lif} for any $f: A \times B \to TC$, we have $\klei{f} \cdot (A \times \eta_B) = f$;
  \item \label{kl:pro} $\klei{(\eta_B \cdot \prr)} = \prr: A \times TB \to TB$;
  \item \label{kl:com} given $f: A \to A'$ and $g: A' \times B \to TC$ we have $\klei{(g \cdot (f \times B))} = \klei g \cdot (f \times TB)$;
  \item \label{kl:pai} given $f: A \times B \to TC$ and $g: A \times C \to TD$ we have
    $
    \klei{(\klei g \cdot \langle \prl, f\rangle)} = \klei g \cdot \langle \prl, \klei f\rangle.
    $
  \end{enumerate}
  
  The initiality of $\fxob$ and cartesian closedness of $\C$ yield that for any $f: C \times
  \mS A \to  A$ there exists a unique morphism $\iter{f}: C \times \fxob \to A$ such that 
  \begin{equation} \label{eq:defsiin}
    \vcenter{
      \xymatrix@C+3pc{
        C \times \fxob 
        \ar[r]^{\iter{f}}  
        \ar@{}[rd]|\commu
        & 
        A 
        \\
        C \times \mS \fxob 
        \ar[u]^{C \times \siin}
        \ar[r]_-{\langle \prl, \klei{(\eta \cdot \iter{f})}\rangle}
        & 
        C \times \mS A 
        \ar[u]_{f}
      }
    }
  \end{equation}
  
  Now set $\ibox \deq \mS$ and for any $f: Y \times TX \to X$
  define\footnote{Note that to be consistent with~\cite{cp92} we put
    $\ibox X$ in the right-hand product component of the domain of
    $f$.}  $\sol f \deq \iter{f} \cdot (Y \times \om)$. To see that this example covers the
  preceding one, follow \cite{cp92} and set $\mS \deq
  (-)_\bot$, $\eta_X \deq p_X$, $\klei{(-)}$ the strict extension (i.e.\ with $\klei f (a, \bot) = \bot$) and 
  $
  \fxob \deq \{ 0 \sqsubset 1 \sqsubset \dots  \sqsubset \top\}.
  $
  We return to let-ccc's in Theorem \ref{th:proplet} below.
\end{enumerate}
\end{examples}

\begin{remark}
   If we replaced initiality of $\fxob$ in Example~\ref{ex:cats}.\refeq{ex:letccc}
   by
   the  (parametrized) universal property \refeq{eq:defsiin} of $\fxob$ we could drop cartesian closure of $\C$ and simply assume $\C$ to be a category with products; cf.~Crole and Pitts \cite[Lemma 2.2]{cp92} and an explict claim therein: \emph{if one simply has a category with finite products and a strong monad, the definition of fixpoint object should be strengthened to a parametrised form}. See also Proposition~\ref{prop:mgrt} below for an analogous discussion.   
    \end{remark}

Birkedal et al. \cite{BirkedalMSS12:lmcs} provide a general setting for topos-theoretic examples like~\ref{ex:cats}.\refeq{ex:toptrees} and~\ref{ex:cats}.\refeq{ex:presheaves} (the latter restricted to the case of $\setc$-presheaves) by defining a notion of \emph{a model of guarded recursive terms} and showing that \emph{sheaves over complete Heyting algebras with a well-founded basis} proposed by Di Gianantonio and Miculan \cite{DiGianantonioM04:fossacs} are instances of this notion. The difference between Definition 6.1 in \cite{BirkedalMSS12:lmcs} and our  Definition \ref{def:dagger} is that in the former a) the delay endofunctor $\ibox$ is also assumed to preserve finite limits, in particular finite products. On other hand b) our equality \refeq{eq:fixp} is only postulated in the case where $Y$ is the terminal object, i.e., only a non-parametrized fixpoint identity is assumed but c) the dagger in this less general version of \refeq{eq:fixp} is assumed to be unique.  Now, one can show that assumptions a) and c) imply our parametrized identity \refeq{eq:fixp}  \emph{whenever the underlying category is cartesian closed}, in particular whenever $\C$ is a topos.  Let us state both the definition and the result formally:

\begin{definition}
  \label{def:mgrt} 
  A \emph{weak model of guarded fixpoint terms} is a triple $(\catC, \ibox, \unpdg)$, where 
  \begin{itemize}
  \item  $(\catC, \ibox)$  satisfy our general Assumption \ref{mainassumption},
   \item $\ibox$ preserves finite products with  $\iprod{X}{Y} : \ibox
     X \times \ibox Y \to \ibox(X \times Y)$ as the witnessing isomorphism,
  \item  $\unpdg$ is a family of
  operations $\unpdg_{X} : \catC(\ibox X, X) \to \catC(1,X)$
  such that for every $f: \ibox X \to X$, $\ssol f$ is a unique morphism for which 
    \begin{equation}\label{eq:mgrt}
      \vcenter{
        \xymatrix@C+1pc{
          1 
          \ar[r]^-{\ssol f}
          \ar[d]^{\qquad \commu}_{\ssol f }
          &
          X
          \\
          X 
          \ar[r]_-{p_X}
          &
          \ibox X
          \ar[u]_{f}
        }
      }
    \end{equation}
\end{itemize}
\end{definition}

A \emph{model of guarded fixpoint terms} \cite{BirkedalMSS12:lmcs} is
a weak model in which in addition $\catC$  has finite \emph{limits}
(not just products) and $\ibox$ preserves them.


 
\begin{proposition} \label{prop:mgrt}
If $(\catC, \ibox, \unpdg)$ is a weak model of guarded fixpoint terms and $\catC$ is \emph{cartesian closed} with 
\[
\curry{X}{Y}{Z}: \catC(X \times Y, Z) \to \catC(X, Z^Y),
\qquad\text{and}\qquad
\eval{Y}{Z}: Z^Y \times Y \to Z,
\]
then the operator $\dagger_{X,Y} : \catC(\ibox X \times Y, X) \to \catC(Y,X)$ defined as 
\[
\sol f = \eval{Y}{X} \cdot \left(\left[\curry{\ibox(X^Y)}{Y}{X}\left(f \cdot \langle \ibox \eval{Y}{X} \cdot \iprod {X^Y}{Y} \cdot (\ibox(X^Y) \times p_Y), \prr \rangle\right)\right]^{\unpdg} \times Y\right)
\]
is a unique guarded fixpoint operator on $(\catC, \ibox)$. 
\end{proposition}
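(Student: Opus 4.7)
The plan is to reduce the parametrized guarded fixpoint equation for $f : \ibox X \times Y \to X$ to the non-parametrized one for an auxiliary morphism on the exponential object $X^Y$, much like Hasegawa's reduction in the unguarded cartesian-closed case. Concretely, I would abbreviate the inner expression as
\[
h \deq f \cdot \langle \ibox \eval{Y}{X} \cdot \iprod{X^Y}{Y} \cdot (\ibox(X^Y) \times p_Y),\; \prr\rangle : \ibox(X^Y) \times Y \to X,
\]
and let $g \deq \curry{\ibox(X^Y)}{Y}{X}(h) : \ibox(X^Y) \to X^Y$, so that by construction $\sol f = \eval{Y}{X} \cdot (\ssol g \times Y)$ (modulo the identification $1\times Y \cong Y$).

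The technical heart of the argument is a pair of naturality identities for $p$. First, applied to $\eval{Y}{X}: X^Y\times Y \to X$ we have $p_X \cdot \eval{Y}{X} = \ibox(\eval{Y}{X}) \cdot p_{X^Y\times Y}$. Second, since $\iprod{X^Y}{Y}$ is the inverse of $\can = \langle \ibox\prl, \ibox\prr\rangle$, naturality of $p$ on $\prl$ and $\prr$ yields $\iprod{X^Y}{Y} \cdot (p_{X^Y}\times p_Y) = p_{X^Y\times Y}$. Combining them, the factor $(\ibox(X^Y)\times p_Y) \cdot (p_{X^Y}\times Y) = p_{X^Y}\times p_Y$ collapses the inner block of $h$, giving
\[
h \cdot (p_{X^Y} \times Y) \;=\; f \cdot (p_X \times Y) \cdot \langle \eval{Y}{X}, \prr\rangle.
\]

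For \emph{existence} of a solution I would then use the hypothesis $\ssol g = g \cdot p_{X^Y} \cdot \ssol g$, post-multiply by $Y$, pre-compose with $\eval{Y}{X}$, and plug in the identity above to read off $\sol f = f \cdot (p_X\times Y) \cdot \langle \sol f, Y\rangle$, which is \refeq{eq:fixp}. For \emph{uniqueness}, I would take any $k : Y \to X$ with $k = f \cdot (p_X\times Y) \cdot \langle k, Y\rangle$, transpose it to $\tilde k \deq \curry{1}{Y}{X}(k) : 1 \to X^Y$, and check that $\tilde k$ is a non-parametrized fixpoint of $g$, i.e.\ $\tilde k = g \cdot p_{X^Y} \cdot \tilde k$; by the bijectivity of currying this amounts to verifying $h \cdot ((p_{X^Y} \cdot \tilde k) \times Y) = k$, which follows by again applying the boxed identity above together with $\eval{Y}{X} \cdot (\tilde k \times Y) = k$ and the hypothesis on $k$. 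The uniqueness clause in Definition~\ref{def:mgrt} then forces $\tilde k = \ssol g$, hence $k = \sol f$.

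The main obstacle I expect is purely bureaucratic: keeping the sources and targets straight while chasing the naturality of $p$ through the composition of $\iprod{X^Y}{Y}$, $p_Y$ on the right factor, and the currying/uncurrying adjunction. Nothing conceptually new is needed beyond the two naturality squares and the universal property of $(-)^Y$; once the identity $h \cdot (p_{X^Y}\times Y) = f\cdot (p_X\times Y)\cdot\langle \eval{Y}{X}, \prr\rangle$ is isolated, both existence and uniqueness fall out of it symmetrically.
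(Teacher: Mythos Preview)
Your proposal is correct and follows essentially the same route as the paper. The paper packages the existence argument as one large commutative diagram whose inner cells are precisely the two naturality identities you isolate (its part ``(?)'' is your $\iprod{X^Y}{Y}\cdot(p_{X^Y}\times p_Y)=p_{X^Y\times Y}$, and its ``by pointedness'' edge is naturality of $p$ at $\eval{Y}{X}$), and then reuses the same diagram with $\ssol g$ replaced by the transpose of a candidate solution for uniqueness; your key identity $h\cdot(p_{X^Y}\times Y)=f\cdot(p_X\times Y)\cdot\langle\eval{Y}{X},\prr\rangle$ is exactly the content of that diagram read equationally. One tiny slip: where you write ``pre-compose with $\eval{Y}{X}$'' you mean post-compose (you apply $\eval{Y}{X}$ after $\ssol g\times Y$).
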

\newcommand{\curr}[1]{\widehat{#1}}
\begin{proof}
Take $f: \ibox X \times Y \to X$. For notational simplicity, let 
\[
g \deq f \cdot \langle \ibox \eval{Y}{X} \cdot \iprod {X^Y}{Y} \cdot (\ibox(X^Y) \times p_Y), \prr \rangle
\qquad\text{and}\quad
\curr{g} \deq \curry{\ibox(X^Y)}{Y}{X}(g)
\]
\enlargethispage{10pt}
so that $\sol f = \eval{Y}{X} \cdot (\curr{g}^{\unpdg}  \times Y)$. We
need to show \refeq{eq:fixp} for $\sol f$ defined this way, i.e.,
commutativity of the outside of the diagram below: 
\[
\xymatrix@C+2.5pc{
  Y = 1 \times Y
  \ar@{}[rd]|(.4){\commu\ \text{by (\ref{eq:mgrt})}}
  \ar[r]^{ \curr{g}^{\unpdg} \times Y }
  \ar[d]_{\curr{g}^{\unpdg} \times Y }
  &  
  X^Y \times Y 
  \ar[rr]^{\ev} 
  \ar@{}[rd]|(.4){\commu\ \text{by ccc}}
  && 
  X 
  \\
  X^Y \times Y 
  \ar[r]^-{ p \times Y} 
  \ar[d]_{\langle \ev,\prr\rangle} 
  \ar[rrd]_(.4)*+{\labelstyle \langle p, \prr \rangle} 
  & 
  \ibox (X^Y)  \times Y 
  \ar[rru]^g
  \ar[u]^{ \curr{g} \times Y}  
  \ar[r]_-{\langle\ibox (X^Y) \times  p, \prr\rangle}  
  &  
  \ibox (X^Y) \times \ibox Y\times Y 
  \ar[d]^{ \can^{-1} \times Y} 
  & 
  \\
  X \times Y 
  \ar@{}[rr]_-{\commu\ \text{by pointedness}} 
  && 
  \ibox ( X^Y \times Y) \times Y 
  \ar[r]^-{ \ibox\ev \times Y} 
  \ar@{}[luu]|(.25){\langle(?), \commu\rangle}
  & 
  \ibox X \times Y \ar[uu]_f
  \ar@{<-} `d[l] `[lll]^{p \times Y} [lll]
  \ar@{}[uul]_{\commu\ \text{by def.~of $g$}}
} 
\]
Thus, we need to show the left-hand product component (?)~of the inner triangle commutes. We postcompose both sides with the isomorphism $\can:\ibox(X^Y \times Y) \to \ibox(X^Y)\times \ibox Y$ and obtain
\[
\can \cdot p_{X^Y \times Y} 
= 
\langle\ibox \prl, \ibox \prr\rangle \cdot p_{X^Y \times Y} 
= 
\langle p_{X^Y} \cdot \prl, p_Y \cdot \prr\rangle  = p_{X^Y} \times p_Y,
\]
where the middle equation holds by the naturality of $p$. 

\takeout{ 
And for the left-hand component we have
\[
\xymatrixcolsep{5pc}\xymatrix{
\ibox (X^Y) \times Y
 \ar[r]^{\ibox (X^Y) \times p_Y} 
 \ar@{}[rd]|(0.25){\commu} & \ibox (X^Y) \times \ibox Y
\ar@{=}[r] &  \ibox (X^Y) \times \ibox Y 
\ar[d]^{ \iprod{X^Y}{Y}}   \\
X^Y \times Y \ar[u]^{ p_{(X^Y)} \times Y}  
\ar[ru]|(0.4){p_{(X^Y)} \times p_Y}
\ar@{}[r]|(0.45){\commu\ \text{by pointedness}}
&   \ibox ( X^Y \times Y)  \ar[u]|{\ibox\prl \times
  \ibox\prr} 
\ar@{}[ru]|{\commu} 
\ar@{=}[r] &
 \ibox ( X^Y \times Y)  \\
X^Y \times Y 
\ar@{=}[u]^{\prl \times \prr}
\ar[ru]|{ p_{(X^Y \times Y)}} & &
}
\]
\[
\xymatrix{}
\]}

It remains to prove that $\sol f$ is the unique
solution of~\refeq{eq:fixp}. Suppose that $s: Y \to X$ satisfies $s = f \cdot (p_X \times Y) \cdot \langle s, Y\rangle$. Let $c \deq \curry{1}{Y}{X} (s): 1 \to X^Y$. We show that~\refeq{eq:mgrt} holds with $f^\ddagger$ replaced by $c$, i.e.~that $c = \curr g \cdot p_{X^Y} \cdot c$. Equivalently, we show
\[
\eval{Y}{X} \cdot (c \times Y) = \eval{Y}{X} \cdot (\curr g\times Y) \cdot (p_{X^Y} \times Y) \cdot (c \times Y). 
\]
In order to see this first modify the above diagram by replacing
$\curr g^\ddagger$ in the upper left-hand square by $c$. Notice that
our desired equation corresponds to the (modified) upper left-hand
square postcomposed with $\eval{Y}{X}$, i.e., the right-hand morphism of the
upper edge. Since $\eval{Y}{X} \cdot (c \times Y) = s$, the outside
of the modified diagram commutes by hypothesis. Thus, since all other
parts commute as indicated we obtain the desired equation. This
implies that $c = \curr g^\ddagger$, by the uniqueness of the
latter. Hence 
\[
s = \eval{Y}{X} \cdot (c \times Y) =  \eval{Y}{X} \cdot (\curr g^\ddagger \times Y) = \sol f,
\]
which completes the proof.
\end{proof}

Proposition \ref{prop:mgrt} cannot be reversed: Example \ref{ex:cats}.(\ref{ex:cpo}) is a guarded fixpoint category, but $(-)_\bot$ clearly fails to preserve even finite products and hence it does not yield  a model of guarded recursive terms. 

\begin{example} \label{ex:bizjak}
A counterexample kindly provided by Ale\v{s} Bizjak shows that Proposition  \ref{prop:mgrt} does not hold without the assumption that $\catC$ is cartesian closed: there are examples of models of guarded recursive terms which are not guarded fixpoint categories. In fact, the adjective ``guarded'' can be removed altogether from  the previous sentence: $\catC$ in the Bizjak counterexample  is the category of groups and $\ibox = \Id_{\catC}$.  Initial and final objects coincide in $\catC$, which entails two things: 1) cartesian closure fails and 2) there is exactly one canonical choice for $\unpdg$ possible; $f^\unpdg:1 \to X$ is the unique morphism from the zero object for every group endomorphism $f: \ibox X = X \to X$. However, picking $X$ to be any non-trivial abelian group shows there is no way of defining $\dagger_{X,X}$. It is enough to consider $h: \ibox X \times X = X \times X \to X$ as being simply the group operation $+$: Equation \refeq{eq:fixp} then yields $\sol h(x) + x = \sol h(x)$ for every $x \in X$ and using inverses one has $x = 0$ contradicting nontriviality of $X$.
\end{example}

However, to apply Proposition  \ref{prop:mgrt}, it is enough that $(\catC, \ibox, \unpdg)$ is a \emph{full subcategory} of a cartesian closed model of guarded recursive terms such that, moreover, the inclusion functor preserves products and $\ibox$.



\begin{example} \label{ex:monads}
Monads provide perhaps the most natural and well-known examples of
pointed endofunctors.  Among delay endofunctors
in Example \ref{ex:cats},~\refeq{ex:identity}, \refeq{ex:constant}, \refeq{ex:cpo} and~\refeq{ex:letccc} happen to be monads. 
 In \refeq{ex:cms}, i.e.~the $\cms$ example, the type $\ibox\ibox A
 \to \ibox A$ is still inhabited (by any constant mapping), but one
 can easily show that monad laws cannot hold whatever candidate for
 monad multiplication is postulated. In the remaining (i.e.,
 topos-theoretic) examples, monad laws fail more dramatically:
 $\ibox\ibox A \to \ibox A$ is not even always inhabited. 
 \end{example}

 Section
 \ref{sec:cim} below discusses a class of examples of guarded fixpoint categories with unique dagger, where the delay endofunctor arises from (a module for) a monad. 
Later on, in Example \ref{ex:uniq} and Theorem \ref{th:proplet}, we will return to Examples~\ref{ex:cats}.\refeq{ex:cpo}--\refeq{ex:letccc}: we will see that while they do not
 possess uniqueness, they do enjoy other properties introduced in
 Section \ref{sec:prop} below.

\subsection{Completely Iterative Theories}
\label{sec:cim}

In this subsection we will explain how categories with guarded
fixpoint operator capture a classical setting in which guarded
recursive definitions are studied---Elgot's (completely) iterative
theories~\cite{elgot75,ebt78}. The connection to guarded fixpoint
operators is most easily seen if we consider monads in lieu of Lawvere
theories, and so we follow the presentation of \emph{completely
  iterative monads} by Milius~\cite{m05}. First we recall details of their motivating example: infinite trees on a signature $\Sigma$, i.e.\ a sequence $(\Sigma_n)_{n < \omega}$ of sets of operation symbols with
prescribed arity $n$. A $\Sigma$-tree $t$ on a set $X$ of generators is a
rooted and ordered (finite or infinite) tree
whose nodes with $n>0$ children are labelled by $n$-ary operation
symbols from $\Sigma$ and a leaf is labelled by a constant symbol
from $\Sigma_0$ or by a generator from $X$. One considers systems of mutually recursive
equations of the form
\[
x_i \approx t_i(\vec x, \vec y) \qquad i \in I,
\]
where $X = \{x_i \mid i\in I\}$ is a set of recursion variables and
each $t_i$ is a $\Sigma$-tree on $X+Y$ with $Y$ a set of parameters
(i.e.\ generators that do not occur on the left-hand side of a
recursive equation). A system of recursive equations is \emph{guarded}
if none of the trees $t_i$ is only a recursion variable $x \in
X$. Every guarded system has a unique \emph{solution}, which assigns
to every recursion variable $x_i \in X$ a $\Sigma$-tree $\sol t_i(\vec
y)$ on $Y$ such that $\sol t_i(\vec y) = t_i[\sigma]$, where $\sigma$ is the substitution replacing each $x_j$ by $\sol t_j(\vec y)$. For a concrete example, let $\Sigma$ consist of a binary
operation symbol $\ast$ and a constant symbol $c$, i.e.\ $\Sigma_0 =
\{c\}$, $\Sigma_2 = \{\ast\}$ and $\Sigma_n = \emptyset$ else. Then
the following system
\[
  x_1 \approx  x_2 \ast y_1 \qquad x_2 \approx (x_1 \ast y_2) \ast c,
\]
where $y_1$ and $y_2$ are parameters, has the following unique solution:
\[
\sol t_1 = 
\vcenter{
\xy
\POS (000,000) *{\ast} = "n1"
   , (-05,-05) *{\ast} = "n2"
   , (-10,-10) *{\ast} = "n3"
   , (-15,-15) *{\ast} = "n4"
   , (-20,-20) *{\ast} = "n5"
   , (-25,-25) *{\ast} = "n6"
   , (-30,-30) *{\ast} = "n7"
   , (-35,-35) = "n8"
\POS (005,-5) *{y_1}  = "m1"
   , (000,-10) *{c} = "m2"
   , (-05,-15) *{y_2} = "m3"
   , (-10,-20) *{y_1} = "m4"
   , (-15,-25) *{c} = "m5"
   , (-20,-30) *{y_2} = "m6"
   , (-25,-35) *{y_1} = "m7"
\POS "n1" \ar @{-} "n2"
\POS "n2" \ar @{-} "n3"
\POS "n3" \ar @{-} "n4"
\POS "n4" \ar @{-} "n5"
\POS "n5" \ar @{-} "n6"
\POS "n6" \ar @{-} "n7"
\POS "n7" \ar @{.} "n8"
\ar@{-} "n1";"m1"
\ar@{-} "n2";"m2"
\ar@{-} "n3";"m3"
\ar@{-} "n4";"m4"
\ar@{-} "n5";"m5"
\ar@{-} "n6";"m6"
\ar@{-} "n7";"m7"
\endxy
}
\qquad \textrm{and}\qquad 
\sol t_2 = 
\vcenter{
\xy
\POS (000,000) *{\ast} = "n1"
   , (-05,-05) *{\ast} = "n2"
   , (-10,-10) *{\ast} = "n3"
   , (-15,-15) *{\ast} = "n4"
   , (-20,-20) *{\ast} = "n5"
   , (-25,-25) *{\ast} = "n6"
   , (-30,-30) *{\ast} = "n7"
   , (-35,-35) = "n8"
\POS (005,-5) *{c}  = "m1"
   , (000,-10) *{y_2} = "m2"
   , (-05,-15) *{y_1} = "m3"
   , (-10,-20) *{c} = "m4"
   , (-15,-25) *{y_2} = "m5"
   , (-20,-30) *{y_1} = "m6"
   , (-25,-35) *{c} = "m7"
\POS "n1" \ar @{-} "n2"
\POS "n2" \ar @{-} "n3"
\POS "n3" \ar @{-} "n4"
\POS "n4" \ar @{-} "n5"
\POS "n5" \ar @{-} "n6"
\POS "n6" \ar @{-} "n7"
\POS "n7" \ar @{.} "n8"
\ar@{-} "n1";"m1"
\ar@{-} "n2";"m2"
\ar@{-} "n3";"m3"
\ar@{-} "n4";"m4"
\ar@{-} "n5";"m5"
\ar@{-} "n6";"m6"
\ar@{-} "n7";"m7"
\endxy}
\]

For any set $X$, let $T_\Sigma(X)$ be the set of $\Sigma$-trees on
$X$. It has been realized by Badouel~\cite{badouel89} that $T_\Sigma$
is the object part of a monad. A system of equations is then nothing
but a map
\[
f: X \to T_\Sigma(X+Y),
\]
and a solution is a map $\sol f: X \to T_\Sigma Y$ such that 
\[
\xymatrix@C+2pc{
  X 
  \ar[r]^-{\sol f}
  \ar[d]^{\qquad\qquad \commu}_f
  &
  T_\Sigma Y
  \\
  T_\Sigma(X+Y) 
  \ar[r]_-{T_\Sigma[\sol f, \eta_Y]}
  &
  T_\Sigma T_\Sigma Y
  \ar[u]_{\mu_Y}
  }
\]
where $\eta$ and $\mu$ are the unit and multiplication of the monad
$T_\Sigma$, respectively. 

It is clear that the notion of equation and solution can be formulated
for every monad $S$. However, the notion of guardedness requires one
to speak about \emph{non-variables} in $S$. This is enabled by Elgot's
notion of \emph{ideal theory}~\cite{elgot75}, which for a finitary monad on $\setc$ is
equivalent to the notion recalled in the following definition. We
assume for the rest of this subsection that $\cat A$ is a category
with finite coproducts such that coproduct injections are
monomorphic. 

\begin{definition}[\cite{aamv03}]
  An {\em ideal monad} on $\A$ is a six-tuple
  $
    (S,\eta,\mu,S',\sigma,\mu')
  $ 
  consisting of a monad $(S,\eta,\mu)$ on $\A$, a subfunctor $\sigma:S'\subto S$
  and a natural transformation $\mu':S'S\to S'$ such that 
  \begin{enumerate}[(1)]
  \item $S=S'+\Id$ with coproduct injections $\sigma$ and $\eta$, and
  \item $\mu$ restricts to $\mu'$ along $\sigma$, i.e., 
    \begin{equation}
      \label{diag:muprime}
      \vcenter{
      \xymatrix{
        S'S
        \ar[0,2]^-{\mu'}
        \ar[1,0]^{\qquad\quad\commu}_{\sigma S}
        &
        &
        S'
        \ar[1,0]^{\sigma}
        \\
        SS
        \ar[0,2]_-{\mu}
        &
        &
        S
      }}
  \end{equation}
\end{enumerate}
\end{definition}
The subfunctor $S'$ of an ideal monad $S$ allows us to formulate the
notion of a guarded equation system abstractly; this leads to the
notion of completely iterative theory of Elgot et al.~\cite{ebt78} for
which we here present the formulation with monads from Milius~\cite{m05}:
\begin{definition}
  \label{def:eqsol}
  Let $(S,\eta,\mu,S',\sigma, \mu')$ be an ideal monad on $\A$.
  \begin{enumerate}
  \item An {\em equation morphism} is a morphism
    $ 
      f:X\to S(X+Y)
    $ 
    in $\A$, where $X$ is an object
    (``of variables'') and $Y$ is an object (``of parameters'').
\item A {\em solution} of $f$ is a morphism $\sol{f}:X\to SY$ such that
  \begin{equation}\label{diag:solmon}
    \vcenter{
    \xymatrix@C+2pc{
      X
      \ar[r]^-{\sol{f}}
      \ar[d]^{\qquad\qquad\commu}_{f}
      &
      SY
      \\
      S(X+Y)
      \ar[r]_-{S[\sol{f},\eta_Y]}
      &
      SSY
      \ar[u]_{\mu_Y}
      }}
  \end{equation}
\item The equation morphism $f$ is called {\em guarded} if it factors
  through the summand $S'(X+Y) + Y$ of $S(X+Y) = S'(X+Y) + X + Y$:
  \begin{equation}\label{diag:guarded}
    \vcenter{
    \xymatrix@C+2pc{
      X
      \ar[r]^-{f}
      \ar @{.>} [rd]^{\qquad \commu}
      &
      S(X+Y)
      \\
      &
      S'(X+Y)+Y
      \ar[-1,0]_{[\sigma_{X+Y},\eta_{X+Y}\cdot\inr]}
      }}
    \end{equation}
  \item The given ideal monad is called \emph{completely iterative} if every guarded
  equation morphism has a unique solution. 
\end{enumerate}
\end{definition}

\begin{examples}
  We only briefly mention two examples of completely iterative
  monads. More can be found in literature~\cite{aamv03,m05,am06}. 
  \begin{enumerate}[(1)]
  \item
    The monad $T_\Sigma$ of $\Sigma$-trees is a completely iterative
    monad.  
  \item A more general example is given by parametrized final
    coalgebras. Let $H: \A \to \A$ be an endofunctor such that for
    every object $X$ of $\A$ a final coalgebra $TX$ for $H(-) + X$
    exists. 
    Then $T$ is the object assignment of a completely iterative
    monad; in fact, $T$ is the free completely iterative monad on
    $H$ (see~\cite{m05}). 
  \end{enumerate}
\end{examples}

We will now explain how a completely iterative monad $S$ yields a guarded fixpoint category.  Namely, let us show that the dual of
its Kleisli category $\C = (\A_S)^{op}$ is equipped with a guarded
fixpoint operator. First, since $\A_S$ has coproducts
given by the coproducts in $\A$, we see that $\C$ has products. Next we
need to obtain the endofunctor $\ibox$ on $\C$. This will be given as
the dual of an extension of the subfunctor $S': \A \to \A$ of $S$ to
the Kleisli category $\A_S$. Indeed, it is well-known that to have an
extension of $S'$ to $\A_S$ is equivalent to having a distributive law
of the functor $S'$ over the monad $S$ (see Mulry~\cite{mulry94}).

One easily verifies that the natural transformation
$\xymatrix@1{S'S \ar[r]^-{\mu'} & S' \ar[r]^-{\eta S'} & SS'}$ 
satisfies the two required laws and thus yields a distributive
law. The corresponding extension of $S'$ maps $X$ to $S'X$ and a morphism $X \to SY$ of $\A_S$ to 
\[
S'X \xrightarrow{S'f} S'SY \xrightarrow{\mu_Y'} S'Y \xrightarrow{\eta_{S'Y}} SS'Y.
\] 

Moreover, the endofunctor $\ibox^{op} = S'$ on $\A_S$ is
copointed, i.e.\ we have a natural transformation $p$ from $S'$ to $\Id: \A_S
\to \A_S$; indeed, its components at $X$ are given by the coproduct injections
$\sigma_X: S'X \to SX$, and it is not difficult to verify that this
is a natural transformation; thus, $\ibox$ is a pointed endofunctor on
$\C$. 

Now observe  that $\catC(\ibox X \times Y, X)$ is just $\A(X, S(S'X + Y))$. We are ready to describe the guarded fixpoint operator on $\C$. 

\begin{construction}
  \label{con:dagger}
  For any morphism $f: X \to S(S'X + Y)$ form the following morphism
  \[
  \ol f = (\xymatrix@1{
    X \ar[r]^-{f} & S(S'X + Y) \ar[rr]^-{S(\sigma_X + \eta_Y)} &&
    S(SX + SY) \ar[r]^-{S\can} & SS(X+Y) \ar[r]^-{\mu_{X+Y}} & S(X+Y)
  }),
  \]
  where $\can = [S\inl,S\inr]: SX + SY \to S(X+Y)$.  
  We shall verify that $\ol f$ is a guarded equation
  morphism for $S$ which allows defining $\sol f: X \to SY$ as the unique
  solution of $\ol f$. 
\end{construction}

\begin{proposition}
  For every $f$, $\sol f$ from Construction~\ref{con:dagger} is the
  unique morphism satisfying 
~\refeq{eq:fixp}.
\end{proposition}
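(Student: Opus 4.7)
The plan is to translate equation \refeq{eq:fixp} across the Kleisli duality and show that it is exactly the solution equation \refeq{diag:solmon} for the equation morphism $\ol f$ of Construction \ref{con:dagger}. Complete iterativity of $S$ will then supply existence and uniqueness in one shot.

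First, I would unpack the translation of \refeq{eq:fixp} to $\A$. In $\C=(\A_S)^{op}$, a morphism $\sol f : Y \to X$ corresponds to $\sol f : X \to SY$ in $\A$, the pairing $\langle \sol f, Y\rangle : Y \to X\times Y$ becomes the copairing $[\sol f,\eta_Y] : X+Y \to SY$ in $\A$, and $p_X\times Y : X\times Y \to \ibox X\times Y$ becomes $[S\inl\cdot\sigma_X, S\inr\cdot\eta_Y] = S\can \cdot (\sigma_X+\eta_Y) : S'X+Y \to S(X+Y)$, using naturality of $\eta$. Reading composition in $\C$ as reverse composition in $\A_S$ and expanding Kleisli composition in $\A$, equation \refeq{eq:fixp} becomes
\[
\sol f \;=\; \mu_Y \cdot S[\sol f,\eta_Y] \cdot \bigl(\mu_{X+Y} \cdot S\can \cdot S(\sigma_X+\eta_Y) \cdot f\bigr) \;=\; \mu_Y \cdot S[\sol f,\eta_Y] \cdot \ol f,
\]
which is precisely \refeq{diag:solmon} for $\ol f$. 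So solutions in the completely iterative sense and fixpoints in the sense of \refeq{eq:fixp} are literally the same morphisms.

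Next I would verify that $\ol f$ is a guarded equation morphism, i.e.\ that it factors through $[\sigma_{X+Y},\eta_{X+Y}\cdot\inr]: S'(X+Y)+Y \subto S(X+Y)$. Using $S(S'X+Y) = S'(S'X+Y) + S'X + Y$ (the ideal-monad splitting $S = S'+\Id$ applied to $S'X+Y$), it suffices to trace the composite $g \deq \mu_{X+Y}\cdot S\can\cdot S(\sigma_X+\eta_Y)$ on each summand. On the summand $S'X+Y$, naturality of $\eta$ and the unit law $\mu\cdot\eta S = \id$ collapse $g$ to $\can\cdot(\sigma_X+\eta_Y) = [\sigma_{X+Y}\cdot S'\inl,\, \eta_{X+Y}\cdot\inr]$, whose image lies in $S'(X+Y)+Y$. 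On the summand $S'(S'X+Y)$, naturality of $\sigma$ and the defining diagram \refeq{diag:muprime} of the ideal structure (which rewrites $\mu_{X+Y}\cdot\sigma_{S(X+Y)}$ as $\sigma_{X+Y}\cdot\mu'_{X+Y}$) show that $g$ factors through $\sigma_{X+Y}$, hence through $S'(X+Y)$. Combined, $\ol f$ factors through $S'(X+Y)+Y$ as required.

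Finally, since $S$ is completely iterative, the guarded equation morphism $\ol f$ has a \emph{unique} solution $\sol f : X \to SY$ in $\A$, and by the equivalence established in the first step this $\sol f$ is the unique morphism $Y \to X$ in $\C$ making \refeq{eq:fixp} commute; moreover the $\sol f$ so characterised is exactly the one from Construction \ref{con:dagger}. The main technical obstacle I anticipate is keeping the direction conventions straight and carrying out the summand-wise guardedness analysis cleanly; everything else reduces to the monad and ideal-monad laws.
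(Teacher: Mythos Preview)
Your proposal is correct and follows essentially the same approach as the paper: verify that $\ol f$ is guarded by splitting $S(S'X+Y)$ into its summands $S'(S'X+Y)+S'X+Y$ and checking each piece (using naturality of $\sigma$ and $\eta$, the unit law, and \refeq{diag:muprime}), and then observe that \refeq{eq:fixp} in $\C$ translates precisely to the solution equation \refeq{diag:solmon} for $\ol f$. The paper carries out the same two steps in the opposite order and presents the translation as a commutative diagram rather than an explicit Kleisli-composition calculation, but the content is identical. One small slip: where you write $[S\inl\cdot\sigma_X, S\inr\cdot\eta_Y] = S\can \cdot (\sigma_X+\eta_Y)$, the right-hand side should be $\can \cdot (\sigma_X+\eta_Y)$ (no $S$); your subsequent displayed equation, where $S\can$ does appear after applying $S$, is correct.
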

\begin{proof} 
We first verify 
that $\sol f$ is well-defined, i.e.~$\ol f$ is guarded with the following factor $\ol f_0$: 
  \[
  \xymatrix@C+.5pc{
    X 
    \ar[r]^-f 
    \ar[d]_{\ol f_0}
    & 
    S(S'X + Y) 
    \ar[r]^-{[\sigma, \eta]^{-1}} 
    & 
    S'(S'X + Y) + S'X + Y
    \ar[d]^{S'(\sigma + \eta) + S'X + Y}
    \\
    S'(X+Y) + Y
    &
    S'S(X+Y) + S'X + Y
    \ar[l]_(.65){\turnbox{30}{\ \ $\labelstyle [\mu', S'\inl] + Y$}}
    &
    S'(SX + SY) + S'X + Y
    \ar[l]_(.55){\turnbox{30}{\ $\labelstyle S'\can + S'X + Y$}}
  }
  \]
  Notice that both $\ol f$ and $\ol f_0$ start with $f$. Thus, in
  order to prove 
   $\ol f = [\sigma_{X+Y}, \eta_{X+Y}\cdot \inr] \cdot
    \ol f_0$  (see \refeq{diag:guarded}) we can remove $f$. Then it suffices to prove that the two remaining morphisms are equal when precomposed with the isomorphism $[\sigma, \eta]: S'(S'X + Y) + S'X + Y \to S(S'X + Y)$, i.e.~we verify
\[
\begin{array}{c}
[\sigma_{X+Y}, \eta_{X+Y} \cdot \inr] \cdot ([\mu_{X+Y}', S'\inl] + Y) \cdot (S'\can + S'X +Y) \cdot (S'(\sigma_X + \eta_Y) + S'X + Y) 
\\
= \mu_{X+Y} \cdot S\can \cdot S(\sigma_X + \eta_Y) \cdot [\sigma_{S'X + Y}, \eta_{S'X + Y}].
\end{array}
\]
We consider the two coproduct components separately and compute for the left-hand component
\[
\begin{array}{rcl}
  \sigma_{X+Y} \cdot \mu_{X+Y}' \cdot S'\can \cdot S'(\sigma_X + \eta_Y) 
  & \stackrel{\refeq{diag:muprime}}{=} & 
  \mu_{X+Y} \cdot \sigma_{S(X+Y)} \cdot S'\can \cdot S'(\sigma_X + \eta_Y) 
  \\
  & = &
  \mu_{X+Y} \cdot S\can \cdot S(\sigma_X + \eta_Y) \cdot \sigma_{S'X + Y},
\end{array}
\] 
where the second step uses naturality of $\sigma$ twice; for the right-hand component we obtain
\[
\begin{array}{rclp{4cm}}
  \mu_{X+Y} \cdot S\can \cdot S(\sigma_X + \eta_Y) \cdot \eta_{S'X + Y}
  & = &
  \mu_{X+Y} \cdot \eta_{S(X+Y)} \cdot \can \cdot (\sigma_X + \eta_Y) & (by nat.~of $\eta$) 
  \\
  & = & 
  \underbrace{[S\inl, S\inr]}_{= \can} \cdot (\sigma_X + \eta_Y) & (since $\mu \cdot \eta S = \id$)
  \\
  & = & 
  [S\inl \cdot \sigma_X, S\inr \cdot \eta_Y] 
  \\
  & = & 
  [\sigma_{X+Y} \cdot S'\inl, \eta_{X+Y} \cdot \inr]
  & (by nat.~of $\sigma$ and $\eta$)
  \\
  & = & 
  [\sigma_{X+Y}, \eta_{X+Y} \cdot \inr] \cdot (S'\inl + Y).
\end{array}
\]
This finishes the proof of guardedness of $\ol f$. 

Now observe that~\refeq{eq:fixp} can equivalently be expressed in $\A$ (using $\C =  (\A_S)^{op}$) as the outside square of the following diagram
\[
\xymatrix@C+1pc@R-.5pc{
  X
  \ar[rrr]^-{\sol f}
  \ar[dd]_f 
  \ar[rrrdd]_{\ol f}
  \ar@{}[rdd]_(.5)\commu
  &&&
  SY
  \ar@{}[ldd]_{\text{cf.~\refeq{diag:solmon}}}
  \\
  &&&
  SSY
  \ar[u]_{\mu}
  \\
  S(S'X + Y)
  \ar[r]_-{S(\sigma + \eta)}
  &
  S(SX + SY)
  \ar[r]_-{S\can}
  &
  SS(X+Y)
  \ar[r]_{\mu_{X+Y}}
  &
  S(X+Y)
  \ar[u]_{S[\sol f, \eta_Y]}
}
\]
This outside commutes iff the upper right-hand triangle commutes iff $\sol f$ is a solution of $\ol f$. Since the latter exists uniquely we see that $\sol f$ is the desired unique morphism satisfying~\refeq{eq:fixp}. 
\end{proof}
%
%
\begin{examples} \label{ex:uniq}
  Several items in Examples~\ref{ex:cats} are unique guarded fixpoint categories; this holds for Examples~\ref{ex:cats}.(\ref{ex:constant})--(\ref{ex:presheaves}), and also for the example of completely iterative monads in Section~\ref{sec:cim}. However, Example~\ref{ex:cats}.(\ref{ex:cpo}) is not a unique guarded fixpoint category: for let $X = \{0,1\}$ be the two-chain, $Y = 1$ the one element cpo and $f: X_\bot = X_\bot \times Y \to X$ be the map with $f(0) = f(\bot) = 0$ and $f(1) = 1$. Then both $0: 1 \to X$ and $1: 1 \to X$ make~\refeq{eq:fixp} commutative. 
\end{examples}

\section{Properties of Guarded Fixpoint Operators}
\label{sec:prop}

In this section, we study properties of
guarded fixpoint operators. Except for uniformity, these properties are
purely equational. They are generalizing analogous 
properties of Bloom and \'Esik's iteration theories~\cite{be93}; more precisely, they would collapse to the original, unguarded counterparts when $\ibox$ is instantiated to the identity endofunctor as in Example~\ref{ex:cats}.\refeq{ex:identity}. Just like these original counterparts, they are all satisfied whenever the operator assigns a unique fixpoint (Theorem \ref{thm:unique}), but this  is not a necessary condition for them to hold, as witnessed by  Theorems \ref{th:proplet} and \ref{th:cpoconway} concerning  Examples~\ref{ex:cats}.\refeq{ex:cpo}--\refeq{ex:letccc}. However, the standard notion of dinaturality turns out to behave surprisingly enough in the guarded setting to merit a separate Subsection \ref{sec:dinat}. As a prerequisite, we also discuss the so-called Beki\v{c} identity in Section \ref{sec:bekic}.

\begin{definition}
  \label{def:prop}
   We
  define the following possible properties of  a guarded fixpoint
  category $(\catC,\ibox, \dagger)$:

  \begin{enumerate}[{\bf (1)}]
  \item {\bf Fixpoint Identity ($\dagger$).} For every $f: \ibox X \times Y \to X$,~\refeq{eq:fixp} holds. This is built into the definition of guarded fixpoint categories and only mentioned here again for the sake of completeness.
  \item {\bf Parameter Identity (P).} For every $f: \ibox X \times Y \to X$ and
    every $h: Z \to Y$,
    \[
    \xymatrix@1{
      Z \ar[r]^-h & Y \ar[r]^{\sol f} & X
    }
    =
    (\xymatrix@1@C+1pc{
      \ibox X \times Z
      \ar[r]^-{\ibox X \times h}
      &
      \ibox X \times Y
      \ar[r]^-f
      &
      X
    })^\dagger.
    \]
      \item {\bf (Simplified) Composition Identity (C).} Given $f:\ibox X \times Y
    \to Z$ and $g: Z \to X$,
    \[
    (\xymatrix@1{
      \ibox X \times Y \ar[r]^-{f} & Z \ar[r]^-g & X
    })^\dagger
    =
    (\xymatrix@1{
      Y \ar[rr]^-{(f \cdot (\ibox g \times Y))^\dagger} && Z \ar[r]^-{g} & X
      }).
    \]
    
  \item {\bf Double Dagger Identity ($\dagger\dagger$).} For every $f: \ibox X \times \ibox X
    \times Y \to X$,
    \[
    (\xymatrix@1{
      Y \ar[r]^{f^{\dagger\dagger}} & X
    }) 
    =
    (\xymatrix@1{
      \ibox X \times Y \ar[r]^-{\Delta \times Y}
      &
      \ibox X \times \ibox X \times Y \ar[r]^-f & X
    })^\dagger.
    \]

  \item {\bf Uniformity (U).} Given $f: \ibox X \times Y \to X$, $g: \ibox X'
    \times Y \to X'$ and $h: X\to X'$,
    \[
    \vcenter{
      \xymatrix{
        \ibox X \times Y \ar[r]^-f \ar[d]^{\qquad\; \commu}_{\ibox h \times Y} & X \ar[d]^h \\
        \ibox X' \times Y \ar[r]_-g & X'
      }}
    \qquad
    \implies
    \qquad
    \vcenter{
      \xymatrix@R-1pc{
        & X \ar[dd]^h_{\commu \quad} \\
        Y
        \ar[ru]^-{\sol f}
        \ar[rd]_-{\sol g}
        \\
        & X'
      }}
    \]
  \end{enumerate}

  We call the first four properties (1)--(4) the \emph{Conway} axioms. 
\end{definition}

Notice that the Conway axioms are equational properties while (5) is
quasiequational, i.e.,\ an implication between equations.

Next we shall show that in the presence of certain of the above
properties the natural transformation $p: \Id \to \ibox$ is a derived
structure. Let $(\catC,\ibox)$ be equipped with an operator $\dagger$ 
\emph{not necessarily satisfying~\refeq{eq:fixp}}. For every object $X$ of $\catC$ 
define $q_X: X \to \ibox X$ as follows: consider
\[
f_X = (\xymatrix@1@C+1.5pc{
  \ibox(\ibox X \times X) \times X \ar[r]^-{\ibox \prr \times X} &
  \ibox X \times X
})
\]
and form
\hfill
$
q_X = (\xymatrix@1{
  X \ar[r]^-{f_X^\dagger} & \ibox X \times X \ar[r]^-{\prl} & \ibox X
}).
$
\hspace*{\fill}

\begin{lemma} \label{lem:pointed}
  \ 
  \begin{enumerate}
  \item If $\dagger$ satisfies the parameter identity and uniformity,
    then $q: \Id \to \ibox$ is a natural transformation.
  \item If $\dagger$ satisfies the fixpoint identity, then $q_X =
    p_X$ for all $X$. 
  \end{enumerate}
\end{lemma}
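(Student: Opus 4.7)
\medskip

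\noindent\textbf{Proof plan.}

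My plan is to handle (2) first, since it is a direct calculation from the fixpoint identity, and then use (2)-style reasoning together with parameter identity and uniformity to handle (1).

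For part (2), I will instantiate the fixpoint identity \refeq{eq:fixp} on $f_X = \ibox\prr \times X$, viewed as an arrow $\ibox(\ibox X\times X)\times X\to \ibox X\times X$, so that $f_X^\dagger:X\to \ibox X\times X$. Writing $f_X^\dagger=\langle q_X, r_X\rangle$ with $r_X\deq\prr\cdot f_X^\dagger$, the identity becomes
\[
\langle q_X,r_X\rangle \;=\; (\ibox\prr\times X)\cdot (p_{\ibox X\times X}\times X)\cdot \langle\langle q_X,r_X\rangle, X\rangle.
\]
Simplifying the right-hand side, using naturality of $p$ in the form $\ibox\prr\cdot p_{\ibox X\times X}=p_X\cdot\prr$, collapses it to $\langle p_X\cdot r_X,\ \id_X\rangle$. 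Reading off the two product components then gives $r_X=\id_X$ and $q_X=p_X\cdot r_X=p_X$, as required.

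For part (1), let $h:X\to Y$ be arbitrary. The idea is to apply uniformity to the outer objects $\ibox X\times X$ and $\ibox Y\times Y$, with parameter object $X$, using the mediating morphism $k\deq \ibox h\times h$. On the $X$-side I keep $f_X=\ibox\prr\times X$, while on the $Y$-side I use the auxiliary map $g\deq\ibox\prr\times h:\ibox(\ibox Y\times Y)\times X\to \ibox Y\times Y$. A short computation shows
\[
(\ibox h\times h)\cdot(\ibox\prr\times X)\;=\;\ibox(h\cdot\prr)\times h\;=\;(\ibox\prr\times h)\cdot(\ibox(\ibox h\times h)\times X),
\]
where the left-hand equality is immediate and the right-hand one uses $\prr\cdot(\ibox h\times h)=h\cdot\prr$ followed by functoriality of $\ibox$. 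This is exactly the hypothesis of uniformity for $f_X$, $g$, and $k$, yielding $(\ibox h\times h)\cdot f_X^\dagger = g^\dagger$. Since $g=f_Y\cdot(\ibox(\ibox Y\times Y)\times h)$, the parameter identity identifies $g^\dagger$ with $f_Y^\dagger\cdot h$. Postcomposing the resulting equation $(\ibox h\times h)\cdot f_X^\dagger=f_Y^\dagger\cdot h$ with $\prl$ gives $\ibox h\cdot q_X = q_Y\cdot h$, which is naturality.

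The only step requiring any real thought is guessing the right $k$ in the uniformity square, namely $\ibox h\times h$ rather than something like $\ibox h\times X$ or $\ibox h\times Y$; the former is forced by needing $\prr\cdot k = h\cdot\prr$ so that $\ibox\prr$ can be intertwined with $\ibox h$ on one side and with $h$ on the other. Once that choice is made, everything else is mechanical verification, and no property of $\dagger$ beyond those listed in the hypothesis is invoked.
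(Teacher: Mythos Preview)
Your proof is correct and follows essentially the same approach as the paper's: for part~(2) both compute $\prr\cdot f_X^\dagger=\id_X$ from the fixpoint identity and then read off $q_X=p_X$, and for part~(1) both combine uniformity (with mediating morphism $\ibox h\times h$) and the parameter identity to obtain $(\ibox h\times h)\cdot f_X^\dagger=f_Y^\dagger\cdot h$, then project to the left component. The paper presents the same argument via commuting diagrams rather than equational chains, but the content is identical.
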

\begin{proof}
1.~For every morphism $h: X \to Y$ we have the following diagram:
\[
\xymatrix@C+2pc{
  X
  \ar[rr]^-{\sol f_X}
  \ar[d]_h
  \ar[rrd]|{\sol{(f_Y \cdot (\ibox (\ibox Y \times Y) \times h))}}
  &&
  \ibox X \times X
  \ar[r]^-{\pi_\ell}
  \ar[d]^{\ibox h \times h}
  \ar@{}[rd]|{\quad\commu}
  &
  \ibox X
  \ar[d]^{\ibox h}
  \ar@{<-} `u[l] `[lll]_-{q_X}^-\commu [lll]
  \\
  Y
  \ar[rr]_-{\sol f_Y}
  &&
  \ibox Y \times Y
  \ar[r]_-{\pi_\ell}
  &
  \ibox Y
  \ar@{<-} `d[l] `[lll]^-{q_Y}_-\commu [lll]
}
\] 
Of the left-hand square, the lower left-hand triangle commutes by the parameter identity and the upper right-hand triangle by uniformity since we have
\[
\xymatrix@C+2pc{
\ibox (\ibox X \times X) \times X
\ar[d]_{\ibox(\ibox h \times h) \times X}
\ar[rd]|{\ibox( \ibox h \times h) \times h}
\ar[rr]^-{f_X =}_-{\ibox \pi_r \times X}
&&
\ibox X \times X
\ar[d]^-{\ibox h \times h}
\\
\ibox (\ibox Y \times Y) \times X
\ar[r]_-{\ibox (\ibox Y \times Y) \times h}
\ar@{}[ru]^(.25)\commu
&
\ibox (\ibox Y \times Y) \times Y
\ar[r]_-{f_Y =}^-{\ibox \pi_r \times Y}
\ar@{}[u]|\commu
&
\ibox Y \times Y
}
\]

2.~Notice first that from the fixpoint identity for $\sol f_X$ we have: 
\[
\begin{array}{rcl}
  \pi_r \cdot \sol f_X & \stackrel{(\dagger)}{=} & \pi_r \cdot f_X \cdot (p_{\ibox X \times X} \times X) \cdot \langle \sol f_X, X\rangle \\
  & = & \pi_r \cdot (\ibox \pi_r \times X) \cdot (p_{\ibox X \times X} \times X) \cdot \langle \sol f_X, X\rangle \\
  & = & \id_X
\end{array}
\]
Then we consider the following diagram: 
\[
\xymatrix@-1pc{
  X
  \ar[rr]^-{\sol f_X}
  \ar[dd]_{\langle \sol f_X, X\rangle}
  \ar[rd]^{\Delta}
  &
  &
  \ibox X \times X
  \\
  &
  X \times X
  \ar[ru]_-{p \times X}
  \ar@{}[rd]^(.4)\commu_(.4){\text{(nat.~of $p$)}}
  \ar@{}[l]|(.65)\commu
  \\
  \ibox X \times X \times X
  \ar[ru]_(.6){\pi_r \times X}
  \ar[rr]_-{p \times X}
  &&
  \ibox (\ibox X \times X)\times X
  \ar[uu]_{\ibox \pi_r \times X}
}
\]
Since its outside commutes by the fixpoint identity so does the upper inner triangle. It follows that we have
$
q_X = \pi_\ell \cdot \sol f_X = \pi_\ell \cdot (p_X \times X) \cdot \Delta = p_X.
$
\end{proof}

\begin{definition}
   A guarded fixpoint category $(\catC,\ibox, \dagger)$  satisfying
  the Conway axioms (i.e.\ fixpoint, parameter, composition and double dagger identities) is called a \emph{guarded Conway category}.

  If in addition 
  uniformity is satisfied, we call $(\catC,\ibox,\dagger)$ a \emph{uniform guarded Conway
  category}. 
\end{definition}
%
  Note that an example of a guarded fixpoint category that is not a guarded Conway category and an example of a guarded Conway category that is not a uniform guarded Conway category exist already in the realm of iteration theories of Bloom and \'Esik (cf.~Examples~\ref{ex:cats}.\ref{ex:identity}). See \'Esik~\cite{e88}, Model~2 and Section~3, respectively. 


\begin{theorem}
  \label{thm:unique}
  A unique guarded fixpoint category $(\catC,\ibox)$ is a
  uniform guarded Conway category. 
\end{theorem}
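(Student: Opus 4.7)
The plan is to verify each of the four Conway axioms and uniformity separately using the same strategic idea throughout: in a unique guarded fixpoint category, it suffices to show that the two sides of a desired equation both satisfy \emph{the same} guarded fixpoint equation~\refeq{eq:fixp}, since uniqueness then forces them to coincide. Concretely, for each claimed equality $a = b$ of the form $\cdots = \sol g$, I will take the non-dagger side and verify that it witnesses the fixpoint equation for $g$.

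For \textbf{(P)}, I would check that $\sol f \cdot h$ is a solution of the fixpoint equation for $f \cdot (\ibox X \times h)$; the computation uses only naturality of $p$ (to commute $p_X \times h$ past itself) and the fixpoint identity for $f$. For \textbf{(C)}, the candidate for $\sol{(g \cdot f)}$ is $g \cdot \sol{(f \cdot (\ibox g \times Y))}$; writing $m$ for the inner dagger, the fixpoint identity for $m$ combined with naturality of $p$ at $g$ (to rewrite $\ibox g \cdot p_Z = p_X \cdot g$) yields $m = f \cdot (p_X \times Y) \cdot \langle g m, Y\rangle$, whence $g m$ is manifestly a fixpoint of $g\cdot f$. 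For \textbf{(U)}, under the hypothesis $h \cdot f = g \cdot (\ibox h \times Y)$, one unfolds $h \cdot \sol f$ through the fixpoint identity for $f$, substitutes the hypothesis, and again appeals to naturality of $p$ at $h$ to reach $g \cdot (p_{X'} \times Y) \cdot \langle h \cdot \sol f, Y\rangle$, exhibiting $h \cdot \sol f$ as a solution for $g$.

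The only axiom requiring slightly more care is \textbf{($\dagger\dagger$)}. Here the candidate for $(f \cdot (\Delta \times Y))^\dagger$ is the iterated dagger $\sol{\sol f}$. The key step is a \emph{double} unfolding: one first applies the fixpoint identity for $\sol f$ with parameter $Y$, and then inside the first product component one recognises (again by the fixpoint identity for $\sol f$) a copy of $\sol{\sol f}$ itself. This collapses $\sol{\sol f}$ to $f \cdot \langle p_X \cdot \sol{\sol f},\, p_X \cdot \sol{\sol f},\, Y\rangle$, which is precisely $f \cdot (\Delta \times Y) \cdot (p_X \times Y) \cdot \langle \sol{\sol f}, Y\rangle$; uniqueness then concludes.

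I expect the main (minor) obstacle to be bookkeeping in \textbf{($\dagger\dagger$)}, specifically keeping track of the two copies of $\ibox X$ in the domain of $f: \ibox X \times \ibox X \times Y \to X$ and making sure that the outer dagger operates on the first factor while the inner dagger operates on what has become the first factor after the outer unfolding. Once this is set up correctly, all four Conway identities and uniformity fall out uniformly from the ``same-fixpoint-equation + uniqueness'' pattern, with no further use of structure beyond naturality of $p$ and the bare fixpoint identity \refeq{eq:fixp}.
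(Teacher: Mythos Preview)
Your proposal is correct and follows essentially the same approach as the paper: for each axiom you show that one side satisfies the fixpoint equation~\refeq{eq:fixp} of the morphism whose dagger is the other side, and then invoke uniqueness. The paper's proof of ($\dagger\dagger$) likewise unfolds $f^{\dagger\dagger}$ via the fixpoint identity for $\sol f$, then via the fixpoint identity for $f$, and recognises the first component as $f^{\dagger\dagger}$ again (the paper phrases this as ``part~($\ast$) commutes only after precomposition with $\langle f^{\dagger\dagger}, Y\rangle$''); your description of the double unfolding is slightly compressed---you need \emph{both} the fixpoint identity for $f$ and the one for $\sol f$---but the argument is the same.
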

\begin{proof}%
  We shall prove that $\dagger$ satisfies the Conway axioms and uniformity.

  (1)~The fixpoint identity for $\dagger$ is satisfied by definition of a unique guarded fixpoint category.

  (2)~Parameter identity. Given $f: \ibox X \times Y \to X$ and $h: Z \to Y$ we have 
  \[
  \xymatrix@C+1pc{
    Z
    \ar[dd]_{\langle \sol f \cdot h, Z\rangle}
    \ar[r]^-h
    \ar@{}[rd]_(.6){\commu\ \ }
    &
    Y
    \ar[d]_{\langle \sol f, Y\rangle}
    \ar[r]^-{\sol f}
    &
    X
    \ar@{}[ld]|{\commu\ \text{by $(\dagger)$}}
    \\
    &
    X \times Y
    \ar[r]_-{p_X \times Y}
    \ar@{}[rd]_(.4){\commu}
    &
    \ibox X \times Y
    \ar[u]_-f 
    \\
    X \times Z
    \ar[ru]_{X \times h}
    \ar[rr]_-{p_X \times Z}
    &&
    \ibox X \times Z
    \ar[u]_{\ibox X \times h}
    }
  \]
  Thus, $\sol f \cdot h$ fits square~\refeq{eq:fixp} for $f \cdot (\ibox X \times h)$, and thus we have the desired equation by the uniqueness of $\sol{(f \cdot (\ibox X \times h))}$. 

(3)~Composition Identity. Let $f$ and $g$ be as in the definition of the identity. Then we have
\[
\xymatrix@C+1pc{
  Y
  \ar[r]^-{\sol{(f \cdot (\ibox g \times Y))}}
  \ar[dd]_{\langle \sol{(f \cdot (\ibox g \times Y))}, Y \rangle}
  &
  Z
  \ar[r]^-g
  &
  X
  \\
  &
  \ibox X \times Y
  \ar[u]^f
  \ar[r]_f
  \ar@{}[ru]|\commu
  &
  Z 
  \ar[u]_g
  \\
  Z \times Y
  \ar[r]_-{p_Z \times Y}
  \ar[d]_{g \times Y}
  \ar@{}[ruu]|{\commu\ \text{by ($\dagger$)}\quad}
  &
  \ibox Z \times Y
  \ar[u]^{\ibox g \times Y}
  \ar[rd]_-{\ibox g \times Y}
  \ar@{}[r]^(.6)\commu
  \ar@{}[d]_{\commu\ \text{by nat.~of $p$}}
  &
  \\
  X \times Y
  \ar[rr]_-{p_X \times Y}
  &&
  \ibox X \times Y
  \ar[uu]_f
}
\]
Since the outside commutes we obtain the desired equation by the unicity of $\sol{(g\cdot f)}$. 

(4)~Double Dagger Identity. Let $f: \ibox X \times \ibox X \times Y \to X$. Then we have
\[
\xymatrix@C+2pc{
  Y
  \ar[rr]^-{f^{\dagger\dagger}}
  \ar[ddd]_{\langle f^{\dagger\dagger}, Y\rangle}
  \ar@{}[rd]_{\commu\ \text{by ($\dagger$)}}
  &&
  X
  \\
  &
  \ibox X \times Y
  \ar[ru]^-{\sol f}
  \ar[d]^{\langle \sol f, \ibox X \times Y\rangle}
  \ar@{}[r]_(.6){\commu\ \text{by ($\dagger$)}}
  &
  \\
  &
  X \times \ibox X \times Y
  \ar[r]_-{p_X \times \ibox X \times Y}
  \ar@{}[rd]|\commu
  \ar@{}[u]^(.4){\langle (\ast), \commu, \commu \rangle}
  &
  \ibox X \times \ibox X \times Y
  \ar[uu]_f
  \\
  X \times Y
  \ar@/^1pc/[ruu]^-{p_X \times Y}
  \ar[ru]_(.6)*+{\labelstyle ((X \times p_X) \cdot \Delta) \times Y}
  \ar[rr]_-{p_X \times Y}
  &&
  \ibox X \times Y
  \ar[u]_{\Delta \times Y}
}
\]
We do not claim that part~($\ast$) commutes, but it commutes when precomposed with $\langle f^{\dagger\dagger} , Y\rangle$. This is because the lower passage yields simply $f^{\dagger\dagger}$ and the upper passage yields $\sol f \cdot (p_X \times Y) \cdot \langle f^{\dagger\dagger} , Y\rangle$, which is equal to $f^{\dagger\dagger}$ by the fixpoint identity. We conclude that the outside of the diagram commutes and so we obtain the desired equality by the unicity of $\sol{(f \cdot (\Delta \times Y))}$. 

(5)~Uniformity. Let $f$, $g$ and $h$ be as in the definition of uniformity. Then we have
\[
\xymatrix@C+1pc{
  Y
  \ar[rr]^-{\sol f}
  \ar[dd]_{\langle h \cdot \sol f, Y\rangle}
  \ar[rd]^{\langle\sol f, Y\rangle}
  &&
  X
  \ar[r]^-h
  \ar@{}[ld]_(.6){\commu\ \text{by ($\dagger$)}}
  &
  X'
  \\
  &
  X \times Y
  \ar[r]_-{p_X \times Y}
  \ar[ld]^{h \times Y}
  \ar@{}[rd]|{\commu\ \text{by nat.~of $p$}}
  \ar@{}[l]|(.65)\commu
  &
  \ibox X \times Y
  \ar[u]_f
  \ar[rd]^-{\ibox h \times Y}
  \ar@{}[ru]|\commu
  \\
  X' \times Y
  \ar[rrr]_-{p_{X'} \times Y}
  &&&
  \ibox X' \times Y
  \ar[uu]_g
}
\]
Since the outside commutes we obtain $h \cdot \sol f = \sol g$ by unicity of $\sol g$. 
\end{proof}
\begin{example}
  Coming back to Example~\ref{ex:uniq} we see that the unique $\dagger$ in 
  Examples~\ref{ex:cats}.(\ref{ex:constant})--(\ref{ex:presheaves} and the one for the example of completely iterative monads in Section~\ref{sec:cim} satisfy all the properties in Definition~\ref{def:prop}. But Example~\ref{ex:uniq} also
entails that for let-ccc's with fixpoint objects of Example~\ref{ex:cats}.\refeq{ex:letccc} one cannot hope for uniqueness. So how about all the properties of Definition~\ref{def:prop}?
\end{example}
\begin{theorem}  \label{th:proplet}
    Let
    $
    (\catC, \mS,\eta,\mu, s, \mS\fxob \stackrel{\siin}{\longrightarrow} \fxob, 1 \stackrel{\om}{\longrightarrow} \fxob )
    $
    be a \emph{let-ccc with a fixpoint object}, let $\ibox \deq  \mS$ and for any $f: Y \times \mS X \to X$ define $\sol f \deq \iter{f} \cdot (Y \times \om)$, as in Example \ref{ex:cats}.\refeq{ex:letccc}. Then $\ibox$ satisfies all properties introduced in Definition~\ref{def:prop}, except, possibly, the double dagger identity ($\dagger\dagger$).
\end{theorem}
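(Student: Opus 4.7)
The plan is to lift each identity from an analogous statement about the unparametrised iteration operator $\iter{f}$ of \refeq{eq:defsiin}, and then transfer it to $\dagger$ by precomposing with $Y \times \om$, using the equalizer identity $\siin \cdot \eta_\fxob \cdot \om = \om$ built into the definition of $\om$. Because $\iter{f}$ is uniquely determined by \refeq{eq:defsiin}, each identity at the $\iter$-level can be established by the standard recipe: exhibit a candidate morphism of the right type, check that it fits the universal property of some $\iter{(-)}$, and invoke uniqueness. The passage from the $\iter$-level fact to the $\sol$-level fact is then a one-line calculation.

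For the fixpoint identity, I would first precompose \refeq{eq:defsiin} with $Y \times \eta_\fxob$ and simplify the Kleisli extension via law~(a) to obtain $\iter{f} \cdot (Y \times \siin \cdot \eta_\fxob) = f \cdot \langle \prl, \eta_X \cdot \iter{f}\rangle$. Precomposing with $Y \times \om$ and recalling $p_X = \eta_X$ then yields $\sol{f} = f \cdot (Y \times p_X) \cdot \langle Y, \sol{f}\rangle$, which is \refeq{eq:fixp} in the flipped-factor convention of Example~\ref{ex:cats}.\refeq{ex:letccc}.

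For the parameter identity, given $h: Z \to Y$, I would verify that $\iter{f} \cdot (h \times \fxob)$ fits \refeq{eq:defsiin} in place of $\iter{(f \cdot (h \times \mS X))}$ using Kleisli law~(c) and naturality; uniqueness then forces the two to coincide, and precomposing with $Z \times \om$ gives (P). The composition and uniformity identities follow the same template: one shows $g \cdot \iter{(f \cdot (Y \times \mS g))}$ fits \refeq{eq:defsiin} for $g \cdot f$, and that under the hypothesis $h \cdot f = g \cdot (Y \times \mS h)$ the morphism $h \cdot \iter{f}$ fits \refeq{eq:defsiin} for $g$. Both arguments hinge on the auxiliary identity $\mS g \cdot \klei{(\eta_Z \cdot k)} = \klei{(\eta_X \cdot g \cdot k)}$, derived from laws~(a) and~(c), which lets one commute the functorial action of $\mS$ past a Kleisli extension.

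The main obstacle, and the reason why the double dagger identity is not claimed, is that its natural lift to $\iter$ would ask one to iterate the iteration operator itself --- effectively demanding a parametrised initial-algebra structure involving $\fxob \times \fxob$ of a mixed sort not supplied by the axioms of a let-ccc with a fixpoint object. Without uniqueness of $\sol{}$ available in general, the standard unguarded derivation of $(\dagger\dagger)$ from the remaining Conway axioms cannot be invoked either, so the status of $(\dagger\dagger)$ in this class of examples seems genuinely delicate and must be left open.
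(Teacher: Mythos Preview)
Your proposal is correct and follows essentially the same approach as the paper: lift each identity to the level of $\iter{(-)}$ via the uniqueness in~\refeq{eq:defsiin}, then precompose with $Y \times \om$ using the equalizer property of $\om$. One small slip: the auxiliary identity $\mS g \cdot \klei{(\eta_Z \cdot k)} = \klei{(\eta_X \cdot g \cdot k)}$ is derived in the paper from laws~(a) and~(d) (via the Kleisli composition law~\refeq{eq:kleisli}), not from~(a) and~(c).
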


We will see in Theorem \ref{th:cpoconway} below that for the special case where $\catC = \cpo$ and $T = (-)_\bot$ (see Example~\ref{ex:cats}.(\ref{ex:cpo})) more can be shown.
%
%
\newcommand{\propfont}[1]{\textit{#1}}
\begin{proof}
  Recall first the 4 axioms (a)--(d) of the operation $\klei{(-)}$ from Example~\ref{ex:cats}.\refeq{ex:letccc}. Further observe that the action of $T$ on a morphism $g: Z = 1 \times Z \to X$ is defined by
  \begin{equation}\label{eq:T}
    Tg = \klei{(\eta_X \cdot g)}: TZ = 1 \times TZ \to  TX.
  \end{equation}
  Notice that axioms~(a) and (d) imply for $f: B \to TC$ and $g: C \to
  TD$ (i.e.~the special case for $A = 1$) the usual extension laws  of Manes
  \cite[Definition~2.13]{manes03}:
  \begin{equation}\label{eq:kleisli}
    \klei f \cdot \eta_B = f
    \qquad\text{and}\qquad
    \klei{(\klei g \cdot f)} = \klei g \cdot \klei f.
  \end{equation}

  Finally, recall that for let-ccc's we write $\mS = \ibox$ on the right-hand side of product components. 
  
  ~\propfont{The fixpoint identity} ($\dagger$). Take $f: Y \times \mS X \to X$. Then:
  \[
  \vcenter{
    \xymatrix{
      Y = Y \times 1 
      \ar[rr]^-{Y \times \om}
      \ar[rd]_(.4){Y \times \om}
      \ar[dd]_{\langle \prl, \sol f \rangle}
      \ar@{}[rrd]^(0.6){\commu \text{ by def. of } \om}
      \ar@{}[ddr]_(0.4){\commu}
      & &
      Y \times \fxob 
      \ar[r]^{\iter{f}}
      &
      X       
      \ar@{<-} `u[l] `[lll]_{f^\dagger}^\commu [lll]
      \\
      &
      Y \times \fxob 
      \ar[r]_{Y \times \eta}
      \ar[ld]^{\langle \prl, \iter{f}\rangle}
      \ar@{}[rd]|{\commu \text{ by axiom (a)}}
      &
      Y \times \mS\fxob 
      \ar[rd]|{\labelstyle\langle \prl,  \klei{(\eta \cdot \iter{f})}\rangle}
      \ar[u]_{Y \times \siin}
      \ar@{}[r]^(0.65){\commu \text{ by \refeq{eq:defsiin}}}
      &
      \\
      Y \times X \ar[rrr]_{Y \times \eta}
      &
      &
      &
      Y \times \mS X \ar[uu]_{f}
    }
  }
  \]
  \propfont{The parameter identity} (P). Take $f: Y \times \mS X \to X$ and $h: Z \to Y$, and define $g \deq f \cdot (h \times \mS X)$. Then
  \[
  \vcenter{
    \xymatrixcolsep{4pc}\xymatrix@C+1pc{
      Z = Z \times 1 
      \ar[rd]_{Z \times \om}
      \ar[ddd]_{h = h \times 1}
      \ar`u[r]`[rrrrdd]^(.7){(f \cdot (h \times T X))^{\dagger}} [rrrrdd]
      \ar[rr]^{Z \times \om}
      &
      &
      Z \times \fxob
      \ar@/^2pc/[ddrr]^{\iter{g}}
      \ar@{}[rr]|(0.7){\commu}
      &
      &
      \\
      &
      Z \times \fxob
      \ar[r]^{Z \times \eta}
      \ar[rdd]_{h \times \fxob}
      \ar@{}[ddl]|(.4){\commu}
      \ar@{}[u]_{\commu \text{ by def. of } \om }
      &
      Z \times \mS \fxob
      \ar@{}[rd]|{\langle\commu, (\ast\ast)\rangle}
      \ar[d]|{h \times \mS\fxob}
      \ar[r]^{\langle \prl, \klei{(\eta_X \cdot \iter{g})}\rangle}
      \ar[u]^{Z \times \siin}_{\qquad \commu \text{ by \refeq{eq:defsiin} }}
      \ar@{}[ld]^(.4){\langle \commu, (\ast) \rangle }
      &
      Z \times \mS X
      \ar[d]|{h \times \mS X}
      \ar[dr]^{g}
      &
      \\
      &
      &
      Y \times \mS \fxob
      \ar[r]_{\langle \prl, \klei{(\eta_X \cdot \iter{f})}\rangle}
      \ar[d]|{Y \times \siin}^(.6){\qquad \commu \text{ by \refeq{eq:defsiin} }}
      &
      Y \times \mS X
      \ar[r]_{f}
      \ar@{}[ru]|(.35)\commu
      &
      X
      \\
      Y = Y \times 1
      \ar[rr]^{Y \times \om} 
      \ar`d[r]`[rrrru]_(.7){\sol f} [rrrru]
      &
      &
      Y \times \fxob
      \ar@/_1pc/[rru]_{\iter{f}} 
      \ar@{}[rr]|(0.7){\commu}
      &
      & 
    }
  }
  \]
  The part $(\ast)$ by itself does not need to commute, but it does when precomposed with $\om$. The task reduces then to showing $(\ast\ast)$, viz.~the equation
\[
 \klei{(\eta_X \cdot \iter{f})} \cdot (h \times \mS \fxob) =
 \klei{(\eta_X \cdot \iter{f \cdot (h \times \mS X)})}.
\]
By axiom~(c), this reduces to showing
\[
 \klei{(\eta_X \cdot \iter{f} \cdot (h \times \fxob))} =
 \klei{(\eta_X \cdot \iter{f \cdot (h \times \mS X)})}.
\]
To show this it is sufficient to prove
\[
\iter{f} \cdot (h \times \fxob) = \iter{f \cdot (h \times \mS X)}.
\]
A proof of this relies on $\iter{-}$ being the unique morphism
satisfying a suitable instance of \refeq{eq:defsiin}:
\[
\vcenter{
  \xymatrix@C+3pc{
    Z \times \fxob 
    \ar[r]^-{h \times \fxob}
    \ar@{}[rd]_\commu
    &
    Y \times \fxob
    \ar[r]^-{\iter f}
    \ar@{}[rd]|{\commu \text{ by \refeq{eq:defsiin}}}
    &
    X
    \\
    &
    Y \times \mS \fxob
    \ar[u]^{Y \times \sigma}
    \ar[r]_-{\langle \prl, \klei{(\eta \cdot \iter{f})} \rangle}
    &
    Y \times \mS X
    \ar[u]_f
    \\
    Z \times \mS \fxob
    \ar[uu]^{Z\times \siin}
    \ar[ru]^-{h \times \mS \fxob}
    \ar[rr]_-{\langle \prl, \klei{(\eta \cdot \iter{f} \cdot (h \times \fxob))}\rangle}
    &
    \ar@{}[u]_(.4){\langle \commu,(\star)\rangle}
    &
    Z \times \mS X
    \ar[u]_{h \times \mS X}    
    }
}
\]
and we get $(\star)$ by axiom~(c) again. 

\propfont{Composition Identity} (C). Assume $f: Y \times \mS X 
    \to Z$ and $g: Z \to X$.  We want to show:
\[
\iter{g \cdot f} \cdot (Y \times \om) 
= 
g \cdot \iter{f \cdot (Y \times \mS g)} \cdot (Y \times \om).
\]
For this, it is enough to show 
\[
\iter{g \cdot f}  
= 
g \cdot \iter{f \cdot (Y \times \mS g)}.
\]
We again use the fact that $\iter{-}$ is the unique morphism satisfying a suitable instance of \refeq{eq:defsiin}, which in this case is:
\[
\vcenter{
  \xymatrixcolsep{4pc}\xymatrix@C+1pc{
    Y \times \fxob 
    \ar[r]^-{\iter{f \cdot (Y \times \mS g)}}
    &
    Z 
    \ar[r]^-g
    \ar@{=}[rd]
    &
    X
    \\
    &
    Y \times \mS Z 
    \ar[u]^{f \cdot (Y \times \mS g )}
    \ar[rd]^{Y \times \mS g}
    \ar@{}[r]|\commu
    \ar@{}[l]|{\commu \text{ by \refeq{eq:defsiin}}}
    &
    Z \ar[u]_g
    \\
    Y \times \mS \fxob
    \ar[uu]^{Y \times \siin}
    \ar[ru]|{\langle \prl, \klei{(\eta \cdot \iter{f \cdot (Y \times \mS g)})}\rangle}
    \ar[rr]_{\langle \prl,  \klei{(\eta \cdot g \cdot \iter{f \cdot (Y \times \mS g)})}\rangle }
    &
    \ar@{}[u]|(0.25){\langle \commu, (\ast)\rangle}
    &
    Y \times \mS X 
    \ar[u]_f
  }
}
\]
For part~$(\ast)$ we compute
\[
\begin{array}{rcl@{\qquad}p{5cm}}
  \mS g \cdot \klei{\bigl(\eta_Z \cdot \iter{f \cdot (Y \times \mS g)}\bigr)} 
  & = &
  \klei{(\eta_X \cdot g)} \cdot \klei{\bigl(\eta_Z \cdot \iter{f \cdot (Y \times \mS g)}\bigr)} 
  & by~\refeq{eq:T}
  \\
  & = & 
  \klei{\bigl(\klei{(\eta_X \cdot g)}\cdot \eta_Z \cdot \iter{f \cdot (Y \times \mS g)}\bigr)}
  & by~\refeq{eq:kleisli} \\
  & = & 
  \klei{\bigl(\eta_X \cdot g \cdot \iter{f \cdot (Y \times \mS g)}\bigr)}
  & by~\refeq{eq:kleisli}.
\end{array}
\]
%


\propfont{Uniformity} (U).  Assume $f: Y \times \mS X \to X$, $g: Y \times \mS X' \to X'$ and $h: X \to X'$ are such that $h \cdot f = g \cdot (Y \times \mS h)$ holds. Our goal is to show $h \cdot \sol f = \sol g$, for which it is sufficient to show $\iter{g} = h \cdot \iter{f}$. Once again, we rely on  initiality property \refeq{eq:defsiin}, i.e., we need to show:
\[
\vcenter{
  \xymatrixcolsep{4pc}\xymatrix{
    Y \times \fxob
    \ar[r]^{\iter{f}}
    \ar@{}[rd]|{\commu \text{ by \refeq{eq:defsiin}}}
    &
    X 
    \ar[r]^h
    \ar@{}[rd]|{\commu \text{ by assumption}}
    &
    X'
    \\
    &
    Y \times \mS X
    \ar[u]_f
    \ar[dr]^{Y \times \mS h}
    &
    \\
    Y \times \mS \fxob
    \ar[ur]|{\langle \prl, \klei{(\eta \cdot \iter{f})}\rangle}
    \ar[uu]^{Y \times \siin}
    \ar[rr]_{\langle \prl,\klei{(\eta\cdot h \cdot \iter{f})}\rangle}
    &
    \ar@{}[u]|{\langle \commu, (\ast)\rangle}
    &
    Y \times \mS X'
    \ar[uu]_g
  }
}
\]
For $(\ast)$ we reason as follows:
\[
\begin{array}{rcl@{\qquad}p{5cm}}
  \mS h \cdot \klei{(\eta_X \cdot \iter{f})} 
  & = &  
  \klei{(\eta_{X'} \cdot h)} \cdot \klei{(\eta_X \cdot \iter{f})} 
  & by~\refeq{eq:T}
  \\
  & = &
  \klei{(\klei{(\eta_{X'} \cdot h)} \cdot \eta_X \cdot \iter{f})} 
  & by~\refeq{eq:kleisli}
  \\
  & = &
  \klei{(\eta_{X'} \cdot h \cdot \iter{f})} 
  & by~\refeq{eq:kleisli}.
\end{array}
\]
\end{proof}

%
\begin{theorem} \label{th:cpoconway}
  The category $\cpo$ with $\ibox = (-)_\bot$ and the dagger given by least fixpoint as in Example~\ref{ex:cats}.(\ref{ex:cpo}) satisfies all the properties of Definition~\ref{def:prop}.
\end{theorem}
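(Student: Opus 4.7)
The plan is to derive most of the properties as a corollary of Theorem~\ref{th:proplet}. Since $\cpo$ is cartesian closed and $((-)_\bot, \eta, \mu, s)$ is a strong monad on $\cpo$, while $\fxob = \{0 \sqsubset 1 \sqsubset \cdots \sqsubset \top\}$ is an initial $(-)_\bot$-algebra whose equalizer with $\siin \cdot \eta$ is $\om$, the data of Example~\ref{ex:cats}.\refeq{ex:cpo} assemble into a let-ccc with a fixpoint object. An unfolding of~\refeq{eq:defsiin} further confirms that the corresponding $\iter{f} \cdot (Y \times \om)$ agrees with the least-fixpoint dagger of Example~\ref{ex:cats}.\refeq{ex:cpo} (modulo the reordering of product factors flagged in the footnote of Example~\ref{ex:cats}.\refeq{ex:letccc}). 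Theorem~\ref{th:proplet} therefore delivers the fixpoint, parameter, composition and uniformity axioms, and only the double dagger identity ($\dagger\dagger$) remains.

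For ($\dagger\dagger$) I would compute directly in $\cpo$. Fix $f: X_\bot \times X_\bot \times Y \to X$. Unfolding Example~\ref{ex:cats}.\refeq{ex:cpo} twice yields $f^{\dagger\dagger} = f \cdot \langle s_1 \cdot \langle s_2, Y\rangle,\, s_2,\, Y\rangle$, where $s_1: X_\bot \times Y \to X_\bot$ is the least fixed point of $m \mapsto p_X \cdot f \cdot \langle m,\, X_\bot \times Y\rangle$ and $s_2: Y \to X_\bot$ is the least fixed point of $G(n) = p_X \cdot \sol f \cdot \langle n, Y\rangle$. In contrast, $(f \cdot (\Delta \times Y))^\dagger = f \cdot \langle t, t, Y\rangle$, with $t: Y \to X_\bot$ the least fixed point of $H(n) = p_X \cdot f \cdot \langle n, n, Y\rangle$.

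The argument then rests on two claims. First, $s_1 \cdot \langle s_2, Y\rangle = s_2$: both maps send $y$ to $p_X(f(s_1(s_2(y), y), s_2(y), y))$, the left-hand side by the defining equation of $s_1$ evaluated at $(s_2(y), y)$, the right-hand side by $s_2 = G(s_2)$. This collapses $f^{\dagger\dagger}$ to $f \cdot \langle s_2, s_2, Y\rangle$, reducing ($\dagger\dagger$) to $s_2 = t$. The first claim also yields $s_2 = H(s_2)$, so $t \leq s_2$ by leastness of $t$. For the converse, note that $t(y)$ satisfies the equation $b = p_X(f(b, t(y), y))$ whose least solution defines $s_1(t(y), y)$; hence $s_1(t(y), y) \leq t(y)$ pointwise, and monotonicity of $f$ gives $G(t) \leq H(t) = t$. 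Thus $t$ is a pre-fixed point of $G$, and Park's rule yields $s_2 \leq t$. Combining both claims, $f^{\dagger\dagger} = f \cdot \langle t, t, Y\rangle = (f \cdot (\Delta \times Y))^\dagger$.

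The conceptually delicate step is the inequality $s_2 \leq t$: the a priori distinct Scott-continuous operators $G$ and $H$ on $\cpo(Y, X_\bot)$ have to be compared via the leastness of $s_1$, rather than via the identity in the first claim (which by itself only yields the easier direction $t \leq s_2$).
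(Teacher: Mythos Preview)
Your argument is correct, and it is tidier than ours. Both proofs reduce to the double dagger identity via Theorem~\ref{th:proplet} and then show $s_2 = t$ (in your notation; our $s = s'$), but the routes diverge. We establish the auxiliary equation $s_1 \cdot \langle t, Y\rangle = t$ by two separate inductions on Kleene approximants (our~\refeq{eq:twoineq}), and then prove each of $s_2 \leq t$ and $t \leq s_2$ by further approximant chases. You instead observe that $s_1 \cdot \langle s_2, Y\rangle = s_2$ holds \emph{on the nose}---both sides unfold to the very same expression $p_X \cdot f \cdot \langle s_1 \cdot \langle s_2, Y\rangle,\, s_2,\, Y\rangle$---which immediately gives $s_2 = H(s_2)$ and hence $t \leq s_2$. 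For the reverse you use only the prefixpoint (Park) rule: since $t$ solves the pointwise equation whose least solution is $s_1(t(y),y)$, you get $s_1 \cdot \langle t, Y\rangle \leq t$, whence $G(t) \leq H(t) = t$ and $s_2 \leq t$. The upshot is that you avoid all explicit $\omega$-chain inductions and work entirely with leastness of fixpoints; this is both shorter and conceptually cleaner than our proof.
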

\begin{proof}
 In the light of Theorem \ref{th:proplet}, we only need to show ($\dagger\dagger$). We use the notation of Example~\ref{ex:cats}.(\ref{ex:cpo}). In addition, for any $f: X_\bot \times Y \to X$, we define continuous functions $s_n: Y \to X_\bot$, $n \in \Nat$ as
\[
s_0 = \lambda y. \bot, 
\qquad  
s_{n+1} = \Phi_f(s_n)
\]
so that the least fixpoint of $\Phi_f$ is $s = \bigsqcup_{n \in \Nat} s_n$.

Now suppose we are given $f: X_\bot \times X_\bot \times Y \to X$. To prove ($\dagger\dagger$) we will first show that the least fixpoints $s$ of $\Phi_{\sol f}$ and $s'$ of $\Phi_{f \cdot(\Delta \times Y)}$ coincide, i.e.~we prove (a)~$s \sqsubseteq s'$ and (b)~$s' \sqsubseteq s$. 

For (a), it suffices to show that $s'$ is a prefixpoint of $\Phi_{\sol f}$, i.e.
\begin{equation}\label{eq:prefix}
  p_X \cdot \sol f \cdot \langle s', Y\rangle \sqsubseteq s'.
\end{equation}
To see this let $s''$ be the least fixpoint of $\Phi_f$. We will prove that 
\begin{equation}\label{eq:spp}
  s'' \cdot \langle s', Y\rangle = s'.
\end{equation}
by showing the two inequalities below by induction on $n$:
\begin{equation}\label{eq:twoineq}
  s_n'' \cdot \langle s', Y\rangle \sqsubseteq s' 
  \qquad\text{and}\qquad
  s_n' \sqsubseteq s'' \cdot \langle s', Y\rangle.
\end{equation}
Note that the left-hand inequalities above already imply~\refeq{eq:prefix} using that
\[
p_X \cdot \sol f = p_X \cdot f \cdot \langle s'',Y \rangle = \Phi_f(s'') = s''. 
\]
The right-hand inequalities in~\refeq{eq:twoineq} will be used at the end of our proof. 

For the induction proofs the base cases are obvious: $\bot \cdot \langle s', Y \rangle = \bot \sqsubseteq s'$ and $\bot \sqsubseteq s'' \cdot \langle s', Y\rangle$. For the induction steps we obtain
\[
\begin{array}{rcl@{\qquad}p{5cm}}
  s_{n+1}'' \cdot \langle s', Y\rangle 
  & = & p_X \cdot f \cdot \langle s_n'', X_\bot \times Y\rangle \cdot\langle s', Y\rangle 
  & since $s_{n+1}'' = \Phi_f(s_n'')$ \\
  & \sqsubseteq &  p_X \cdot f \cdot \langle s', X_\bot \times Y\rangle \cdot\langle s', Y\rangle 
  & by induction hypothesis \\
  & = & p_X \cdot f \cdot \langle s', s', Y\rangle \\
  & = & p_X \cdot f \cdot (\Delta \times Y) \cdot \langle s', Y\rangle \\
  & = & s' & since $s' = \Phi_{f \cdot (\Delta \times Y)}(s')$
\end{array}
\]
and
\[
\begin{array}{rcl@{\qquad}p{6cm}}
s_{n+1}' & = & p_X \cdot f \cdot (\Delta \times Y) \cdot \langle s_n', Y\rangle & since $s_{n+1}' = \Phi_{f\cdot (\Delta \times Y)}(s_n')$ \\
& = & p_X \cdot f \cdot \langle s_n', s_n', Y\rangle \\
& \sqsubseteq & p_X \cdot f \cdot \langle s''\cdot \langle s', Y\rangle, s', Y\rangle & by induction hypothesis and $s_n' \sqsubseteq s'$ \\
& = & p_X \cdot f \cdot \langle s'', X_\bot \times Y\rangle \cdot \langle s', Y\rangle \\
& = & s'' \cdot \langle s', Y\rangle & since $s'' = \Phi_f(s'')$.  
\end{array}
\]

For inequality (b) we prove by induction on $n$ that $s_n \sqsubseteq s'$ holds for all $n$. The base case is again trivial: $\bot \sqsubseteq s'$. For the induction step suppose that $s_n' \sqsubseteq s$. Then we consider the following diagram
\[
\xymatrix{
  Y
  \ar@/^7pt/[rr]^-{s_{n+1}'}
  \ar@/_7pt/[rr]_-{s}
  \ar@{}[rr]|{\begin{turn}{-90}$\labelstyle\sqsubseteq$\end{turn}}
  \ar@/_7pt/[d]_{\langle s_n', Y\rangle}
  \ar@/^7pt/[d]^{\langle s, Y\rangle}
  \ar@{}[d]|\sqsubseteq
  &
  &
  X_\bot
  \\
  X_\bot \times Y
  \ar[rr]^-{\sol f}
  \ar[rd]^(.6)*+{\labelstyle\langle \sol f, X_\bot \times Y\rangle}
  \ar[dd]_{\Delta \times Y}
  &
  \ar@{}[u]|(.45){\commu\ \text{since $s = \Phi_{\sol f}(s)$}}
  &
  X
  \ar[u]_{p_X}
  \\
  &
  X \times X_\bot \times Y
  \ar[ld]^(.4)*+{\labelstyle p_X \times X_\bot \times Y}
  \ar@{}[ru]_{\commu\ \text{by $(\dagger)$}}
  \ar@{}[l]|(.65){\langle (\ast), \commu, \commu\rangle}
  &
  \\
  X_\bot \times X_\bot \times Y
  \ar `r[rru]_(.6)f [rruu]
  &
  &
}
\]
Its outside commutes since $s_{n+1}' = \Phi_{f \cdot (\Delta \times Y)}(s_n')$ and~$(\ast)$ commutes when extended by $\langle s, Y\rangle$
since $s$ is a fixpoint of $\Phi_{\sol f}$.  The  equalities in the diagram together with the inequality obtained from the induction hypothesis in the upper left-hand corner yield the desired inequality in the top row.

We are now ready to prove the desired equality $f^{\dagger\dagger} = (f \cdot (\Delta \times Y))^\dagger$:
\[
\begin{array}{rcl@{\qquad}p{5cm}}
f^{\dagger\dagger} & = & \sol f \cdot \langle s, Y\rangle & by def.~of $f^{\dagger\dagger}$ \\
& = & f \cdot \langle s'', Y\rangle \cdot \langle s, Y\rangle &  by def.~of $\sol f$ \\
& = & f \cdot \langle s'', Y\rangle \cdot \langle s', Y\rangle & since $s = s'$ \\
& = & f \cdot \langle s', s', Y\rangle & by~\refeq{eq:spp} \\
& = & f \cdot (\Delta \times Y) \cdot \langle s', Y\rangle \\
& = & \sol{(f \cdot (\Delta \times Y)} & by def.~of $\sol{(-)}$
\end{array} 
\]
This completes the proof.
\end{proof}
\begin{problem} \label{prob:letdd}
  Do let-ccc's with fixpoint objects with the dagger defined as in Example~\ref{ex:cats}.(\ref{ex:letccc}) satisfy the double dagger property $(\dagger\dagger)$?
\end{problem}
We do not see how an argument using two inequalities as in (a) and (b) as well as in~\refeq{eq:twoineq} generalizes to let-ccc's. However, as the flagship Example~\ref{ex:cats}.\refeq{ex:cpo} 
satisfies  $(\dagger\dagger)$, we believe that a counterexample might be intricate.

\subsection{The Beki\v{c} Identity}
\label{sec:bekic}

We generalize here the known fact that  the double dagger identity can be replaced by  the \emph{Beki\v{c} identity} (also known as the \emph{pairing identity}) among axioms of unguarded Conway theories  (see, e.g., \cite{Stefanescu87a},  \cite[Ch. 6.2, 6.8--6.9]{be93}, \cite{Esik:weighted}, \cite[Ch.~7.1]{h99}  and references therein). 
 We will make use of this  in our discussion of another property, dinaturality, in Section \ref{sec:dinat} and also in the discussion of trace operators in Section~\ref{sec:tr}.

\begin{definition}
    We introduce  the following possible property of a guarded fixpoint category $(\catC,\ibox, \dagger)$: 
  
  
  \medskip\noindent
  {\bf Beki\v{c} identity (B\v{c}).} For any $f: \ibox X \times \ibox Y \times A \to X$ and $g: \ibox X \times \ibox Y \times A \to Y$,
  \[
  \sol{\big(
      \xymatrix@1@C+1pc{
        \ibox (X \times Y) \times A \ar[r]^-{\can \times A} 
        &
        \ibox X \times \ibox Y \times A \ar[r]^-{\langle f, g\rangle}
        &
        X \times Y
      }
    \big)}
  =
  \langle \sol e_L, \sol e_R\rangle,
  \]
  where
  \begin{align*}
  e_R =&\ \big(\xymatrix@1@C+1pc{
      \ibox Y \times A \ar[rr]^-{\langle p_X \cdot \sol f, \ibox Y \times A\rangle}
      &&
      \ibox X \times \ibox Y \times A 
      \ar[r]^-g
      &
      Y
    }\big),
    \\
  e_L =&\ \big(\xymatrix@1@C+1pc{
      \ibox X \times A \ar[rr]^-{\ibox X \times \langle p_Y \cdot \sol e_R, A\rangle}
      &&
      \ibox X \times \ibox Y \times A
      \ar[r]^-f
      &
      X
    }\big).
  \end{align*}
\end{definition}

%
\begin{proposition}\label{prop:bekic}
   The Beki\v{c} identity holds in each  guarded Conway category $(\catC, \ibox, \dagger)$.
\end{proposition}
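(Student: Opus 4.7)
The plan is to mimic the classical derivation of the Beki\v{c} identity from the four Conway axioms (cf.~\cite[Ch.~6]{be93}, \cite[Ch.~7]{h99}), taking care to insert the pointed structure $p : \Id \to \ibox$ and the canonical morphism $\can = \langle \ibox\prl, \ibox\prr\rangle : \ibox(X\times Y) \to \ibox X \times \ibox Y$ at the appropriate places. First, I would record the identity $\can \cdot p_{X\times Y} = p_X \times p_Y$, immediate from naturality of $p$ and already exploited in the proof of Proposition~\ref{prop:mgrt}. This lets the fixpoint equation for $h = \langle f, g\rangle \cdot (\can \times A)$ split componentwise into two coupled equations matching, shape-for-shape, those satisfied by $\sol{e_L}$ and $\sol{e_R}$.

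Next, I would apply the parameter identity (P) to $f : \ibox X \times (\ibox Y \times A) \to X$ with the reparametrization $\langle p_Y \cdot \sol{e_R}, A\rangle : A \to \ibox Y \times A$, obtaining the crucial identity
\begin{equation*}
  \sol f \cdot \langle p_Y \cdot \sol{e_R}, A\rangle = \sol{e_L}.
\end{equation*}
Unfolding $\sol{e_L}$ and $\sol{e_R}$ via the fixpoint identity $(\dagger)$ and substituting the above, a direct calculation verifies that $\langle \sol{e_L}, \sol{e_R}\rangle$ satisfies the same fixpoint equation as $\sol h$.

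If the dagger were unique, we would be done; without uniqueness, the loop is closed by the double dagger identity $(\dagger\dagger)$. I would introduce the padded map
\begin{equation*}
  \widehat h \;=\; \langle f, g\rangle \cdot (\ibox\prl \times \ibox\prr \times A) \;:\; \ibox(X\times Y) \times \ibox(X\times Y) \times A \to X \times Y,
\end{equation*}
which satisfies $\widehat h \cdot (\Delta \times A) = h$ thanks to $\can = \langle \ibox\prl, \ibox\prr\rangle$, so that $(\dagger\dagger)$ yields $\sol h = \widehat h^{\dagger\dagger}$. The point of this padding is that the two copies of $\ibox(X\times Y)$ in the domain of $\widehat h$ interact with $X$ and $Y$ independently (via $\ibox\prl$ and $\ibox\prr$). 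The inner $\widehat h^\dagger$ can then be rewritten via the composition identity (C) so that its $X$-component is essentially $\sol f$ in disguise; the outer dagger then yields $\sol{e_R}$ in the second component and $\sol{e_L}$ in the first via the key identity above.

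The main obstacle is precisely this final bookkeeping step: the non-invertibility of $\can$ in general and the proliferation of insertions of $p$ make the exact order in which (C), (P), and $(\dagger\dagger)$ must be applied rather delicate, and the padding $\widehat h$ must be chosen so that its nested dagger factors natively through the pair on the right-hand side. The factorization $\can = \langle \ibox\prl, \ibox\prr\rangle$ combined with naturality of $p$ should, however, suffice to push the classical Bloom--\'Esik derivation through essentially verbatim in the guarded setting.
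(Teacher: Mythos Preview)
Your proposal is correct and follows essentially the same route as the paper's proof: the paper also introduces exactly your padded map (calling it $z = \langle f,g\rangle \cdot (\ibox\prl \times \ibox\prr \times A)$), uses $\can = (\ibox\prl \times \ibox\prr)\cdot\Delta$ together with $(\dagger\dagger)$ to obtain $\sol{(\langle f,g\rangle\cdot(\can\times A))} = z^{\dagger\dagger}$, and then peels off the two components via (P) and (C). One minor refinement worth noting: before (C) can extract $\sol f$ as the $X$-component of the inner dagger, the paper first applies (P) to pull $\ibox\prr \times A$ out of $\sol z$, passing through the intermediate $h = \langle f,g\rangle\cdot(\ibox\prl \times \ibox Y \times A)$; your description glosses over this step, but you correctly flag that the exact interleaving of (P) and (C) is the delicate part. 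Your preliminary verification that $\langle\sol{e_L},\sol{e_R}\rangle$ satisfies the fixpoint equation for $h$ is superfluous (as you yourself observe) and does not appear in the paper.
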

\begin{proof}
  First observe that $\can = (\ibox \prl \times \ibox \prr) \cdot \Delta$:
  \[
  \xymatrix@C+1pc{
    \ibox (X \times Y)
    \ar[r]^-{\ibox(\Delta)}
    \ar[rd]_-{\Delta}
    &
    \ibox ((X \times Y) \times (X \times Y))
    \ar[r]^-{\ibox (\prl \times \prr)}
    \ar[d]^{\can}
    \ar@{}[ld]^(.25)\commu
    \ar@{}[rd]^(.6)\commu
    &
    \ibox (X \times Y)
    \ar[d]^-{\can}
    \ar@{<-} `u[l] `[ll]_{\ibox (X \times Y)} [ll]
    \\
    &
    \ibox(X\times Y) \times \ibox(X\times Y)
    \ar[r]_-{\ibox \prl \times \ibox \prr}
    &
    \ibox X \times \ibox Y
    }
  \]
  Next we compute:
  \[
  \begin{array}{rcl}
    \sol{(\langle f, g\rangle \cdot (\can \times A))} 
    & = & \sol{(\langle f,g\rangle \cdot (\ibox \prl \times \ibox \prr \times A)\cdot (\Delta \times A))}
    \\
    & \stackrel{(\dagger\dagger)}{=} & 
    (\langle f,g \rangle \cdot (\ibox \prl \times \ibox \prr \times A))^{\dagger\dagger}
    \\
    & = &
    (\langle f,g \rangle \cdot (\ibox \prl \times \ibox Y \times A) \cdot (\ibox(X \times Y) \times \ibox \prr \times A))^{\dagger\dagger}
    \\
    & \stackrel{(P)}{=} &
    \sol{(\sol{(\langle f, g\rangle \cdot (\ibox \prl \times \ibox Y \times A))} \cdot (\ibox \prr \times A))}
  \end{array}
  \]
  Now let $h = \langle f, g \rangle \cdot (\ibox \prl \times \ibox Y \times A): \ibox(X \times Y) \times \ibox Y \times A \to X \times Y$. Then we have
  \[
  \begin{array}{rcl}
    \prl \cdot \sol h 
    & = & 
    \prl \cdot \sol{(\langle f, g \rangle \cdot (\ibox \prl \times \ibox Y \times A))}
    \\
    & \stackrel{(C)}{=} & 
    \sol{(\prl \cdot \langle f,g\rangle)}
    \\
    & = & \sol f. 
  \end{array}
  \]
  And we have
  \begin{equation}\label{diag:aux}
    \let\objectstyle=\labelstyle
    \vcenter{
      \xymatrix@C+2pc{
        \ibox Y \times A
        \ar[r]^-{\langle \sol h, \ibox Y \times A\rangle}
        \ar[rd]_-(.4)*+{\labelstyle\langle \sol f, \ibox Y \times A\rangle}
        &
        X \times Y \times (\ibox Y \times A)
        \ar[d]|{\prl \times \ibox Y \times A}
        \ar[r]_-{p \times (\ibox Y \times A)}
        \ar@{}[rd]|{\commu\ \text{by nat.~of $p$}}
        \ar@{}[ld]^(.25){\commu\qquad}
        &
        \ibox (X\times Y) \times \ibox Y \times A
        \ar[r]^-h
        \ar[d]|{\ibox \prl \times \ibox Y \times A}
        \ar@{}[rd]|\commu
        &
        X \times Y
        \ar[d]^{\prr}
        \ar@{<-} `u[l] `[lll]_{\sol h}^{\commu\ \text{by ($\dagger$)}} [lll]
        \\
        &
        X \times \ibox Y \times A
        \ar[r]_-{p \times \ibox Y \times A}
        &
        \ibox X \times \ibox Y \times A
        \ar[r]_-g
        &
        Y
      }
    }
  \end{equation}
  
 Plugging $h$ into our first computation above we obtain
  \[
  \begin{array}{rcl}
    \prr \cdot \sol{(\langle f, g\rangle \cdot (\can \times A))}
    & = &
    \prr \cdot \sol{(\sol h \cdot (\ibox \prr \times A))} 
    \\
    & \stackrel{(C)}{=} &
    \sol{(\prr \cdot \sol h)}
    \\
    & \stackrel{\refeq{diag:aux}}{=} & 
    \sol{(g \cdot (p_X \times \ibox Y \times A) \cdot \langle \sol f, \ibox Y \times A\rangle)}
    \\
    & = &
    \sol{(g \cdot \langle p_X \cdot \sol f, \ibox Y \times A\rangle)}
    \\
    & = &
    \sol e_R.
  \end{array}
  \]

  Let $z = \langle f,g \rangle \cdot (\ibox \prl \times \ibox \prr \times A)$. We saw previously that $\sol{(\langle f,g\rangle \cdot (\can \times A))} = z^{\dagger\dagger}$; thus we have
  \[
  \begin{array}{rcl}
    \prl \cdot \sol{(\langle f,g\rangle \cdot (\can \times A))} 
    & = &
    \prl \cdot z^{\dagger\dagger}
    \\
    & \stackrel{(\dagger)}{=} &
    \prl \cdot \sol z \cdot (p_{X \times Y} \times A) \cdot \langle z^{\dagger\dagger}, A\rangle
    \\
    & = &
    \prl \cdot \sol z \cdot \langle p_{X \times Y} \cdot z^{\dagger\dagger}, A \rangle
    \\
    & \stackrel{(P)}{=} & 
    \prl \cdot \sol{\big(z \cdot (\ibox(X \times Y) \times \langle p_{X \times Y} \cdot z^{\dagger\dagger}, A\rangle)\big)}
    \qquad (\ast)
  \end{array}
  \]
 Substitute the definition of $z$ and use that 
  \[
  \ibox \prr \cdot p_{X\times Y} \cdot z^{\dagger\dagger} 
  = 
  p_Y \cdot \prr \cdot z^{\dagger\dagger}
  =
  p_Y \cdot \prr \cdot \sol{(\langle f,g\rangle \cdot (\can \times A))}
  = 
  p_Y \cdot \sol e_R
  \]
  to obtain that $(\ast)$ is equal to
  \[
  \begin{array}{rcl}
    \multicolumn{3}{c}{
    \prl \cdot \big(\langle f, g \rangle \cdot (\ibox \pi_\ell \times \ibox \pi_r \times A) \cdot (\ibox(X \times Y) \times \langle p_{X \times Y} \cdot z^{\dagger\dagger}, A\rangle)\big)^\dagger
    }
    \\
    \qquad\qquad & = & \prl \cdot \sol{(\langle f, g \rangle \cdot (\ibox X \times \langle p_Y \cdot \sol e_R, A\rangle) \cdot (\ibox \prl \times A))}
    \\
    & \stackrel{(C)}{=} &
    \sol{(\underbrace{\prl \cdot \langle f, g \rangle}_{= f} \cdot (\ibox X \times \langle p_Y \cdot \sol e_R, A\rangle))}
    \\
    & = &
    \sol e_L.
  \end{array}
  \]
  This completes the proof.
\end{proof}

\begin{remark}
  Notice that the Beki\v{c} identity can also be formulated without
  mentioning $e_L$. In fact, by the parameter identity we have
  \begin{equation}\label{eq:eL}
    \sol e_L = (\xymatrix@1{
      A \ar[rr]^-{\langle p_Y \cdot \sol e_R, A\rangle}
      &&
      \ibox Y \times A 
      \ar[r]^-{\sol f}
      &
      X
    }).
  \end{equation}
\end{remark}

Proposition \ref{prop:bekic} states the Beki\v{c} identity can be derived from Conway laws.  But it can be also postulated directly as an axiom replacing  $(\dagger\dagger)$. This is a guarded counterpart of Proposition~5.3.15 in Bloom and  \'Esik~\cite{be93}:

  
%
\begin{proposition}
  Each guarded fixpoint category $(\C, \ibox, \dagger)$ satisfying the fixpoint, parameter, composition and Beki\v{c} identities is a guarded Conway category.
\end{proposition}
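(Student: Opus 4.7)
The plan is to derive $(\dagger\dagger)$ directly from the Beki\v{c} identity, mimicking the classical Bloom--\'Esik argument in the unguarded case but with careful attention to the canonical map $\can$. Fix $h: \ibox X \times \ibox X \times Y \to X$; the goal is to prove $h^{\dagger\dagger} = (h \cdot (\Delta \times Y))^\dagger$, where $\Delta$ stands for $\Delta_{\ibox X}$.

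I would apply (B\v{c}) with both fixpoint types equal to $X$, parameter $A = Y$, and $f = g = h$. The left-hand side of the Beki\v{c} equation is then the dagger of $\langle h, h\rangle \cdot (\can \times Y) = \Delta_X \cdot h \cdot (\can \times Y)$. Using the composition identity (C),
\[
\sol{(\Delta_X \cdot h \cdot (\can \times Y))} = \Delta_X \cdot \sol{(h \cdot (\can \times Y) \cdot (\ibox \Delta_X \times Y))};
\]
since $\can \cdot \ibox \Delta_X = \langle \ibox \prl \cdot \ibox \Delta_X, \ibox \prr \cdot \ibox \Delta_X\rangle = \Delta_{\ibox X}$ by functoriality of $\ibox$ and the universal property of products, the right-hand side equals $\Delta_X \cdot (h \cdot (\Delta \times Y))^\dagger$. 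Comparing this with $\langle \sol{e_L}, \sol{e_R}\rangle$ on the right-hand side of (B\v{c}) forces $\sol{e_L} = \sol{e_R} = (h \cdot (\Delta \times Y))^\dagger$.

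It then remains to identify $\sol{e_R}$ with $h^{\dagger\dagger}$. Unfolding the definition of $e_R$ when $f = g = h$ gives $e_R = h \cdot \langle p_X \cdot \sol h, \ibox X \times Y\rangle$, and this is exactly the right-hand side produced by the fixpoint identity (F) applied to $\sol h: \ibox X \times (\ibox X \times Y) \to X$, namely $\sol h = h \cdot \langle p_X \cdot \sol h, \ibox X \times Y\rangle$. Thus $e_R = \sol h$, so $\sol{e_R} = (\sol h)^{\dagger} = h^{\dagger\dagger}$. Combined with the previous paragraph this yields $h^{\dagger\dagger} = (h \cdot (\Delta \times Y))^\dagger$, which is $(\dagger\dagger)$. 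The main obstacle is purely bookkeeping: keeping the two ``layers'' of $\ibox X$ straight and verifying the small identity $\can \cdot \ibox \Delta = \Delta_{\ibox X}$. It is worth noting that the parameter identity (P) is not invoked in this derivation; only (F), (C) and (B\v{c}) are needed.
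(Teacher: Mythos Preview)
Your proof is correct and follows essentially the same route as the paper's: apply the Beki\v{c} identity with $f = g = h$, use (C) together with $\can \cdot \ibox(\Delta_X) = \Delta_{\ibox X}$ to rewrite the left-hand side as $\Delta_X \cdot \sol{(h \cdot (\Delta_{\ibox X} \times Y))}$, and use the fixpoint identity to recognise $e_R = \sol h$, whence $\sol e_R = h^{\dagger\dagger}$. The only difference is cosmetic ordering (the paper runs the equalities as a single chain starting from $h^{\dagger\dagger}$), and your closing remark that (P) is not invoked is also accurate.
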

\begin{proof}
  We must derive $(\dagger\dagger)$ from the identities listed in the statement. Given $f: \ibox X \times \ibox X \times A \to X$ we apply the Beki\v{c} identity to obtain
  \[
  \sol{(\langle f, f\rangle \cdot (\can \times A))} = \langle \sol e_L, \sol e_R\rangle,
  \]
  where
  \hfill
  $
  e_R = f \cdot \langle p_X \cdot \sol f, \ibox X \times A\rangle 
  \qquad\text{and}\qquad
  e_L = f \cdot (\ibox X \times \langle p_X \cdot \sol e_R, A\rangle).
  $
  \hspace*{\fill}
  
  \noindent
  By the fixpoint identity we have $e_R = \sol f$. Thus, we obtain
  \[
  \begin{array}{rcl}
    f^{\dagger\dagger} & = & \sol e_R \\
    & \stackrel{\text{(B\v c)}}{=} & \prr \cdot \sol{(\langle f, f\rangle \cdot (\can \times A))} \\
    & = & \prr \cdot \sol{(\Delta_X \cdot f \cdot (\can \times A))} \\
    & \stackrel{\text{(C)}}{=} &
    \underbrace{\prr \cdot \Delta_X}_{= \id_X} \cdot\, \sol{(f \cdot (\can \times A) \cdot (\ibox (\Delta_X) \times A))}\\
    & = & \sol{(f \cdot (\Delta_{\ibox X} \times A))},
  \end{array}
  \]
  where the last equation follows from
  \[
  \can \cdot \ibox(\Delta_X) 
  = 
  \langle \ibox \prl, \ibox \prr\rangle \cdot \ibox(\Delta_X) 
  = 
  \langle \ibox(\id_X), \ibox (\id_X) \rangle 
  =
  \langle \id_{\ibox X}, \id_{\ibox X}\rangle
  = 
  \Delta_{\ibox X}.
  \]
  This completes the proof.
\end{proof}

\subsection{Dinaturality}
\label{sec:dinat}

Finally, we discuss a property that is essentially a parametrized version of the composition identity. In fact, Bloom and  \'Esik \cite{be93} use the very name \emph{composition identity} in this context, calling the unguarded counterpart of our earlier (C)  \emph{the simplified composition identity} instead. As it turns out, this property and its variants are not easy to understand in the guarded setting, leaving us with Open Problems \ref{prob:dinat} and \ref{prob:donetwo}.  But first, let us state basic notions and facts.

\begin{definition}
 We introduce  the following possible property of a guarded fixpoint category $(\catC,\ibox, \dagger)$:
  
  \medskip\noindent
  {\bf Dinaturality (D).} For every $f: \ibox X \times A \to Y$ and
  $g: \ibox Y \times A \to X$, 
  \[
  \qquad
  \sol{(\xymatrix@1{
    \ibox X \times A 
    \ar[rr]^-{\langle p_Y \cdot f, \prr\rangle}
    &&
    \ibox Y \times A
    \ar[r]^-g
    &
    X
  })}
  =
  \xymatrix@1{
    A 
    \ar[rr]^-{\langle p_Y \cdot \sol h, A\rangle}
    &&
    \ibox Y \times A
    \ar[r]^-g
    &
    X,
  }
  \]
  where
  $
  h = (\xymatrix@1{
    \ibox Y \times A \ar[rr]^-{\langle p_X\cdot g, \prr\rangle}
    &&
    \ibox X \times A \ar[r]^-{f}
    &
    Y
    }).
  $
\end{definition}

For unguarded fixpoint operators, it is well-known 
that the four Conway axioms are equivalent to dinaturality, the parameter and double dagger identities (D, P, $\dagger\dagger$), in other words, dinaturality can replace the fixpoint and composition identities (see, e.g., \cite{Stefanescu87a},  \cite[Ch. 6.2, 6.8--6.9]{be93}, \cite{Esik:weighted}, \cite[Ch.~7.1]{h99}  and references therein). 
Proposition~\ref{prop:dinat} below shows that we can derive dinaturality from the Conway axioms at the price of extra assumptions on $\ibox$. However, no extra assumptions are needed for:

\begin{proposition}
Dinaturality  holds in each  unique guarded fixpoint category $(\catC,\ibox)$.
\end{proposition}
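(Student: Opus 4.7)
The plan is to exploit the uniqueness of $\dagger$ in the standard way: it suffices to exhibit the claimed right-hand side as a morphism satisfying equation~\refeq{eq:fixp} for the composite $g \cdot \langle p_Y \cdot f, \prr\rangle: \ibox X \times A \to X$. So I would set
\[
k \deq g \cdot \langle p_Y \cdot \sol h, A\rangle : A \to X,
\]
and aim to verify that $k = (g \cdot \langle p_Y \cdot f, \prr\rangle) \cdot (p_X \times A) \cdot \langle k, A\rangle$; uniqueness of $\sol{(g \cdot \langle p_Y \cdot f, \prr\rangle)}$ then yields the result.

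The key auxiliary step is to obtain a useful closed form for $\sol h$ in terms of $k$. I would first apply the fixpoint identity~\refeq{eq:fixp} to $\sol h: A \to Y$, which gives
\[
\sol h = h \cdot (p_Y \times A) \cdot \langle \sol h, A\rangle = h \cdot \langle p_Y \cdot \sol h, A\rangle.
\]
Substituting the definition $h = f \cdot \langle p_X \cdot g, \prr\rangle$ and simplifying pairings, this becomes
\[
\sol h = f \cdot \langle p_X \cdot g \cdot \langle p_Y \cdot \sol h, A\rangle,\, A\rangle = f \cdot \langle p_X \cdot k, A\rangle,
\]
where the last equality just invokes the definition of $k$. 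Thus $p_Y \cdot \sol h = p_Y \cdot f \cdot \langle p_X \cdot k, A\rangle$.

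With this identity in hand, the verification is a short chase: unfolding the right-hand side of the required fixpoint equation,
\[
(g \cdot \langle p_Y \cdot f, \prr\rangle) \cdot (p_X \times A) \cdot \langle k, A\rangle = g \cdot \langle p_Y \cdot f \cdot \langle p_X \cdot k, A\rangle,\, A\rangle = g \cdot \langle p_Y \cdot \sol h,\, A\rangle = k,
\]
which is exactly what is needed. Uniqueness then forces $k = \sol{(g \cdot \langle p_Y \cdot f, \prr\rangle)}$, i.e., dinaturality.

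I do not anticipate a genuine obstacle here: uniqueness collapses what in the general Conway setting (cf.\ Proposition~\ref{prop:dinat}) requires extra assumptions on $\ibox$, and everything reduces to rearranging pairings and invoking the fixpoint identity once. The only mild subtlety is keeping track of the ``later'' modality on the $Y$-component versus the $X$-component when composing $f$ and $g$; drawing one commutative diagram in the style of the proofs of Theorem~\ref{thm:unique} should make the bookkeeping transparent if the calculational style above is deemed too terse.
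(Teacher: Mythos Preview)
Your proposal is correct and follows essentially the same route as the paper: the paper also sets out to show that $g \cdot \langle p_Y \cdot \sol h, A\rangle$ satisfies the fixpoint identity~\refeq{eq:fixp} for $g \cdot \langle p_Y \cdot f, \prr\rangle$, invoking the fixpoint identity for $\sol h$ at the crucial step, and then concludes by uniqueness. The only difference is presentational---the paper draws a single commutative diagram where you give the calculational chain.
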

\begin{proof}
  Given $f$, $g$ and $h$ as in the definition of dinaturality, we only need to prove that $g \cdot \langle p_Y \cdot \sol h, A\rangle: A \to X$ satisfies the fixpoint identity \refeq{eq:fixp} w.r.t.~$g \cdot \langle p_Y \cdot f, \prr\rangle: \ibox X \times A \to X$. 
  Consider the diagram below:
  \[
  \xymatrix@C+1pc{
    A
    \ar[d]_{\langle p_Y \cdot \sol h, A, A\rangle}
    \ar[rr]^-{\langle p_Y \cdot \sol h, A\rangle}
    \ar[rrd]^{\langle \sol h, A\rangle}
    \ar[rd]|-{\langle p_Y \cdot \sol h, A\rangle}
    &&
    \ibox Y \times A
    \ar[r]^-{g}
    \ar@{}[ld]_\commu
    \ar@{}[rd]|\commu
    &
    X
    \\
    \ibox Y \times A \times A
    \ar[d]_{g \times A}
    \ar@{}[ru]|(.3)\commu
    &
    \ibox Y \times A
    \ar[l]^-{\ibox Y \times \Delta}
    \ar[rrd]_-{\langle p_X \cdot g, \prr\rangle}
    \ar@{}[r]|{\langle (\ast), \commu\rangle}
    \ar@{}[d]|\commu
    &
    Y \times A
    \ar[u]^{p_Y \times A}
    \ar[r]^{p_Y \times A}
    &
    \ibox Y \times A
    \ar[u]_g
    \ar@{}[ld]^(.2)\commu
    \\
    X \times A
    \ar[rrr]_-{p_X \times A}
    &&&
    \ibox X \times A
    \ar[u]_{\langle p_Y \cdot f, \prr\rangle}
    \ar[lu]_{\langle f, \prr\rangle}
    }
  \]
  For ($\ast$), recall $h = f \cdot \langle p_X \cdot g, \prr\rangle$ and then apply the fixpoint identity.
\end{proof}

\begin{proposition}\label{prop:dinat}
   Dinaturality holds  in each guarded Conway category $(\C, \ibox, \dagger)$  such that $\ibox$ preserves products and is well-pointed (i.e.~we have $\ibox p = p \ibox$).
\end{proposition}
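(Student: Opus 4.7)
The plan is to apply the Beki\v{c} identity (Proposition \ref{prop:bekic}) in two symmetric ways and equate the resulting joint solutions via the swap isomorphism.

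I would start by defining auxiliary maps $F: \ibox X \times \ibox Y \times A \to X$ and $G: \ibox X \times \ibox Y \times A \to Y$ as the composites of $g$ and $f$ with the canonical projections from the triple product onto $\ibox Y \times A$ and $\ibox X \times A$ respectively. Thus $F$ is independent of its $\ibox X$-argument and $G$ is independent of its $\ibox Y$-argument, so a direct application of the fixpoint identity $(\dagger)$ collapses to give $\sol F = g$ and $\sol G = f$ (no uniqueness is required). Feeding $\sol F = g$ into the Beki\v{c} formulas yields $e_R = h$ and, via \refeq{eq:eL}, $\sol{e_L} = g \cdot \langle p_Y \cdot \sol h, A\rangle$; Proposition \ref{prop:bekic} therefore gives
\[
\sol M \;:=\; \sol{(\langle F, G\rangle \cdot (\can \times A))} \;=\; \langle g \cdot \langle p_Y \cdot \sol h, A\rangle,\ \sol h\rangle.
\]
Running the same construction with the roles of $X$ and $Y$ interchanged, i.e.\ for analogously-defined $F', G': \ibox Y \times \ibox X \times A \to Y, X$, produces
\[
\sol N \;:=\; \sol{(\langle F', G'\rangle \cdot (\can \times A))} \;=\; \langle f \cdot \langle p_X \cdot \sol u, A\rangle,\ \sol u\rangle,
\]
where $u$ is the morphism appearing on the left-hand side of the dinaturality equation.

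The final step is to relate $\sol M$ and $\sol N$ through the swap $\sigma : X \times Y \to Y \times X$. A diagram chase using the naturality of $\can$---together with the assumption that $\ibox$ preserves products, which ensures that the $\can$-morphisms for $\ibox(X\times Y)$ and $\ibox(Y\times X)$ intertwine cleanly with $\sigma$ and $\ibox\sigma$---delivers
\[
\sigma \cdot M \;=\; N \cdot (\ibox \sigma \times A).
\]
Since $\sigma$ is an isomorphism (and hence so is $\ibox \sigma$), we may rewrite $N = \sigma \cdot M \cdot (\ibox \sigma^{-1} \times A)$ and apply the composition identity $(C)$ with outer factor $\sigma$ and inner factor $M \cdot (\ibox \sigma^{-1} \times A)$: the composition identity introduces a further $(\ibox \sigma \times A)$ inside the inner dagger, cancelling with $(\ibox \sigma^{-1} \times A)$ to yield $\sol N = \sigma \cdot \sol M$. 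Equating first components of $\sol M = \sigma^{-1} \cdot \sol N$ then produces precisely $\sol u = g \cdot \langle p_Y \cdot \sol h, A\rangle$, as required. The main obstacle is the swap-intertwining identity $\sigma \cdot M = N \cdot (\ibox\sigma \times A)$ of the middle step: this is where both hypotheses on $\ibox$ come in---product-preservation via the behaviour of $\can$ under swap, and well-pointedness (in the form $p_{\ibox V} = \ibox p_V$) so that the various $p$- and $\ibox p$-components arising from the Beki\v{c} formulas line up coherently in the two symmetric applications.
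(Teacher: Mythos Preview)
Your approach is correct and genuinely different from the paper's, and in fact it delivers more than you claim: neither hypothesis on $\ibox$ is actually used anywhere in your argument.

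The paper works directly on the left-hand side $\sol{(g\cdot k)}$ with $k=\langle p_Y\cdot f,\prr\rangle$, using $(C)$ to peel off $g$ and then Beki\v{c} to compute $\sol{(k\cdot(\ibox g\times A))}$. Because the latter morphism has domain $\ibox(\ibox Y\times A)\times A$, the paper must pass through $\can^{-1}$ (hence product preservation) and later simplify $\can\cdot p_{\ibox Y\times A}$ via well-pointedness. Your route avoids this entirely: two symmetric Beki\v{c} applications to the pairs $\langle F,G\rangle$ and $\langle F',G'\rangle$ followed by the swap relation $\sigma\cdot M = N\cdot(\ibox\sigma\times A)$ and one use of $(C)$.

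You are, however, mistaken about where the hypotheses are supposed to enter. The swap relation needs only functoriality of $\ibox$: from $\can_{Y,X}=\langle\ibox\prl,\ibox\prr\rangle$ one gets $\can_{Y,X}\cdot\ibox\sigma=\langle\ibox\prr,\ibox\prl\rangle=\sigma_{\ibox}\cdot\can_{X,Y}$ with no invertibility of $\can$ required, and the identity $\langle F',G'\rangle\cdot(\sigma_{\ibox}\times A)=\langle G,F\rangle$ is a straightforward projection check. Likewise, the Beki\v{c} formulas involve only $p_X$, $p_Y$, never $p_{\ibox X}$ or $\ibox p_X$, so well-pointedness plays no role. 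Your argument thus shows that dinaturality follows from the Conway axioms alone---a strict improvement over Proposition~\ref{prop:dinat} as stated.
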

\begin{proof}
  We prove this property from the fixpoint, composition and Beki\v{c} identities. Given $f: \ibox X \times A \to Y$ and $g: \ibox Y \times A \to X$, let $k = \langle p_Y \cdot f, \prr\rangle$. By~(C), we have
  \[
  \sol{(g \cdot k)} = (\xymatrix@1{
    A \ar[rr]^-{\sol{(k \cdot (\ibox g \times A))}} && \ibox Y \times A \ar[r]^-g & X
    }).
  \]
  Thus, in order to complete the proof it suffices to show that
  \[
  \langle p_Y \cdot \sol h, A\rangle 
  =
  \sol{(\xymatrix@1@C+1pc{
    \ibox(\ibox Y \times A)\times A
    \ar[r]^-{\ibox g \times A}
    &
    \ibox X \times A
    \ar[r]^-{k}
    &
    \ibox Y \times A
  })}.
  \]
  Since $\ibox$ preserves products, we have $\can^{-1}: \ibox\ibox Y \times \ibox A \to \ibox(\ibox Y \times A)$. Now let
  \begin{align*}
    m = &\ p_Y \cdot f \cdot (\ibox g \times A) \cdot (\can^{-1} \times A): \ibox \ibox Y \times \ibox A \times A \to \ibox Y, \\
    n = &\ \prr: \ibox \ibox Y \times \ibox A \times A \to A.
  \end{align*}
  Then we clearly have $\langle m,n\rangle \cdot (\can \times A) = k \cdot (\ibox g \times A)$. Now we apply the Beki\v{c} identity to obtain
  \[
  (\langle m,n\rangle \cdot (\can \times A))^\dagger = \langle \sol e_L, \sol e_R\rangle: A \to \ibox Y \times A,
  \]
  where
  \begin{align*}
    e_R \stackrel{\text{by def}}{=} &\ n \cdot \langle p_{\ibox Y} \cdot \sol m, \ibox A \times A\rangle = \prr: \ibox A \times A \to A \\
    e_L \stackrel{\text{by def}}{=} &\ m \cdot (\ibox\ibox Y \times \langle p_A \cdot \sol e_R, A\rangle): \ibox \ibox Y \times A \to \ibox Y.
  \end{align*}
  Using $e_R = \prr$ we see that 
  $
  \sol e_R \stackrel{(\dagger)}{=} e_R \cdot (p_A \times A) \cdot \langle \sol e_R, A\rangle = \id_A.
  $
  So we have 
  \[
  e_L = \underbrace{p_Y \cdot f \cdot (\ibox g \times A) \cdot (\can^{-1} \times A)}_m \cdot (\ibox \ibox Y \times \langle p_A, A\rangle).
  \]
  Thus we obtain
  $\
  \begin{array}[t]{rcl}
    \sol e_L 
    & \stackrel{(C)}{=} & p_Y \cdot \left(f \cdot (\ibox g \times A) \cdot (\can^{-1} \times A) \cdot (\ibox \ibox Y \times \langle p_A, A\rangle) \cdot (\ibox p_Y \times A)\right)^\dagger
    \\
    & = & 
    p_Y \cdot (f \cdot \langle p_X \cdot g, \prr\rangle)^\dagger,
  \end{array}
  $\\[5pt]
  where the second equation is derived as follows: it is sufficient to prove that the two morphisms inside $\dagger$ after removal of $f$ are equal, and for this one considers the product components of $\ibox X \times A$ (their codomain) separately. The right-hand component is obviously $\prr$ and the left-hand one follows from
  \[
  p_X \cdot g \stackrel{\text{(nat.~of $p$)}}{=} \ibox g \cdot p_{\ibox Y \times A} = \ibox g \cdot \can^{-1} \cdot (\ibox p_Y \times p_A),
  \]
  where the second equation is derived using well-pointedness of $\ibox$:
  $
  \can \cdot p_{\ibox Y \times A} = p_{\ibox Y} \times p_A = \ibox p_Y \times p_A.
  $
\end{proof}


\begin{problem} \label{prob:dinat}
Do (D, P, $\dagger\dagger$) imply the fixpoint and simplified composition identities? 
\end{problem}

Further inspection reveals a curious asymmetry here. Under the assumption that $\ibox$ preserves products one can formulate two related versions of dinaturality where the given morphisms only contain one $\ibox$. For these properties we use for given $f: \ibox X \times A \to Y$ and $g: Y \times A \to X$ the morphism
\[
f \comp g = \big(\xymatrix@1@C+.2pc{
\ibox Y \times A 
\ar[rr]^-{\ibox Y \times \langle p_A, A\rangle} 
&&
\ibox Y \times \ibox A \times A
\ar[rr]^-{\can^{-1} \times A}
&&
\ibox(Y \times A) \times A
\ar[r]^-{\ibox g \times A}
&
\ibox X \times A
\ar[r]^-f
&
Y
}\big).
\]

\medskip\noindent
{\bf Property (D$_1$).} Given $f: \ibox X \times A \to Y$ and $g: Y \times A \to X$ we have
\[
\big(\xymatrix@1{
\ibox X \times A 
\ar[rr]^-{\langle f, \prr\rangle}
&&
Y \times A
\ar[r]^-g
&
X
}
\big)^\dagger
=
\big(
\xymatrix@1{
  A\ar[rr]^-{\langle \sol{(f \comp g)}, A\rangle}
  &&
  Y \times A
  \ar[r]^-g
  &
  X
}\big).
\]
{\bf Property (D$_2$).} Given $f: X \times A \to Y$ and $g: \ibox Y \times A \to X$ we have
\[
\sol{(g \comp f)} = g \cdot \langle p_Y \cdot \sol h, A\rangle,
\]
where $h = \big(\xymatrix@1{
\ibox Y \times A 
\ar[rr]^-{\langle g,\prr\rangle}
&&
X \times A
\ar[r]^-f
&
Y}\big).$

Whenever $\ibox$ is moreover well-pointed, each of (D$_1$) and (D$_2$) implies (D). One also readily proves, by adapting the proofs for unguarded operators, that (D$_1$) implies the simplified composition identity (C) and that (D$_2$) implies the fixpoint identity ($\dagger$). Conversely, the Conway axioms imply the first version of dinaturality (D$_1$). For the sake of brevity we leave the details to the reader. What defeats us at the moment is:

\begin{problem} \label{prob:donetwo} \ 
\begin{itemize}
\item Do the Conway axioms imply (D$_2$)? 
\item Does (D$_1$) imply (D$_2$)?
\end{itemize}
\end{problem}


\ifbool{full}{
\subsection{Fixpoint expressions on types}

The starting point the fixpoint theorem in \cite{Sambin76:sl,Visser05:lncs} \dots For  importance of such results in contemporary type theory, see \cite{Nakano00:lics,BirkedalMSS12:lmcs,AtkeyMB13:icfp} \dots}{}

\takeout{ 
} 

%
%
\section{Guarded Trace Operators}
\label{sec:tr}

In the special case of Example~\ref{ex:cats}.(\ref{ex:identity}), i.e., $\ibox$ being the identity functor, it is well-known that a fixpoint operator 
satisfying the Conway axioms is equivalent to a trace operator w.r.t.\ the product on $\catC$~\cite{h97,h99}. In this section we present a similar
result for a 
generalized notion of a guarded trace operator on $(\catC, \ibox)$. 

\begin{remark} \label{rem:mono}
Recall that Joyal, Street and Verity~\cite{jsv96} introduced the unguarded notion of a trace operator for \emph{(symmetric) monoidal} categories. The applicability to non-cartesian tensor products is in fact one of main reasons of its popularity.  Our generalization can also be
formulated in the symmetric monoidal setting, see Remark \ref{rem:comono} below. 
However, Theorems~\ref{thm:tr} and~\ref{thm:unif}, the main results in this section, do not make any use of this added generality. Hence, we keep the Assumption \ref{mainassumption} like in the remainder of the paper.
\end{remark}

\begin{definition}
  A (cartesian) \emph{guarded trace operator} on $(\catC, \ibox)$ is a natural
  family of operations
  \[
  \Tr_{A,B}^X: \catC(\ibox X \times A, X \times B) \to \catC(A,B)
  \]
  subject to the following three conditions:
  \begin{enumerate}
  \item {\bf Vanishing.} (V1) For every $f: \ibox 1 \times A \to B
    \cong 1 \times B$ we
    have
    \[
    \Tr_{A,B}^1(f) = (\xymatrix@1@C+1pc{
      A \cong 1 \times A
      \ar[r]^-{p_1 \times A}
      &
      \ibox 1 \times A
      \ar[r]^-f
      &
      B
      }).
    \]
    (V2) For every $f: \ibox X \times \ibox Y \times A \to X\times Y
    \times B$ we have
    \[
    \small
    \Tr_{A,B}^Y(\Tr_{\ibox Y \times A, Y \times B}^X(f)) =
    \Tr_{A,B}^{X\times Y} 
    (\xymatrix@1@C-.5pc{
      \ibox(X \times Y) \times A \ar[rr]^-{\can \times A}
      &&
      \ibox X \times \ibox Y \times A \ar[r]^-f
      &
      X \times Y \times B
      }).
    \]
  \item {\bf Superposing (S).} For every $f: \ibox X \times A \to X \times B$ we
    have
    \[
    \Tr_{A \times C, B \times C}^X (f \times C) = \Tr_{A,B}^X(f)
    \times C.
    \]
  \item {\bf Yanking (Y).} Consider the canonical isomorphism $c: \ibox X \times
    X \to X \times \ibox X$. Then we have
    \[
    \Tr_{X,\ibox X}^X(c) = (\xymatrix@1{X\ar[r]^{p_X} & \ibox X}).
    \]
  \end{enumerate}
  
  If $\Tr$ is a (cartesian) guarded trace operator on $(\catC, \ibox)$, $(\catC, \ibox, \Tr)$ is called
  a \emph{guarded traced (cartesian) category}.
\end{definition}

Of course, when $\ibox$ is taken to be the identity on $\C$ as in Example~\ref{ex:cats}.(\ref{ex:identity}), our notion of guarded trace specializes to the notion of an ordinary trace operator (w.r.t.~product) of Joyal, Street and Verity.  In addition, the  naturality of $\Tr$ can
equivalently be expressed by three more axioms, just like in the unguarded case:
\begin{enumerate}
  \setcounter{enumi}{3}
\item {\bf Left-tightening (Lt).} Given $f: \ibox X \times A \to X \times
  B$ and $g: A' \to A$ we have
  \[
  \Tr_{A',B}^X(\xymatrix@1{
    \ibox X \times A'
    \ar[rr]^-{\ibox X \times g}
    &&
    \ibox X \times A 
    \ar[r]^-f
    &
    X \times B
  })
  =
  (\xymatrix@1{
    A'\ar[r]^-g 
    & 
    A \ar[rr]^-{\Tr_{A,B}^X(f)}
    &&
    B
  }).
  \]
\item {\bf Right-tightening (Rt).} Given $f: \ibox X \times A \to X \times
  B$ and $g: B \to B'$ we have
  \[
  \Tr_{A,B'}^X(\xymatrix@1{
    \ibox X \times A \ar[r]^-f 
    &
    X \times B
    \ar[r]^-{X \times g}
    &
    X \times B'
  }) 
  = 
  (\xymatrix@1{
    A \ar[rr]^-{\Tr_{A,B}^X(f)} && B \ar[r]^-g & B'
  }).
  \]
\item {\bf Sliding (Sl).} Given $f: \ibox X \times A \to X' \times B$ and
  $g: X' \to X$ we have
  \[
  \Tr_{A,B}^X(\xymatrix@1{
    \ibox X \times A \ar[r]^-f 
    & 
    X' \times B
    \ar[r]^-{g \times B}
    &
    X \times B
  })
  =
  \Tr_{A,B}^{X'}(\xymatrix@1{
    \ibox X' \times A \ar[r]^{\ibox g \times A}
    &
    \ibox X \times A
    \ar[r]^-f
    &
    X ' \times B
  }).
  \]
\end{enumerate}

\begin{remark} \label{rem:comono}
  The generalization for a symmetric monoidal category $(\catC, \otimes, I,
  c)$ equipped with a pointed endofunctor $\ibox: \cat C \to \catC$
   requires the assumption that $\ibox$ is \emph{comonoidal}, i.e., equipped with a morphism
  $m_I: \ibox I \to  I$ and a natural transformation $m_{X,Y}: \ibox (X
  \otimes Y) \to \ibox X \otimes \ibox Y$ satisfying the usual coherence
  conditions. In fact, in the formulation of Vanishing (V2) we used
  that in every category the product $\times$ is comonoidal via $m_{X,Y} =
  \can$. 
\end{remark}

\iffull
\begin{example}
\tlnt{well \dots do we have any additional ones?} Suitably modified tracing on relations? Abramsky et al. on nuclear traces?
\end{example}
\fi

\begin{construction}\label{con:inv}
  \begin{enumerate}
  \item Let $(\catC, \ibox, \Tr)$ be a guarded traced category. Define a guarded
    fixpoint operator $\dace: \catC(\ibox X \times A, X) \to \catC(A,X)$ by
    \[
    f^{\dace} = \Tr_{A,X}^X(\xymatrix@1{
      \ibox X \times A \ar[r]^-{\langle f, f\rangle} & X \times X
    }): A \to X.
    \]
  \item Conversely, suppose $(\catC,\ibox, \dagger)$ is  a guarded
    fixpoint category. Define $\tragger_{A,B}^X:
    \catC(\ibox X \times A, X \times B) \to \catC(A,B)$ by setting for
    every $f: \ibox X \times A \to X \times B$
    \[
    \tragger_{A,B}^X (f) =
    (\xymatrix@1{
      A \ar[rr]^-{\langle \sol{(\prl \cdot f)}, A\rangle}
      &&
      X \times A
      \ar[rr]^-{p_X \times A}
      &&
      \ibox X \times A
      \ar[r]^-f
      &
      X \times B
      \ar[r]^-{\prr}
      &
      B 
    }). 
    \]
  \end{enumerate}
\end{construction}

\begin{theorem}\label{thm:tr}
\begin{enumerate}
  \item Whenever $(\catC,\ibox, \Tr)$ is a guarded traced category,  $(\catC,\ibox,\dace)$ is a guarded Conway
  category. Furthermore, $\Tr_{\dace}$ is the original operator $\Tr$.
  \item Whenever $(\catC,\ibox, \dagger)$ is a guarded Conway category,  $(\catC,\ibox,\tragger)$ is guarded traced. Furthermore, $\dagger_{\tragger}$ is the original operator $\dagger$.
  \end{enumerate}
\end{theorem}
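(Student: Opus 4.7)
The plan is to follow Hasegawa's proof strategy for the unguarded correspondence~\cite{h97,h99}, adapting each step to track where the delay $\ibox$ and the natural transformation $p$ need to appear. Both directions of the correspondence will be established separately, and the two ``round-trip'' equations $\Tr_{\dace}=\Tr$ and $\dagger_{\tragger}=\dagger$ will be settled by direct unfolding.

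First I tackle part (2), which is the easier direction. Given a guarded Conway category $(\catC,\ibox,\dagger)$, I verify each of the trace axioms for $\tragger$. Left-tightening (Lt) is a direct consequence of the parameter identity (P), which governs how $\sol{(-)}$ interacts with parameter substitution, and the fact that $\langle \prl\cdot f,A\rangle$ only interacts with the parameter slot. Right-tightening (Rt) is immediate from the definition, since the postcomposition with $X\times g$ commutes past the structural morphisms and composes with $\prr$. Sliding (Sl) is the crucial translation of the composition identity (C): the loop variable is ``renamed'' along $g$ by (C), and on the output $\prr$-component the two sides agree. Yanking (Y) becomes a direct computation: for $c:\ibox X\times X\to X\times \ibox X$, $(\prl\cdot c)^{\dagger}$ is the dagger of the projection $\ibox X\times X\to \ibox X$, which by the fixpoint identity is just $p_X$, and then $\prr\cdot c\cdot(p\times X)\cdot\langle p,X\rangle$ simplifies to $p_X$. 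Vanishing (V1) is almost trivial, since the only morphism of type $\ibox 1\to 1$ is $p_1$ (or rather, the projection factor disappears entirely). The main obstacle of this half is Vanishing (V2), and here the Beki\v{c} identity of Proposition~\ref{prop:bekic} is exactly what is needed: iterating the trace over $X$ and then $Y$ corresponds to pairing the two fixpoints, while tracing over $X\times Y$ simultaneously corresponds to the simultaneous fixpoint $\sol{(\prl\cdot\langle f,g\rangle\cdot(\can\times A))}$; Beki\v{c} provides the bridge. Finally, Superposing (S) is routine from naturality. The round-trip $\dagger_{\tragger}=\dagger$ unfolds to $f\cdot(p\times A)\cdot\langle\sol f,A\rangle$, which equals $\sol f$ by the fixpoint identity.

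For part (1), starting from a guarded trace, I first check that $\dace$ is a guarded fixpoint operator, i.e.~satisfies~\refeq{eq:fixp}. This follows by writing $\sol f^{\dace}=\Tr(\langle f,f\rangle)$ and computing $f\cdot(p\times Y)\cdot\langle f^{\dace},Y\rangle$: using sliding along $f:\ibox X\times Y\to X$ with itself in the ``state'' slot and yanking, one recovers $f^{\dace}$ as in the unguarded proof. Parameter identity follows from left-tightening, and the simplified composition identity follows from sliding (one moves $g:Z\to X$ between the output and the input of the loop). The double dagger identity follows from vanishing (V2) combined with sliding on $\Delta$: the trace of a self-looped morphism with two $\ibox X$-slots can be computed either as the iterated trace (yielding $f^{\dagger\dagger}$) or as the simultaneous trace along $X\times X$ reduced via $\Delta$.

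The round-trip $\Tr_{\dace}=\Tr$ is the step where one must be most careful. I expand
\[
\Tr_{\dace}(f)=\prr\cdot f\cdot(p\times A)\cdot\langle \Tr(\langle \prl\cdot f,\prl\cdot f\rangle),A\rangle
\]
and recognize the inner trace as a trace of a ``doubled'' morphism. Using superposing, vanishing (V2) applied to the reshuffled composite, and then sliding along $\langle X,X\rangle$/yanking, this collapses to $\Tr(f)$. The main obstacle throughout is carrying out the vanishing and sliding manipulations in the guarded setting: every ``identity-in-disguise'' in the unguarded proof now has to be mediated through $p$ or through $\can:\ibox(X\times Y)\to\ibox X\times\ibox Y$, and one must check that the required naturality and coherence still go through without any extra assumption on $\ibox$. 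I expect most of these checks to be diagrammatic and mechanical; the only genuinely delicate point is the use of Beki\v{c} in Vanishing (V2), which is why Section~\ref{sec:bekic} was a necessary prerequisite.
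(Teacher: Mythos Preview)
Your proposal is essentially correct and follows the same route as the paper's proof (Hasegawa's strategy, axiom by axiom, with Beki\v{c} supplying Vanishing (V2)); the correspondences you list---(P)$\leftrightarrow$(Lt), (C)$\leftrightarrow$(Sl), ($\dagger\dagger$)$\leftrightarrow$(V2)+(Sl)---are exactly those the paper exploits.

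Two remarks. First, your Yanking computation contains a slip: $\prl\cdot c:\ibox X\times X\to X$ is the second projection $\prr$ (not a projection to $\ibox X$), and its dagger is the \emph{identity} on $X$, not $p_X$; with $\sol{(\prl\cdot c)}=\id_X$ the formula $\prr\cdot c\cdot(p_X\times X)\cdot\langle\id_X,X\rangle$ indeed evaluates to $p_X$, but your $\langle p,X\rangle$ does not even typecheck. Second, the paper organizes the two delicate steps---the fixpoint identity for $\dace$ and the round-trip $\Tr_{\dace}=\Tr$---around an auxiliary lemma: for any $f:\ibox X\times A\to X\times B$, setting $h=\Tr^X_{A,X}(\Delta_X\cdot\prl\cdot f)$ one has $\Tr^X_{A,B}(f)=\prr\cdot f\cdot(p_X\times A)\cdot\langle h,A\rangle$. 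This lemma (proved once, using (Y), (S), (V2), (Sl), (Lt), (Rt)) makes both steps immediate; your sketch of the round-trip via ``superposing, vanishing (V2), sliding, yanking'' is in effect re-deriving this lemma inline, so you may find it cleaner to isolate it.
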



The proof details are similar to the unguarded case in Hasegawa~\cite{h99}. 
As the derivation of the guarded version of the Beki\v{c} identity in  Proposition~\ref{prop:bekic}  has already shown,  it is  not a completely automatic adaptation. We give a complete proof in Appendix~\ref{sec:dagtr} below.

The process requires some creativity at times.

Hasegawa  related uniformity of trace to uniformity of dagger and we
can do the same in the guarded setup. Recall that in
iteration theories uniformity (called \emph{functorial dagger implication})
plays an important role. On the one hand, this quasiequation implies
the so-called \emph{commutative identities}, an infinite set of equational
axioms that are added to the Conway axioms in order to yield a complete
axiomatization of fixpoint operators in domains. On the other hand,
most examples of iteration theories actually satisfy uniformity, and
so uniformity gives a convenient sufficient condition to verify that a
given Conway theory is actually an iteration theory. 


\begin{definition}
  A guarded trace operator $\Tr$ is called \emph{uniform} if for every
 $f: \ibox X \times A \to X \times B$, $f': \ibox X' \times
  A \to X' \times B$ and $h: X \to X'$,
  \[
  \vcenter{
    \xymatrix{
      \ibox X \times A \ar[r]^-f
      \ar[d]^{\hspace{1cm}\commu}_{\ibox h \times A}
      &
      X \times B
      \ar[d]^{h \times B}
      \\
      \ibox X' \times A \ar[r]_-{f'}
      &
      X' \times B
    }}
  \qquad\implies\qquad
  \Tr_{A,B}^X(f) = \Tr_{A,B}^{X'}(f'): A \to B.
  \]
  A \emph{uniform guarded traced category} is a guarded traced category $(\catC, \ibox, \Tr)$ where $\Tr$ is uniform.
\end{definition}

\begin{theorem}\label{thm:unif}
\begin{enumerate}
  \item Whenever $(\catC,\ibox, \Tr)$ is a uniform guarded traced category, $(\catC, \ibox, \dace)$ is a uniform guarded Conway
  category. 
\item Whenever $(\catC,\ibox, \dagger)$ is a uniform guarded Conway category, $(\catC,\ibox, \tragger)$ is a uniform guarded traced category. 
  \end{enumerate}
\end{theorem}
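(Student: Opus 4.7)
The plan is to prove each direction by reducing one notion of uniformity to the other via the concrete formulas in Construction \ref{con:inv}. Throughout, I assume Theorem \ref{thm:tr}, so that $\dace$ is a guarded Conway dagger and $\tragger$ is a guarded trace, and the only new work is preservation of the quasiequational uniformity axiom.

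For part (1), assume $\Tr$ is uniform and consider $f: \ibox X \times Y \to X$, $g: \ibox X' \times Y \to X'$ and $h: X \to X'$ with $h \cdot f = g \cdot (\ibox h \times Y)$. By Construction \ref{con:inv}, $f^{\dace} = \Tr_{Y,X}^X(\langle f, f\rangle)$ and $g^{\dace} = \Tr_{Y,X'}^{X'}(\langle g, g\rangle)$. Using right-tightening (Rt), $h \cdot f^{\dace} = \Tr_{Y,X'}^X(\langle f, h \cdot f\rangle)$. Now the square
\[
\vcenter{\xymatrix{
\ibox X \times Y \ar[r]^-{\langle f, h \cdot f\rangle} \ar[d]_{\ibox h \times Y} & X \times X' \ar[d]^{h \times X'} \\
\ibox X' \times Y \ar[r]_-{\langle g, g\rangle} & X' \times X'
}}
\]
commutes: both paths equal $\langle h \cdot f, h \cdot f\rangle$ by hypothesis. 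Uniformity of $\Tr$ then yields $\Tr_{Y,X'}^X(\langle f, h \cdot f\rangle) = \Tr_{Y,X'}^{X'}(\langle g, g\rangle) = g^{\dace}$, which is exactly what we need.

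For part (2), assume $\dagger$ is uniform and take $f: \ibox X \times A \to X \times B$, $f': \ibox X' \times A \to X' \times B$ and $h: X \to X'$ with $(h \times B) \cdot f = f' \cdot (\ibox h \times A)$. Postcomposing with $\prl$ gives $h \cdot (\prl \cdot f) = (\prl \cdot f') \cdot (\ibox h \times A)$, so uniformity of $\dagger$ yields $h \cdot \sol{(\prl \cdot f)} = \sol{(\prl \cdot f')}$. Starting from the defining formula of $\tragger_{A,B}^{X'}(f')$, I would substitute this equality and then push $h$ through $p_{X'}$ using naturality ($p_{X'} \cdot h = \ibox h \cdot p_X$), apply the square $(h \times B) \cdot f = f' \cdot (\ibox h \times A)$, and finally cancel $(h \times B)$ against $\prr$; the result is exactly $\tragger_{A,B}^X(f)$.

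Both directions are essentially diagrammatic book-keeping; no analogue of the Bekič-style derivation is needed here. The only place that requires any care is the cascade of rewrites in part (2), which has to be organised so that naturality of $p$, the compatibility square for $h$, and the projection $\prr$ are each applied in the right order. I do not anticipate a real obstacle, so the main effort in writing up will be to present these two short chains of equalities cleanly enough to make the role of each axiom (right-tightening in part (1); naturality of $p$ in part (2)) transparent.
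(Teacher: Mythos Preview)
Your proposal is correct, and part~(1) is exactly the paper's argument (the paper writes $(X\times h)\cdot\langle f,f\rangle$ where you write $\langle f,h\cdot f\rangle$, but these are the same morphism).

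For part~(2) there is a minor but genuine difference. The paper does not work with the defining formula of $\tragger$ directly; instead it invokes Lemma~\ref{lem:hTr}, which rewrites $\tragger_{A,B}^X(f)$ as $\prr\cdot\sol{(f\cdot(\ibox\prl\times A))}$. It then applies uniformity of $\dagger$ with respect to $h\times B:X\times B\to X'\times B$ (not $h$ itself) to the pair $f\cdot(\ibox\prl\times A)$ and $f'\cdot(\ibox\prl\times A)$, obtaining $(h\times B)\cdot\sol{(f\cdot(\ibox\prl\times A))}=\sol{(f'\cdot(\ibox\prl\times A))}$; postcomposing with $\prr$ finishes immediately. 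Your route instead applies uniformity to $\prl\cdot f$ and $\prl\cdot f'$ with respect to $h$, and then has to thread $h$ through the remaining pieces of the defining formula using naturality of $p$ and the hypothesis square. Both arguments are short and valid; yours trades the appeal to Lemma~\ref{lem:hTr} for an explicit use of naturality of $p$, which is arguably more self-contained.
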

The proof is in Appendix~\ref{app:unif}.


\begin{remark}
  Actually, Hasegawa proved a slightly stronger statement concerning
  uniformity than what we stated in Theorem~\ref{thm:unif}; he showed
  that a Conway operator is uniform w.r.t.~any fixed morphism $h: X
  \to X'$ (i.e.\ satisfies uniformity just for $h$) iff the
  corresponding trace operator is uniform w.r.t. this morphism $h$.  The proof is somewhat more complicated and in our guarded setting 
   we leave this as an exercise to the
  reader. 
\end{remark}

Finally, let us note that the bijective correspondence between guarded
Conway operators and guarded trace operators established in
Theorem~\ref{thm:tr} yields an isomorphism of the \mbox{(2-)}categories of
(small) guarded Conway categories and guarded traced (cartesian)
categories. The corresponding notions of morphisms are, of course, as
expected:

\begin{definition}
  \begin{enumerate}
  \item $F: (\C,\ibox^\C,\dagger) \to (\D,\ibox^\D,\ddagger)$ is a morphism of guarded Conway categories whenever $F:
    \C \to \D$ is a finite-product-preserving functor satisfying
    \begin{equation}\label{eq:sat}
      \vcenter{
        \xymatrix{
          \C 
          \ar[r]^{\ibox^\C}
          \ar[d]^{\hspace{0.46cm}\commu}_F
          &
          \C
          \ar[d]^F
          \\
          \D \ar[r]_-{\ibox^\D}
          &
          \D
        }}
      \quad
      \text{and}
      \quad
      p^\D_{FX} = F(p_X^\C): FX \to \ibox^\D FX = F(\ibox^C X),
    \end{equation}
    and  preserving dagger, i.e., for every $f: \ibox X \times A \to X$ we
    have
    \[
    F(\sol f) = (\xymatrix@1{
      \ibox^\D FX \times FA \cong F(\ibox^\C X \times A) \ar[r]^-{Ff} & FX
    })^\ddagger.
    \]
  \item A morphism $F: (\C,\ibox^\C,\Tr_{\C}) \to (\D,\ibox^{\D},\Tr_{\D})$ of guarded traced categories is a finite-product-preserving $F: \C \to \D$ 
    satisfying~\refeq{eq:sat} above and preserving the trace
    operation: for every $f: \ibox^\C X \times A \to X \times B$ in
    $\C$ we have
    \[
    F(\Tr_{\C\,A,B}^{\;\;\; X}(f)) = \Tr_{\D\,FA,FB}^{\quad\! FX}(\xymatrix@1{
      \ibox^\D FX \times FA \cong F(\ibox^\C X \times A) \ar[r]^-{Ff}
      & F(X \times B) \cong FX \times FB
    }).
    \] 
  \end{enumerate}
\end{definition}

\begin{corollary}\label{cor:iso}
  The (2-)categories of guarded Conway categories and of guarded
  traced (cartesian) categories are isomorphic.
\end{corollary}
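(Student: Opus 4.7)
The plan is to bootstrap Theorem~\ref{thm:tr} from the object level up through 1-cells and 2-cells. At the level of objects, Theorem~\ref{thm:tr} already gives mutually inverse bijections $\dagger \mapsto \tragger$ and $\Tr \mapsto \dace$ between guarded Conway operators and guarded trace operators on a fixed $(\catC,\ibox)$. This is half the work: it determines a canonical bijection on the object classes of the two 2-categories. What remains is to show that this bijection is \emph{(2-)functorial}, i.e.\ transports morphisms to morphisms and 2-cells to 2-cells, respects identities and composition, and does so in both directions.

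For 1-cells, the crucial step is to prove that a finite-product-preserving $F:\C \to \D$ satisfying \refeq{eq:sat} preserves $\dagger$ if and only if it preserves the corresponding $\Tr = \tragger$. The direction ($\Rightarrow$) unfolds the definition of $\tragger$: given $f:\ibox^\C X \times A \to X \times B$, preservation of $\Tr$ reduces, using that $F$ preserves $\prl$, $\prr$, pairings and $p$, to the identity $F(\sol{(\prl \cdot f)}) = ((\prl \cdot Ff))^\ddagger$, which is exactly dagger-preservation applied to $\prl \cdot f$. The direction ($\Leftarrow$) uses the recipe $f^{\dace} = \Tr(\langle f, f \rangle)$ from Construction~\ref{con:inv}: given $f: \ibox^\C X \times A \to X$, we have $F\langle f, f\rangle = \langle Ff, Ff\rangle$ (up to the canonical $F(X \times X) \cong FX \times FX$), so trace-preservation on $\langle f, f\rangle$ yields $F(f^\dace) = (Ff)^{\ddagger_\Tr}$. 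Identities and composition on the underlying functors are unchanged, so the bijection on 1-cells is strictly functorial in both directions.

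For 2-cells, the definitions of morphisms of guarded Conway and of guarded traced categories impose conditions only on 1-cells; 2-cells are simply natural transformations between such 1-cells, with no further compatibility required. Hence the hom-categories on both sides have identical 2-cells above corresponding 1-cells, and horizontal/vertical composition is inherited from $\mathsf{Cat}$. This makes the 2-functoriality automatic once 1-cells are handled.

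The main obstacle is the morphism-correspondence step, and specifically the careful bookkeeping of canonical product isomorphisms $F(X \times Y) \cong FX \times FY$ when expanding $F(\sol f)$ and $F(\tragger(f))$. Because $F$ is assumed to preserve finite products (not merely up to isomorphism in a weak sense), these isomorphisms are coherent and the required identities reduce to routine diagram chases in $\D$; no new categorical ingredient beyond what already appears in the proofs of Theorems~\ref{thm:tr} and~\ref{thm:unif} is needed. Putting these pieces together yields the claimed isomorphism of 2-categories.
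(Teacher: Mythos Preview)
Your proposal is correct and follows essentially the same approach as the paper: the bijection on objects comes from Theorem~\ref{thm:tr}, and the key step is showing that a finite-product-preserving $F$ satisfying~\refeq{eq:sat} preserves $\dagger$ iff it preserves $\Tr$, with each direction following by unfolding the respective formula from Construction~\ref{con:inv}. The only minor difference is that for the direction ($\dagger$-preserving $\Rightarrow$ $\Tr$-preserving) the paper invokes Lemma~\ref{lem:hTr} (giving $\Tr^X(f) = \prr \cdot \sol{(f \cdot (\ibox\prl \times A))}$) rather than the raw definition of $\tragger$; this makes the calculation marginally shorter but is not a different idea.
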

The proof is in Appendix~\ref{app:iso}.

\section{Conclusions and Future Work}
\label{sec:conc}

We have made the first steps in the study of equational
properties of guarded fixpoint operators popular in the recent literature~\cite{Nakano00:lics,Nakano01:tacs,AppelMRV07:popl,BentonT09:tldi,BirkedalMSS12:lmcs,KrishnaswamiB11:lics,KrishnaswamiB11:icfp,BirkedalMSS12:lmcs,AtkeyMB13:icfp,Litak14:trends}. We began with an extensive list of examples, including some not discussed so far as instances of delay endofunctors---e.g., Example \ref{ex:cats}.(\ref{ex:cpo}) or completely iterative theories in Section \ref{sec:cim}. Furthermore, we
formulated the four Conway axioms and the uniformity property in analogy to their unguarded counterparts and we showed their
soundness w.r.t.~the models discussed in Section \ref{sec:fix}. In
particular, Theorem \ref{thm:unique} proved that our axioms hold in all categories with
a unique guarded dagger. In Theorem \ref{thm:tr}, we have a generalization
of a result by Hasegawa for ordinary fixpoint operators: we proved
that to give a (uniform) guarded fixpoint operator satisfying the
Conway axioms is equivalent to giving a (uniform) guarded trace
operator on the same category.

Our paper can be considered as a work in progress report. 
The long-term goal is to  arrive at completeness results similar to the ones for iteration theories. We do not claim that the axioms we presented are complete. In the unguarded setting, completeness is obtained by adding to the Conway axioms an infinite set of equational axioms called the \emph{commutative identities} ~\cite{be93,sp00}. We did not consider those here, but we considered the quasi-equational property of uniformity which implies the commutative identities and is satisfied in most models of interest. Only further research can show whether this property can ensure completeness in the guarded setup or one  needs to postulate stronger ones.


Let us recall  Open Problem \ref{prob:letdd} regarding soundness of ($\dagger\dagger$)  in the general setting of Crole and Pitts \cite{Crole:phd,cp92} and intriguingly complex status of guarded dinaturality leading to Open Problems \ref{prob:dinat} and \ref{prob:donetwo}.


It would also be interesting to study further examples of guarded traced monoidal categories which are not ordinary traced monoidal categories and which do not arise from guarded Conway categories. We have obtained some such examples but more work is needed to develop a full-blown theory. We postpone a detailed discussion to future work. 

\subparagraph*{Acknowledgements}

We would like to thank the anonymous referees whose comments have helped to improve the presentation of our paper. Thanks are also due to Ale\v{s} Bizjak for providing us with Example \ref{ex:bizjak}.
Besides, we would like to acknowledge an inspiring
discussion with Erwin R.~Catesbeiana  on \mbox{(un-)}productive \mbox{(non-)}termination. Finally, we have to credit in general William and Arthur for their very insistence on modal undertones in modern modelling of this major phenomenon (cf.~\cite{AppelMRV07:popl,wiki:pirates}).\smnote{I inserted the citation so that readers who are not Paul Andr\'e get the joke, too.} 
\tlnt{I like in general (in very model modern general) the idea of hidden jokes that only some initiated people would get. But if we're making this explicit, we should also quote the very original libretto, which I did now.}



\bibliographystyle{fundam}
\bibliography{intmod,continuations,guarded-dagger}

\begin{thebibliography}{10}

\bibitem{aamv03}
Aczel, P., Ad\'amek, J., Milius, S., Velebil, J.: Infinite Trees and Completely
  Iterative Theories: A Coalgebraic View,
\newblock \emph{Theoret.~Comput.~Sci.}, \textbf{300}, 2003, 1--45.

\bibitem{am06}
Ad\'{a}mek, J., Milius, S.: Terminal Coalgebras and Free Iterative Theories,
\newblock \emph{Inform.~and Comput.}, \textbf{204}, 2006, 1139--1172.

\bibitem{AppelMRV07:popl}
Appel, A.~W., Melli{\`e}s, P.-A., Richards, C.~D., Vouillon, J.: A very modal
  model of a modern, major, general type system,
\newblock \emph{POPL} (M.~Hofmann, M.~Felleisen, Eds.), ACM, 2007,
\newblock ISBN 1-59593-575-4.

\bibitem{AtkeyMB13:icfp}
Atkey, R., McBride, C.: Productive Coprogramming with Guarded Recursion,
\newblock \emph{Proceedings of the 18th ACM SIGPLAN International Conference on
  Functional Programming}, ICFP '13, ACM, New York, NY, USA, 2013,
\newblock ISBN 978-1-4503-2326-0.

\bibitem{badouel89}
Badouel, E.: Terms and infinite trees as monads over a signature,
\newblock \emph{Lecture Notes Comput.~Sci.}, \textbf{351}, 1989, 89--103.

\bibitem{BentonT09:tldi}
Benton, N., Tabareau, N.: Compiling functional types to relational
  specifications for low level imperative code,
\newblock \emph{TLDI} (A.~Kennedy, A.~Ahmed, Eds.), ACM, 2009,
\newblock ISBN 978-1-60558-420-1.

\bibitem{BirkedalM13:lics}
Birkedal, L., M{\o}gelberg, R.~E.: Intensional Type Theory with Guarded
  Recursive Types qua Fixed Points on Universes,
\newblock \emph{Proceedings of LICS}, 2013.

\bibitem{BirkedalMSS12:lmcs}
Birkedal, L., M{\o}gelberg, R.~E., Schwinghammer, J., St{\o}vring, K.: First
  Steps in Synthetic Guarded Domain Theory: Step-Indexing in the Topos of
  Trees,
\newblock \emph{Logical Methods in Computer Science}, \textbf{8}(4:1), 2012,
  1--45.

\bibitem{be93}
Bloom, S.~L., \'Esik, Z.: \emph{Iteration Theories: the equational logic of
  iterative processes},
\newblock EATCS Monographs on Theoretical Computer Science, Springer, 1993.

\bibitem{Crole:phd}
Crole, R.~L.: \emph{{Programming Metalogics with a Fixpoint Type}},
\newblock Ph.D. Thesis, Computer Laboratory, University of Cambridge, 1991.

\bibitem{cp92}
Crole, R.~L., Pitts, A.~M.: New Foundations for Fixpoint Computations:
  FIX-Hyperdoctrines and FIX-Logic,
\newblock \emph{Inform.~and Comput.}, \textbf{98}(2), 1992, 171--210.

\bibitem{DiGianantonioM04:fossacs}
Di~Gianantonio, P., Miculan, M.: Unifying Recursive and Co-recursive
  Definitions in Sheaf Categories,
\newblock in: \emph{Foundations of Software Science and Computation Structures}
  (I.~Walukiewicz, Ed.), vol. 2987 of \emph{Lecture Notes in Computer Science},
  Springer Berlin / Heidelberg, 2004,
\newblock ISBN 978-3-540-21298-0,  136--150,
\newblock 10.1007/978-3-540-24727-2\_11.

\bibitem{elgot75}
Elgot, C.~C.: Monadic Computation and Iterative Algebraic Theories,
\newblock \emph{Logic Colloquium '73} (H.~E. Rose, J.~C. Sheperdson, Eds.), 80,
  North-Holland Publishers, Amsterdam, 1975.

\bibitem{ebt78}
Elgot, C.~C., Bloom, S.~L., Tindell, R.: On the algebraic structure of rooted
  trees,
\newblock \emph{J.~Comput.~System Sci.}, \textbf{16}, 1978, 362--399.

\bibitem{e88}
\'Esik, Z.: Independence of the Equational Axioms for Iteration Theories,
\newblock \emph{J.~Comput.~System Sci.}, \textbf{36}, 1988, 66--76.

\bibitem{Esik:weighted}
\'Esik, Z.: Fixed Point Theory,
\newblock in: \emph{Handbook of Weighted Automata} (M.~Droste, W.~Kuich,
  H.~Vogler, Eds.), Monographs in Theoretical Computer Science. An EATCS
  Series, Springer Berlin Heidelberg, 2009,
\newblock ISBN 978-3-642-01491-8,  29--65.

\bibitem{wiki:pirates}
Gilbert, W.~S., Sullivan, A.: The Pirates of Penzance, or The Slave of Duty,
  1879.

\bibitem{h97}
Hasegawa, M.: Recursion from Cyclic Sharing: Traced Monoidal Categories and
  Models of Cyclic Lambda Calculi,
\newblock \emph{Proc.~3rd International Conference on Typed Lambda Calculi and
  Applications}, 1210, Springer-Verlag, 1997.

\bibitem{h99}
Hasegawa, M.: \emph{Models of Sharing Graphs: A Categorical Semantics of
  \textsf{let} and \textsf{letrec}},
\newblock Distinguished Dissertation Series, Springer, 1999.

\bibitem{jsv96}
Joyal, A., Street, R., Verity, D.: Traced Monoidal Categories,
\newblock \emph{Math.~Proc.~Cambridge Philos.~Soc.}, \textbf{119}(3), 1996,
  447--468.

\bibitem{KrishnaswamiB11:icfp}
Krishnaswami, N.~R., Benton, N.: A semantic model for graphical user
  interfaces,
\newblock \emph{ICFP} (M.~M.~T. Chakravarty, Z.~Hu, O.~Danvy, Eds.), ACM, 2011,
\newblock ISBN 978-1-4503-0865-6.

\bibitem{KrishnaswamiB11:lics}
Krishnaswami, N.~R., Benton, N.: Ultrametric Semantics of Reactive Programs,
\newblock \emph{LICS}, IEEE Computer Society, IEEE Computer Society, 2011,
\newblock ISBN 978-0-7695-4412-0.

\bibitem{Litak14:trends}
Litak, T.: Constructive modalities with provability smack,
\newblock in: \emph{Leo Esakia on duality in modal and intuitionistic logics}
  (G.~Bezhanishvili, Ed.), vol.~4 of \emph{Outstanding Contributions to Logic},
  Springer, 2014,  179--208.

\bibitem{manes03}
Manes, E.~G.: \emph{Handbook of Algebra}, vol.~3, chapter Monads of Sets,
\newblock Elsevier, 2003,  67--153.

\bibitem{m05}
Milius, S.: Completely Iterative Algebras and Completely Iterative Monads,
\newblock \emph{Inform.~and Comput.}, \textbf{196}, 2005, 1--41.

\bibitem{ml13}
Milius, S., Litak, T.: Guard Your Daggers and Traces: On The Equational
  Properties of Guarded \mbox{(Co-)}recursion,
\newblock \emph{Proc. Fixed Points in Computer Science (FICS'13)} (D.~Baelde,
  A.~Carayol, Eds.), Electron.~Proc.~Theoret.~Comput.~Sci., 2013.

\bibitem{milner89}
Milner, R.: \emph{Communication and Concurrency},
\newblock International Series in Computer Science, Prentice Hall, 1989.

\bibitem{mulry94}
Mulry, P.~S.: Lifting Theorems for {K}leisli Categories,
\newblock \emph{Proc.~Mathematical Foundations of Programming Semantics
  (MFPS'93)} (S.~Brookes, M.~Main, A.~Melton, M.~Mislove, D.~Schmidt, Eds.),
  802, Springer, 1994.

\bibitem{Nakano00:lics}
Nakano, H.: A Modality for Recursion,
\newblock \emph{LICS}, IEEE Computer Society, 2000,
\newblock ISBN 0-7695-0725-5.

\bibitem{Nakano01:tacs}
Nakano, H.: Fixed-Point Logic with the Approximation Modality and Its {K}ripke
  Completeness,
\newblock \emph{TACS} (N.~Kobayashi, B.~C. Pierce, Eds.), 2215, Springer, 2001,
\newblock ISBN 3-540-42736-8.

\bibitem{sp00}
Simpson, A., Plotkin, G.~D.: Complete axioms for categorical fixed-point
  operators,
\newblock \emph{Proc.~15th Symposium on Logic in Computer Science (LICS'00)},
  IEEE Computer Society, 2000.

\bibitem{Stefanescu87a}
Stefanescu, G.: On Flowchart Theories. I. The Deterministic Case,
\newblock \emph{J. Comput. Syst. Sci.}, \textbf{35}(2), 1987, 163--191.

\end{thebibliography}


\appendix

\section{Details for Example~\ref{ex:cats}.(\ref{ex:presheaves})}
\label{sec:appA}

First observe that $\ibox X$ is clearly a presheaf: for every $w' \leq w$ there exists a canonical morphism $\ibox X(w) = \lim_{v < w} X(v)  \to \lim_{v < w'} X(v) = \ibox X(w')$ induced by the universal property of the limit in the codomain; the functoriality easily follows from the uniqueness.

Next we define $\ibox$ on a morphism $f: X \to Y$ componentwise: for every $w$, $(\ibox f)_w$ is the unique morphism such that the following equations hold:
\[
\pi_v \cdot (\ibox f)_w = f_v \cdot \pi_v, \qquad (v < w), 
\]
where $\pi_v: X(w) = \lim_{v < w} X(v) \to X(v)$ denotes the limit projection. To see that $(\ibox f)_w$ is natural in $w$ it suffices to show that for any $w > w'$ the corresponding naturality square commutes when extended by the projection $\pi'_v: \ibox Y(w') \to Y(v)$ for every $v < w'$:
\[
\xymatrix@-.5pc{
  \ibox X(w)
  \ar[rrr]^-{(\ibox f)_w}
  \ar[dd]_{\ibox X(w > w')}
  \ar[rd]^{\pi_v}
  &&&
  \ibox Y (w)
  \ar[dd]^{\ibox Y(w >w')}
  \ar[ld]_-{\pi_v}
  \\
  &
  X(v)
  \ar[r]^-{f_v}
  \ar@{}[ru]|\commu
  \ar@{}[rd]|\commu
  \ar@{}[l]|(.6)\commu
  &
  Y(v)
  \ar@{}[r]|(.6)\commu
  &
  \\
  \ibox X(w')
  \ar[rrr]_-{(\ibox f)_{w'}}
  \ar[ru]^-{\pi_v'}
  &&&
  \ibox Y(w')
  \ar[lu]_-{\pi_v'}
}
\]
A routine calculation then shows that $\ibox: \catC \to \catC$ is functorial.

The point $p: \Id \to \ibox$ is given componentwise as the unique morphism $(p_X)_w: X(w) \to \ibox X(w)$ such that $\pi_v \cdot (p_X)_w = X(w > v)$ for all $v < w$. Two easy routine calculations using the definitions of $\ibox X$ and $\ibox$ on morphisms, respectively, show that each component $p_X$ is natural in $w$ and that $p$ is natural in $X$. 

Let us now turn to the guarded fixpoint operator $\dagger$. We first prove simultaneously that each $\sol f_w$ is well-defined and that $\sol f$ is a morphism of $\catC$, i.e., $\sol f_w$ is natural in $w$. This is done by induction on $(W, \leq)$ (note that we do not have to distinguish the base case and induction step here). Fix any $w \in W$ and assume that $\sol f_v$ is well-defined for any $v < w$ and that the naturality condition $\sol f_{v'} \cdot Y(v > v') = X (v > v') \cdot \sol f_v$ holds for any $v' < v$ which are smaller than $w$. (Note that for a minimal $w \in W$ this holds trivially.) The latter naturality condition implies the cone property for $\sol f_v \cdot Y(w > v)$ inducing $k: Y(w) \to \ibox X(w)$ so that $\sol f_w$ is well-defined. We proceed to showing the naturality condition for any $w > w'$ using the following diagram (with $k'$ induced by the cone $\sol f_v \cdot Y(w' > v)$):
\[
\xymatrix@C+2pc{
  Y(w)
  \ar[d]_{Y(w > w')}
  \ar[r]^-{\langle k, Y(w)\rangle}
  \ar@{}[rd]_(.35){\langle (\ast), \commu\rangle}
  &
  \ibox X (w) \times Y(w) 
  \ar[r]^-{f_w}
  \ar[d]|{\ibox X(w > w') \times Y(w > w')}
  \ar@{}[rd]|{\commu}
  &
  X(w)
  \ar[d]^{X(w > w')}
  \ar@{<-} `u[l] `[ll]_{\sol f_w}^{\commu} [ll]
  \\
  Y(w')
  \ar[r]_-{\langle k', Y(w')\rangle}
  &
  \ibox X(w') \times Y(w')
  \ar[r]_-{f_{w'}}
  &
  X(w')
  \ar@{<-} `d[l] `[ll]^-{\sol f_{w'}}_-{\commu} [ll]
}
\] 
(Note that $\langle (\ast), \commu\rangle$ indicates that the right-hand product component of that part obviously commutes and the left-hand part, called $(\ast)$ is considered further.)
Part $(\ast)$ is seen commutative by extending it with the limit projection $\pi_v: Y(w') \to Y(v)$ for every $v < w'$ and performing a routine calculation. (Note again that this covers the cases where $w$ or $w'$ are minimal and consequently $k$ and $k'$, respectively, are the unique morphisms with codomain $1$.)

We are ready to verify the commutativity of~\refeq{eq:fixp}. This is done componentwise by induction on $(W, \leq)$. Assume that for a given $w$ all components of~\refeq{eq:fixp} at $v < w$ commute. Then we obtain for the $w$-component of $\sol f$ the following diagram (where $k$ is again induced by the cone $\sol f_v \cdot Y(w > v)$ and $h$ by the cone $f_v \cdot ((p_X)_v \times Y(v)) \cdot (\pi_v \times Y(w > v)): \ibox X(w) \times Y(w) \to X(v)$ for all $v < w$):
\[
\xymatrix{
  Y(w)
  \ar[r]^-{\langle k, Y(w)\rangle}
  \ar[d]_{\langle k, Y(w), Y(w)\rangle}
  &
  \ibox X(w) \times Y(w)
  \ar[r]^-{f_w}
  \ar@{=}[ddr]
  \ar@{}[dr]|(.7){\commu}
  \ar@{}[dd]|{\langle \mathrm{(ii)}, \commu\rangle}
  &
  X(w)
  \ar@{<-} `u[l] `[ll]_{\sol f_w}^{\commu} [ll]
  \\
  \ibox X(w) \times Y(w) \times Y(w)
  \ar[ru]_(.6)*+{\labelstyle h \times Y(w)}
  \ar[d]_{f_w \times Y(w)}
  \ar@{}[ru]^-(.35)*+{\labelstyle\langle \mathrm{(i)}, \commu\rangle\quad}
  &&
  \\
  X(w) \times Y(w)
  \ar[rr]_-{(p_X)_w \times Y(w)}
  &&
  \ibox X(w) \times Y(w)
  \ar[uu]_{f_w}
}
\]
Note that we are done if $w$ is minimal since $\ibox X(w) = 1$ is the terminal object. Otherwise for part~(i) we extend with the limit projection $\pi_v$ for every $v < w$ to obtain the following diagram (its outside commutes by the induction hypothesis, hence, so does part~(i) extended by $\pi_v$):
\[
\xymatrix@-.5pc@C-.5pc{
  Y(v)
  \ar[ddd]_{\langle \sol f_v, Y(v)\rangle}
  \ar[rrr]^-{\sol f_v}
  &
  \ar@{}[rd]|{\commu}
  &&
  X(v) 
  \\
  &
  Y(w)
  \ar[lu]^-*+{\labelstyle Y(w > v)}
  \ar[r]^-{k}
  \ar[d]_{\langle k, Y(w)\rangle}
  \ar@{}[rd]|(.3){\mathrm{(i)}}
  \ar@{}[ld]_(.6){\commu}
  &
  \ibox X(w) 
  \ar[ru]_-{\pi_v}
  \\
  &
  \ibox X(w) \times Y(w)
  \ar[ld]^(.4)*+{\labelstyle\pi_v \times Y(w > v)}
  \ar[ru]_h
  \ar@{}[rr]|{\commu}
  &&
  \\
  X(v)\times Y(v)
  \ar[rrr]_-{(p_X)_v \times Y(v)}
  &&&
  \ibox X(v) \times Y(v) 
  \ar[uuu]_{f_v}
}
\]
For part~(ii) observe first that $(p_X)_v \cdot \pi_v = \ibox X (w > v)$; indeed, this follows by routine calculation extending both sides by the limit projection $\pi_u: \ibox X(v) \to X(u)$ for every $u < v$. Now we obtain the commutativity of part~(ii) by extending it with every limit projection $\pi_v$:
\[
\xymatrix{
  \ibox X(w) \times Y(w) 
  \ar[rr]^-h
  \ar[ddd]_{f_w}
  \ar[rdd]^\commu_(.15){\raisebox{0pt}[5pt]{\turnbox{-40}{$\labelstyle\ibox X(w > v) \times Y(w > v)$}}}
  \ar[rd]^-(.6)*+{\labelstyle\pi_v \times Y(w > v)}
  &&
  \ibox X(w)
  \ar[dd]^{\pi_v}
  \\
  &
  X(v) \times Y(v)
  \ar[d]^{(p_X)_v \times Y(v)}
  \ar@{}[r]^(.6)\commu
  &
  \\
  \ar@{}[r]^(.4)\commu_(.4){\text{(nat. of $f$)}}
  &
  \ibox X(v) \times Y(v)
  \ar[r]^-{f_v}
  &
  X(v)
  \\
  X(w)
  \ar[rru]_-{X(w > v)}
  \ar[rr]_-{(p_X)_w}
  &
  &
  \ibox X(w)
  \ar[u]_{\pi_v}
  \ar@{}[lu]_(.4)\commu^(.4){\text{(def. of $p$)}}
}
\]

It remains to prove that $\sol f$ is unique such that~\refeq{eq:fixp} commutes. Suppose that $s: Y \to X$ is such that $s = f \cdot (p_X \times Y) \cdot \langle s, Y\rangle$. Then we prove that $\sol f = s$ componentwise by induction on $(W, \leq)$. Assume that $s_v = \sol f_v$ holds for all $v < w$. This implies that $k$ is the morphism induced by the cone $s_v \cdot Y(w > v) = X(w > v) \cdot s_w: Y(w) \to X(v)$. Hence, for all $v < w$ we have
\[
\pi_v \cdot k = X(w > v) \cdot s_w = \pi_v \cdot (p_X)_w \cdot s_w
\] 
from which we conclude that $k = (p_X)_w \cdot s_w$. (In the special case where $w$ is minimal this equation holds since it is an equation between morphisms with codomain $\ibox X(w) = 1$.) Thus, we obtain
\[
\begin{array}{rclp{5cm}}
  \sol f_w & = & f_w \cdot \langle k, Y(w)\rangle & (def.~of $\sol f_w$) \\
  & = & f_w \cdot \langle (p_X)_w \cdot s_w, Y(w)\rangle & (since $k = (p_X)_w \cdot s_w$) \\
  & = & s_w & (since $s = f \cdot (p_X \times Y) \cdot \langle s, Y\rangle$).
\end{array}
\]
This completes the proof.

\section{Proof of Theorem~\ref{thm:tr}}
\label{sec:dagtr}

The proof of Theorem~\ref{thm:tr} proceeds in three steps:
\begin{enumerate}[1.]
\item We show that $\dagger_\Tr$ defined in Construction~\ref{con:inv}.1 is a guarded trace operator. 
\item We show that $\Tr_\dagger$ defined in Construction~\ref{con:inv}.2 satisfies the Conway axioms.
\item We show that the two constructions are mutually inverse, i.e. $\dagger_{Tr_\dagger} = \dagger$ and $\Tr_{\dagger_\Tr} = \Tr$. 
\end{enumerate}

In the first two sections we shall drop the subscripts and only write $\Tr$ and $\dagger$ in lieu of $\Tr_\dagger$ and $\dagger_\Tr$, respectively. The proof is an adaptation of Hasegawa's proof for ordinary traces and fixpoint operators in~\cite{h99}. 

\subsection{From trace to dagger}
We will now prove that the $\dagger$-operation defined in Construction~\ref{con:inv}.1 satisfies the Conway axioms. But before we need an analogue of the fixpoint identity for traces:
\begin{lemma}\label{lem:fptr}
  For every $f: \ibox X \times A \to X \times B$ we form
  \[
  h = \Tr_{A,X}^X\big(\xymatrix@1{
    \ibox X \times A \ar[r]^-f & X \times B \ar[r]^-\prl & X \ar[r]^-\Delta & X \times X
  }\big).
  \]
  Then we have
  \[
  \Tr_{A,B}^X (f) = \big(\xymatrix@1@C+.5pc{
      A \ar[r]^-{\langle h, A\rangle}
      &
      X \times A
      \ar[r]^-{p_X \times A}
      &
      \ibox X \times A
      \ar[r]^-f
      &
      X \times B
      \ar[r]^-{\prr}
      &
      B
      }\big).
  \]
\end{lemma}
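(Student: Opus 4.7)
The plan is to mimic Hasegawa's derivation of the corresponding identity for ordinary (unguarded) trace \cite{h99}, being careful to insert the unit $p_X$ at precisely the right places via the Yanking axiom (Y). The core strategy is to double the trace variable $X$ using Vanishing (V2), arranging the inner trace so that it evaluates to $h$, and so that the outer trace is forced to introduce $p_X$ via (Y).

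Concretely, the first step is to construct an auxiliary morphism
\[
g : \ibox X \times \ibox X \times A \longrightarrow X \times X \times B
\]
whose two $\ibox X$-inputs play distinct roles: the second is to be fed back through the "genuine" body of $f$ (so that tracing it out will produce $h$ via the definition $h = \Tr^X(\Delta \cdot \prl \cdot f)$), while the first is routed through the swap $c : \ibox X \times X \to X \times \ibox X$ that appears in (Y), so that tracing it out contributes the factor $p_X$. The main subtlety of this step is to choose $g$ so that after precomposing with $\can \times A$, the composite $g \cdot (\can \times A) : \ibox(X \times X) \times A \to X \times X \times B$ is \emph{equivalent} under the naturality axioms (Lt), (Rt), (Sl) to a morphism built directly from $f$ (viewed as having a single recursion variable $X \times X$ collapsed by the diagonal).

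Once $g$ is in place, the proof proceeds in two evaluations of the same double trace. On the one side, I would apply Vanishing (V2) to rewrite $\Tr^{X \times X}_{A,B}(g \cdot (\can \times A))$ as $\Tr^X_{A,B}(\Tr^X_{\ibox X \times A, X \times B}(g))$, then evaluate the inner trace: by the design of $g$, tracing out the second $\ibox X$ reproduces the fixpoint morphism $h$ (via the definition of $h$ together with (Lt) and (Rt)), leaving a remaining map of the form $\ibox X \times A \to X \times B$ that uses $h$ to plug the recursion variable. The outer trace of this remaining map then reduces, via (Y) applied to the first slot and (Rt) for the $B$-component, to $\prr \cdot f \cdot (p_X \times A) \cdot \langle h, A\rangle$.

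On the other side, one argues that the \emph{same} double trace equals simply $\Tr^X_{A,B}(f)$, by identifying $g \cdot (\can \times A)$ (after the diagonal collapse on $X \times X$) with $f$ itself; this uses Superposing (S) to handle the extra $B$-component and (Sl) combined with Right-tightening to push the diagonal through. Comparing both evaluations yields the claimed identity. The main obstacle, as anticipated, will be Phase~1: pinning down $g$ so that the inserted $\can$ is absorbed cleanly and the inner trace isolates exactly $h$. In the unguarded case (where $\ibox = \Id$ and $p_X = \id$) this is straightforward because morphisms can be slid freely across the trace; here (Sl) demands $\ibox h$ on one side and $h$ on the other, so the bookkeeping of $p$ and $\ibox$ must be done with care, and an auxiliary commutative diagram chase (using naturality of $p$ and $\can$) will likely be needed to close the argument.
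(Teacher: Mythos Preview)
Your ingredients are right—Yanking to manufacture $p_X$, (V2) for the double trace, Sliding against $\Delta_X$ to collapse $X\times X$ back to $X$—and these are exactly the axioms the paper invokes. The organization differs, though, and the paper's version sidesteps precisely the obstacle you flag. Rather than positing a single $g:\ibox X\times\ibox X\times A\to X\times X\times B$ up front and evaluating its double trace two ways, the paper computes $p_X\cdot h$ itself as a trace: pull $p_X$ inside by (Rt), replace it by $\Tr^X(c)$ via (Y), fuse this inner trace with the ambient one using (S), (Lt), (Rt), merge into a trace over $X\times X$ by (V2), and finally collapse via (Sl) against $\Delta_X$. This yields
\[
p_X\cdot h \;=\; \Tr^X_{A,\ibox X}\bigl(c\cdot(\ibox X\times(\prl\cdot f))\cdot(\Delta_{\ibox X}\times A)\bigr),
\]
after which a short (S)/(Lt)/(Rt) calculation gives $\prr\cdot f\cdot\langle p_X\cdot h,A\rangle = \Tr^X(\langle\prl\cdot f,\prr\cdot f\rangle)=\Tr^X(f)$ with no further double tracing. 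So one never has to \emph{guess} the right $g$: the rewrite chain dictates each intermediate morphism, and your two-sided evaluation would in effect amount to reverse-engineering this chain and packaging it as a single $g$. One minor technical point: in (V2) as stated the \emph{inner} trace runs over the \emph{first} variable, so your assignment ``first slot for Yanking, second for $h$'' has the roles swapped—easily fixed, but worth tracking when you write out the details.
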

\begin{proof}
  Let $c: \ibox X \times X \to X \times \ibox X$ denote the canonical isomorphism swapping components. Observe that we have 
  \begin{equation}\label{eq:swap}
    c \cdot (p_X \times X) \cdot \Delta_X = c \cdot \langle p_X, X\rangle = \langle X, p_X\rangle = (X \times p_X) \cdot \Delta_X
  \end{equation}
  and
  \begin{equation}\label{eq:cc}
    (X \times c) \cdot (c \times X) \cdot (\ibox X \times \Delta_X) = (\Delta_X \times \ibox X) \cdot c.
  \end{equation}
  Now we compute
  \[
  \begin{array}{rcl}
    p_X \cdot h & = & p_X \cdot \Tr_{A,X}^X (\Delta_X \cdot \prl \cdot f) \qquad\text{by def.~of $h$} 
    \\
    & \stackrel{(Rt)}{=} &
    \Tr_{A,\ibox X}^X((X \times p_X) \cdot \Delta_X \cdot \prl \cdot f) 
    \\
    & \stackrel{\refeq{eq:swap}}{=} &
    \Tr_{A,\ibox X}^X(c \cdot (p_X \times X) \cdot \Delta_X \cdot \prl \cdot f)
    \\
    & \stackrel{(Y)}{=} &
    \Tr_{A,\ibox X}^X(c \cdot (\Tr_{X, \ibox X}^X(c) \times X) \cdot \Delta_X \cdot \prl \cdot f)
    \\
    & \stackrel{(S)}{=} & 
    \Tr_{A,\ibox X}^X(c \cdot \Tr_{X \times X, \ibox X \times X}^X(c \times X) \cdot \Delta_X \cdot \prl \cdot f)
    \\
    & \stackrel{(Lt)}{=} &
    \Tr_{A,\ibox X}^X(c \cdot \Tr_{\ibox X \times A, \ibox X \times X}^X((c \times X) \cdot (\ibox X \times (\Delta_X \cdot \prl \cdot f))))
    \\
    & \stackrel{(Rt)}{=} &
    \Tr_{A,\ibox X}^X(\Tr_{\ibox X \times A, X \times \ibox X}^X((X \times c) \cdot (c \times X) \cdot (\ibox X \times (\Delta_X \cdot \prl \cdot f))))
    \\
    & \stackrel{\refeq{eq:cc}}{=} &
    \Tr_{A,\ibox X}^X(\Tr_{\ibox X \times A, X \times \ibox X}^X((\Delta_X \times \ibox X) \cdot c \cdot (\ibox X \times (\prl \cdot f))))
    \\    
    & \stackrel{(V2)}{=} &
    \Tr_{A,\ibox X}^{X\times X}((\Delta_X \times \ibox X) \cdot c \cdot (\ibox X \times (\prl \cdot f)) \cdot (\can \times A))
    \\
    & \stackrel{(Sl)}{=} &
    \Tr_{A,\ibox X}^X(c \cdot (\ibox X \times (\prl \cdot f)) \cdot (\can \times A) \cdot (\ibox (\Delta_X) \times A))
    \\
    & = &
    \Tr_{A,\ibox X}^X(c \cdot (\ibox X \times (\prl \cdot f)) \cdot (\Delta_{\ibox X} \times A))
  \end{array}
  \]
  Using this we can finish the proof:
  \[
  \begin{array}{rcl}
    \prr \cdot f \cdot \langle p_X \cdot h, A\rangle 
    & = & 
    \prr \cdot f \cdot ((p_X \cdot h) \times A) \cdot \Delta_A
    \\
    & \stackrel{(S)}{=} &
    \prr \cdot f \cdot \Tr_{A\times A, \ibox X \times A}^X((c \times A) \cdot (\ibox X \times (\prl \cdot f) \times A) \cdot (\Delta_{\ibox X} \times A \times A)) \cdot \Delta_A
    \\
    & \stackrel{(Lt)}{=} &
    \prr \cdot f \cdot \Tr_{A,\ibox X \times A}^X((c \times A) \cdot (\ibox X \times (\prl \cdot f) \times A) \cdot ß\underbrace{(\Delta_{\ibox X} \times A \times A) \cdot (\ibox X \times \Delta_A)}_{= \Delta_{\ibox X} \times \Delta_A})
    \\
    & \stackrel{(Rt)}{=} &
    \Tr_{A,B}^X((X \times (\prr\cdot f)) \cdot (c \times A)\cdot (\ibox X \times (\prl \cdot f) \times A) \cdot (\Delta_{\ibox X} \times \Delta_A))
    \\
    & = &
    \Tr_{A,B}^X(\langle \prl \cdot f, \prr \cdot f\rangle)
    \\
    & = & 
    \Tr_{A,B}^X (f),
  \end{array}
  \]
  which completes the proof of the lemma.
\end{proof}

We now verify the Conway axioms for the $\dagger$-operation from Construction~\ref{con:inv}.1. 

(1)~Fixpoint identity. Given $f: \ibox X \times A \to X$ we apply Lemma~\ref{lem:fptr} to $\langle f, f\rangle$; then 
\[
h = \Tr_{A,X}^X(\Delta_X \cdot \prl \cdot \langle f,f\rangle) = \Tr_{A,X}^X (\langle f, f\rangle) = \sol f
\]
and therefore we have
\[
\begin{array}{rclp{5cm}}
  \sol f & = & \Tr_{A,X}^X\langle f, f\rangle & by def.~of $\dagger$ \\
  & = & 
  \prr \cdot \langle f,f\rangle \cdot (p_X \times A) \cdot \langle h, A\rangle
  & by Lemma~\ref{lem:fptr} 
  \\
  & = & 
  f \cdot (p_X \times A) \cdot \langle \sol f, A\rangle & since $h = \sol f$.
\end{array}
\]

(2)~Parameter identity. Let $f: \ibox X \times A \to X$ and $h: A' \to A$. Then we have
\[
\begin{array}{rclp{5cm}}
  \sol{(f \cdot (\ibox X \times h))}
  & = & \Tr_{A',X}^X(\langle f \cdot (\ibox X \times h), f \cdot (\ibox X \times h)\rangle)
  & by def.~of $\dagger$ \\
  & = & \Tr_{A',X}^X(\langle f, f \rangle \cdot (\ibox X \times h))
  \\
  & \stackrel{(Lt)}{=} & 
  \Tr_{A,X}^X (\langle f,f \rangle) \cdot h 
  \\
  & = & \sol f \cdot h & by def.~of $\dagger$.
\end{array}
\]

(3)~Composition identity. Given $f: \ibox X \times A \to Y$ and $g: Y\to X$ we compute
\[
\begin{array}{rclp{5cm}}
  \sol{(g \cdot f)} & = & \Tr_{A,X}^X(\langle g \cdot f, g \cdot f\rangle) & by def.~of $\dagger$ \\
  & = & \Tr_{A,X}^X((X \times g) \cdot \langle g\cdot f, f\rangle) \\
  & \stackrel{(Rt)}{=} & g \cdot \Tr_{A,Y}^X(\langle g\cdot f, f\rangle) \\
  & = & g \cdot \Tr_{A,Y}^X((g \times X) \cdot \langle f, f\rangle) \\
  & \stackrel{(Sl)}{=} & g \cdot \Tr_{A,Y}^Y(\langle f,f \rangle \cdot (\ibox g \times A)) \\
  & = & g \cdot \Tr_{A,Y}^Y(\langle f \cdot (\ibox g \times A), f \cdot (\ibox g \times A)\rangle) \\
  & = & g \cdot \sol{(f \cdot (\ibox g \times A))} & by def.~of $\dagger$.
\end{array}
\]

(4)~Double dagger identity. Given $f: \ibox X \times \ibox X \times A \to X$ we have
\[
\begin{array}{rclp{5cm}}
  f^{\dagger\dagger} & = & \Tr_{A, X}^X(\langle \sol f, \sol f\rangle) & by def.~of $\dagger$ \\
  & = & \Tr_{A, X}^X(\Delta_X \cdot \Tr_{\ibox X \times A,X}^X(\langle f, f\rangle)) & by def.~of $\dagger$ \\
  & \stackrel{(Rt)}{=} & 
  \Tr_{A, X}^X(\Tr_{\ibox X \times A, X \times X}^X((X \times \Delta_X) \cdot \langle f, f\rangle)) 
  \\
  & \stackrel{(V2)}{=} &
  \Tr_{A,X}^{X \times X}(\langle f, f, f \rangle \cdot (\can \times A))
  \\
  & = & 
  \Tr_{A,X}^{X \times X}((\Delta_X \times X) \cdot \langle f, f\rangle \cdot (\can \times A))
  \\
  & \stackrel{(Sl)}{=} &
  \Tr_{A,X}^X(\langle f, f\rangle \cdot \underbrace{(\can \times A) \cdot (\ibox(\Delta_X) \times A)}_{= \Delta_{\ibox X} \times A})
  \\
  & = &
  \Tr_{A,X}^X(\langle f \cdot (\Delta_{\ibox X} \times A), f \cdot (\Delta_{\ibox X} \times A))
  \\
  & = & \sol{(f \cdot (\Delta_{\ibox X} \times A))} & by def.~of $\dagger$.
\end{array}
\]

\subsection{From dagger to trace}


We prove that the operation $\Tr$ defined in Construction~\ref{con:inv}.2 satisfies all the axioms of a guarded trace operator. Again we start with a technical lemma. 

\begin{lemma}
  \label{lem:hTr}
  Let $f: \ibox X \times A \to X \times B$ and define
  \[
  h = (\xymatrix@1@C+1pc{
    \ibox (X \times B) \times A 
    \ar[r]^-{\ibox \prl \times A}
    &
    \ibox X \times A
    \ar[r]^-f
    &
    X\times B
  }).
  \]
  Then we have
  \[
  \Tr_{A,B}^X(f) = (\xymatrix@1{
    A \ar[r]^-{\sol h} & X \times B \ar[r]^-\prr & B
    }).
  \]
\end{lemma}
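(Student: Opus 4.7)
The plan is to apply the Beki\v{c} identity (Proposition~\ref{prop:bekic}), available since we are working in a guarded Conway category, to rewrite $\sol h$ componentwise.  As preparation, I will factor $h$ in the shape required by Beki\v{c}.  Let $\prr: \ibox B \times A \to A$ denote the product projection and set
\[
f' \deq \prl \cdot f \cdot (\ibox X \times \prr): \ibox X \times \ibox B \times A \to X,
\qquad
g' \deq \prr \cdot f \cdot (\ibox X \times \prr): \ibox X \times \ibox B \times A \to B.
\]
Using $\prl \cdot \can = \ibox \prl$, a short calculation shows that $\langle f', g'\rangle \cdot (\can \times A) = f \cdot (\ibox \prl \times A) = h$.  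Beki\v{c} therefore yields $\sol h = \langle \sol{e_L}, \sol{e_R}\rangle$, where the right-hand component is
\[
e_R = g' \cdot \langle p_X \cdot \sol{f'}, \ibox B \times A\rangle: \ibox B \times A \to B.
\]

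The next step is to simplify $\sol{f'}$ using the parameter identity (P).  Since $f'$ factors as $(\prl \cdot f) \cdot (\ibox X \times \prr)$, (P) gives $\sol{f'} = \sol{(\prl \cdot f)} \cdot \prr$.  Substituting this into the definition of $e_R$ and collecting the two occurrences of the projection $\prr$ into a single one outside the pairing, a routine diagram chase yields
\[
e_R = \tragger_{A,B}^X(f) \cdot \prr,
\]
where $\tragger_{A,B}^X(f) = \prr \cdot f \cdot (p_X \times A) \cdot \langle \sol{(\prl \cdot f)}, A\rangle$ is the trace defined in Construction~\ref{con:inv}.2.

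Finally I will compute $\sol{e_R}$ directly from the fixpoint identity.  Writing $k \deq \tragger_{A,B}^X(f)$ and applying $(\dagger)$ to $e_R = k \cdot \prr$, I obtain
\[
\sol{e_R} = e_R \cdot (p_B \times A) \cdot \langle \sol{e_R}, A\rangle = k \cdot \prr \cdot (p_B \times A) \cdot \langle \sol{e_R}, A\rangle = k,
\]
because $\prr \cdot (p_B \times A): B \times A \to A$ is the ordinary product projection, which composed with $\langle \sol{e_R}, A\rangle$ is $\id_A$.  Hence $\prr \cdot \sol h = \sol{e_R} = \tragger_{A,B}^X(f)$ as required.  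The only real subtlety is the opening move: spotting that $h$ can be factored through $\can \times A$ in this specific way so that Beki\v{c} becomes applicable.  Once that is done, everything is forced by (P) and $(\dagger)$ alone, with no need for the double dagger identity or any form of uniqueness.
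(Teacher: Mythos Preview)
Your argument is correct: the factorisation $\langle f', g'\rangle \cdot (\can \times A) = h$ is easily checked (using $\prl \cdot \can = \ibox\prl$ and $\prr \cdot \can = \ibox\prr$), the use of the parameter identity to simplify $\sol{f'}$ is sound, and the final unfolding of $\sol{e_R}$ by $(\dagger)$ is clean.

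The paper takes a more direct route that avoids the Beki\v{c} identity altogether.  It first observes, via the composition identity~(C), that $\prl \cdot \sol h = \sol{(\prl \cdot f)}$.  It then applies the fixpoint identity $(\dagger)$ once to $\sol h$ and uses naturality of $p$ to rewrite the result as $\sol h = f \cdot (p_X \times A) \cdot \langle \sol{(\prl \cdot f)}, A\rangle$; postcomposing with $\prr$ finishes.  Thus the paper needs only $(\dagger)$ and~(C), whereas your proof routes through Beki\v{c}, which in turn depends on all four Conway axioms including $(\dagger\dagger)$.  Your approach has the virtue of making the two components of $\sol h$ explicit via the pairing identity, but the paper's argument is shorter and shows the lemma holds under weaker hypotheses than the full Conway package.
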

\begin{proof}
  Notice first that by the simplified composition identity we have
  $\prl \cdot \sol h = \sol{(\prl \cdot f)}$. This implies that
  \[
  \xymatrix@C+1.5pc{
    &
    A \ar[r]^-{\sol h}
    \ar[d]_{\langle \sol h, A\rangle}
    \ar `l[ld] [ldd]_(.4){\langle \sol{(\prl \cdot f)}, A\rangle}
    &
    X \times B
    \\
    &
    X \times B \times A
    \ar[r]_-{p_{X\times B} \times A}
    \ar[ld]_{\prl \times A}
    \ar@{}[l]_-\commu
    \ar@{}[ru]|{\commu\ \text{by ($\dagger$)}}
    \ar@{}[rd]|{\commu\ \text{by nat. of $p$}}
    &
    \ibox (X \times B)  \times A
    \ar[u]_{h}
    \ar[rd]^{\ibox \prl \times A}
    \ar@{}[r]^-\commu
    &
    \\
    X \times A
    \ar[rrr]_-{p_X \times A}
    &&&
    \ibox X \times A
    \ar `u[luu]_(.6)f [luu]
    \
    }
  \]
  The result follows by postcomposing with $\prr$; by the definition
  of $\Tr$ we have
  \[
  \Tr_{A,B}^X (f) = \prr \cdot f \cdot (p_X \times A) \cdot
  \langle \sol{(\prl\cdot f)}, A\rangle = \prr \cdot \sol h.
  \]
\end{proof}
We now verify the properties of a guarded trace for $\Tr$. 

(1)~Vanishing (V1). For any $f: \ibox 1 \times A \to B$ the definition
of $\Tr_{A,B}^1(f)$ yields $f \cdot (p_1 \times A)$; for if
we consider $B$  as the product $1 \times B$ we see that both
$\prl \cdot f: \ibox 1 \times A \to 1$ and its dagger $\sol{(\prl
  \cdot f)} : A \to 1$ are unique morphisms, which implies that $\langle
\sol{(\prl \cdot f)}, A\rangle: A \to 1 \times A$ is the canonical
isomorphism $A \cong 1 \times A$, and $\prr: 1 \times B \to B$ is the canonical isomorphism $1 \times B \cong B$.

(2)~Vanishing (V2). Given $f: \ibox X \times \ibox Y \times A \to X
\times Y \times B$ we form $F = \prl \cdot f: \ibox X \times \ibox Y
\times A \to X$ and $G = \pi_m \cdot f: \ibox X \times \ibox Y \times
A \to Y$, where $\pi_m: X \times Y \times B \to B$ denotes the middle product projection. Then by the Beki\v{c} identity (see Proposition~\ref{prop:bekic}) we
have $\sol{(\langle F, G\rangle \cdot (\can \times A))} = \langle \sol e_L,
\sol e_R\rangle$ for appropriate $e_L: \ibox X \times A \to X$ and
$e_R: \ibox Y \times A \to Y$. From the following diagram we see that
$\prl \cdot \Tr_{\ibox Y \times A, Y\times B}^X (f) = e_R$:
\[
\xymatrix{
  \ibox Y \times A
  \ar[rr]^-{\langle \sol F, \ibox Y \times A\rangle}
  &&
  X \times \ibox Y \times A
  \ar[rr]^-{p_X \times Y \times A}
  &&
  \ibox X \times \ibox Y \times A
  \ar[r]^-f
  \ar[rrd]_G
  \ar@{}[ld]|\commu
  &
  X \times Y \times B
  \ar[r]^-{\prr}
  \ar[rd]^{\pi_m}
  \ar@{}[d]|(.3)\commu
  &
  Y\times B
  \ar[d]^{\prl}
  \ar@{<-} `u[l] `[llllll]_{\Tr^X(f)}^\commu [llllll]
  \ar@{}[ld]^(.25)\commu
  \\
  &&&&&&
  Y
  \ar@{<-} `l[llllllu]^{e_R} [llllllu]
}
\]
By the naturality of $p$ we have
\begin{equation}\label{eq:canp}
  p_X \times p_Y = (\xymatrix@1@C+1pc{
    X \times Y \ar[r]^-{p_{X\times Y}} & \ibox (X\times Y) \ar[r]^-{\can} & \ibox X \times \ibox Y
  }).
\end{equation}
Now we obtain
\[
\begin{array}{rcl}
  \Tr_{A,B}^Y(\Tr_{\ibox Y \times A, Y \times B}^X(f)) 
  & = & 
  \prr \cdot \Tr^X(f) \cdot (p_Y \times A) \cdot \langle \sol e_R, A\rangle \\
  & = & \prr \cdot f \cdot (p_X \times \ibox Y \times A) \cdot 
  \underbrace{\langle \sol F, \ibox Y \times A\rangle \cdot (p_Y
    \times A) \cdot \langle \sol
    e_R, A\rangle}
  \\
  & = & \prr \cdot f \cdot (p_X \times \ibox Y \times A) \cdot 
  \langle \sol F \cdot \langle p_Y \cdot \sol e_R, A\rangle, p_Y \cdot
  \sol e_R, A\rangle
  \\
  & \stackrel{\refeq{eq:eL}}{=} &
  \prr \cdot f \cdot (p_X \times \ibox Y \times A) \cdot \langle \sol e_L, p_Y \cdot \sol e_R, A\rangle
  \\
  & = &
  \prr \cdot f \cdot (p_X \times p_Y \times A) \cdot \langle \sol e_L, \sol e_R, A\rangle
  \\
  & \stackrel{\refeq{eq:canp}}{=} &
  \prr \cdot f \cdot (\can \times A) \cdot (p_{X\times Y} \times A) \cdot \langle \sol e_L, \sol e_R, A\rangle
\end{array}
\]
That this is $\Tr_{A,B}^{X\times Y}(f \cdot (\can \times A))$ now follows from the definition of $\Tr$, the fact that $\langle \sol e_L, \sol e_R\rangle = \sol{(\langle F, G\rangle \cdot (\can \times A))}$ holds by the Beki\v{c} identity and since $\langle F, G \rangle = \prl' \cdot f$ where $\prl': X \times Y \times A \to X \times Y$. 

(3)~Superposing. Let $f: \ibox X \times A \to X \times B$ and denote by $\prr': X \times B \times C \to B \times C$ and $\prl': X\times B \times C \to X$ the projections. Notice first that we have
\[
\begin{array}{rcl}
  \sol{(\prl' \cdot (f \times C))} & = & 
  \big(\xymatrix@1{
    \ibox X \times A \times C 
    \ar[rr]^-{\ibox X \times \prl}
    &&
    \ibox X \times A \ar[r]^-f 
    & 
    X \times B
    \ar[r]^-{\prl}
    &
    X
    }\big)^\dagger
  \\
  & \stackrel{(P)}{=} &
  \big(\xymatrix@1{
    A \times C
    \ar[r]^-\prl
    &
    A
    \ar[rr]^-{\sol{(\prl \cdot f)}}
    &&
    X}\big).
\end{array}
\]
Using this we obtain
\[
\begin{array}{rcl}
  \Tr_{A\times C, B\times C}^X (f \times C) 
  & \stackrel{\text{def}}{=} &
  \prr' \cdot (f \times C) \cdot (p_X \times A \times C) \cdot \underbrace{\langle \sol{(\prl' \cdot (f\times C))}, A\times C\rangle}_{= \langle \sol{(\prl \cdot f)}, A \rangle \times C}
  \\
  & = &
  \big(\prr \cdot f \cdot (p_X \times A) \cdot \langle \sol{(\prl \cdot f)}, A \rangle\big) \times C
  \\
  & \stackrel{\text{def}}{=} & \Tr_{A,B}^X(f) \times C.
\end{array}
\]

(4)~Yanking. Consider $c: \ibox X \times X \to X \times \ibox X$. Then by definition we have
\[
\Tr_{X,\ibox X}^X(c) = \prr \cdot c \cdot (p_X \times X) \cdot \langle \sol{(\prl \cdot c)}, X\rangle.
\]
Thus, we are done if we show that $\sol{(\prl \cdot c)}$ is the identity on $X$, which easily follows from the fixpoint identity:
\[
\sol{(\prl \cdot c)} 
\stackrel{(\dagger)}{=} 
\prl \cdot c \cdot (p_X \times X) \cdot \langle\sol{(\prl \cdot c)}, X\rangle
=
\prr \cdot (p_X \times X) \cdot \langle\sol{(\prl \cdot c)}, X\rangle = X.
\]

(5) Left tightening. Let $f: \ibox X \times A \to X \times B$ and $g: A' \to A$. By the parameter identity we have
\begin{equation}\label{eq:p}
  \sol{(\prl \cdot f \cdot (\ibox X \times g))} = \sol{(\prl \cdot f)} \cdot g.
\end{equation}
Then we have
\[
\begin{array}{rcl}
  \Tr_{A',B}^X(f \cdot (\ibox X \times g)) & \stackrel{\text{def}}{=} & 
  \prr \cdot f \cdot (\ibox X \times g) \cdot (p_X \times A') \cdot \langle \sol{(\prl \cdot f \cdot (\ibox X \times g))}, A'\rangle 
  \\
  & \stackrel{\refeq{eq:p}}{=} & 
   \prr \cdot f \cdot (\ibox X \times g) \cdot (p_X \times A') \cdot \langle \sol{(\prl \cdot f)} \cdot g, A'\rangle
  \\
  & = &
  \prr \cdot f \cdot (p_X \times A) \cdot \langle \sol{(\prl \cdot f)} , A\rangle \cdot g
  \\
  & \stackrel{\text{def}}{=} & 
  \Tr_{A,B}^X(f) \cdot g.
\end{array}
\]

(6)~Right tightening. Let $f: \ibox X \times A \to X \times B$ and $g: B \to B'$. We compute
\[
\begin{array}{rcl}
  \Tr_{A,B'}^X((X \times g) \cdot f) & \stackrel{\text{def}}{=} &
  \underbrace{\prr \cdot (X \times g)}_{= g \cdot \prr} \cdot f \cdot (p_X \times A) \cdot \sol{\langle (\underbrace{\prl \cdot (X \times g)}_{= \prl} \cdot f)}, A\rangle
  \\
  & \stackrel{\text{def}}{=} &
  g \cdot \Tr_{A,B}^X(f). 
\end{array}
\]

(7)~Sliding. Let $f: \ibox X \times A \to X' \times B$ and $g: X' \to X$. Notice first that we have
\begin{equation}\label{eq:sl}
  \sol{(\prl \cdot (g \times B) \cdot f)} = \sol{(g \cdot (\prl \cdot f))} 
  \stackrel{(C)}{=} g \cdot \sol{(\prl \cdot f \cdot (\ibox g \times A))}.
\end{equation}
Then we obtain
\[
\begin{array}{rcl}
  \Tr_{A,B}^X((g \times B) \cdot f) & \stackrel{\text{def}}{=} &
  \underbrace{\prr \cdot (g \times B)}_{= \prr} \cdot f \cdot (p_X \times A) \cdot 
  \langle \sol{(\prl \cdot (g \times B) \cdot f)}, A\rangle
  \\
  & \stackrel{\refeq{eq:sl}}{=} &
  \prr \cdot f \cdot (p_X \times A) \cdot \langle g \cdot \sol{(\prl \cdot f \cdot (\ibox g \times A))}, A \rangle
  \\
  & = &
  \prr \cdot f \cdot \underbrace{(p_X \times A) \cdot (g \times A)}_{= (\ibox g \times A) \cdot (p_{X'} \times A)} \cdot\langle \sol{(\prl \cdot f \cdot (\ibox g \times A))}, A \rangle 
  \\
  & \stackrel{\text{def}}{=} & 
  \Tr_{A,B}^{X'}(f \cdot (\ibox g \times A)).
\end{array}
\]

\subsection{Mutual inverses}

\paragraph{From dagger to trace and back.} We show that $\dagger_{\Tr_\dagger} = \dagger$. For any $f: \ibox X \times A \to X$ we compute
\[
\dagger_{\Tr_\dagger}(f) 
\stackrel{\text{def}}{=} 
(\Tr_\dagger)_{A,X}^X (\langle f, f \rangle)
\stackrel{\text{def}}{=} 
\underbrace{\prr \cdot \langle f,f\rangle}_{= f} \cdot (p_X \times A) \cdot \underbrace{\langle \sol{(\prl \cdot \langle f, f\rangle)}, A\rangle}_{= \langle \sol f, A\rangle}
\stackrel{(\dagger)}{=} 
\sol f.
\]

\paragraph{From trace to dagger and back.} We show that $\Tr_{\dagger_\Tr} = \Tr$. Let $f: \ibox X \times A \to X \times B$. Then we have by the definition of $\dagger_\Tr$
\begin{equation}\label{eq:aux}
  (\prl \cdot f)^{\dagger_\Tr} 
  \stackrel{\text{def}}{=}
  \Tr_{A,X}^X (\langle \prl \cdot f, \prl \cdot f\rangle) 
  = 
  \Tr_{A,X}^X (\Delta_X \cdot \prl \cdot f) 
  =: 
  h.
\end{equation}
Using Lemma~\ref{lem:fptr}, this allows us to conclude
\[
(\Tr_{\dagger_\Tr})_{A,B}^X(f) 
\stackrel{\text{def}}{=} 
\prr \cdot f \cdot (p_X \times A) \cdot \underbrace{\langle (\prl \cdot f)^{\dagger_\Tr}, A\rangle}_{= \langle h, A\rangle}
\stackrel{\text{Lem.~\ref{lem:fptr}}}{=}
\Tr_{A,B}^X (f).
\]

\section{Proof of Theorem~\ref{thm:unif}}
\label{app:unif}

\paragraph{1.~From trace to dagger.}
Let $f$, $f'$ and $h$ form the commutative square on the left below:
\[
\xymatrix{
  \ibox X \times A 
  \ar[r]^-f
  \ar[d]_{\ibox h \times A}
  \ar@{}[rd]|\commu
  &
  X
  \ar[d]^h
  \\
  \ibox X' \times A 
  \ar[r]_-{f'}
  &
  X'
}
\qquad\qquad
\xymatrix{
  \ibox X \times A
  \ar[r]^-{\langle f,f\rangle}
  \ar[d]_{\ibox h \times A}
  &
  X \times X
  \ar[rd]_{h \times h}
  \ar[r]^-{X \times h}
  &
  X \times X'
  \ar[d]^{h \times X'}
  \ar@{}[ld]|(.35)\commu
  \\
  \ibox X' \times A
  \ar[rr]_-{\langle f',f'\rangle}
  \ar@{}[ru]_(.65)\commu
  &&
  X'\times X' 
}
\]
Then the diagram on the right above commutes, too, and thus, by uniformity of $\Tr$ we have
\begin{equation}\label{eq:unifTr}
  \Tr_{A,X'}^X((X\times h) \cdot \langle f, f\rangle) = \Tr_{A, X'}^{X'}\langle f', f'\rangle.
\end{equation}
Thus, we obtain: 
$\ 
h \cdot \sol f 
\stackrel{\text{def}}{=} 
h \cdot \Tr_{A,X}^X\langle f, f \rangle
\stackrel{(Rt)}{=} 
\Tr_{A,X'}^{X} ((X \times h) \cdot \langle f, f\rangle)
\stackrel{\refeq{eq:unifTr}}{=}
\Tr_{A,X'}^{X'}\langle f', f' \rangle
\stackrel{\text{def}}{=} 
\sol{(f')}.$

\paragraph{2.~From dagger to trace.}
Let $f$, $f'$ and $h$ form the commutative square on the left below:
\[
\xymatrix{
  \ibox X \times A 
  \ar[r]^-f
  \ar[d]_{\ibox h \times A}
  \ar@{}[rd]|\commu
  &
  X \times B
  \ar[d]^{h \times B}
  \\
  \ibox X' \times A 
  \ar[r]_-{f'}
  &
  X' \times B
}
\qquad\qquad
\xymatrix@C+1pc{
  \ibox (X \times B) \times A
  \ar[r]^-{\ibox\prl \times A}
  \ar[d]_{\ibox (h \times B) \times A}
  &
  \ibox X \times A
  \ar[r]^-f
  \ar[d]_{\ibox h \times A}
  \ar@{}[ld]|\commu
  \ar@{}[rd]|\commu
  &
  X\times B
  \ar[d]^{h \times B}
  \\
  \ibox (X' \times B) \times A
  \ar[r]_-{\ibox \prl \times A}
  &
  \ibox X' \times A
  \ar[r]_-{f'}
  &
  X' \times B
}
\]
Then the diagram on the right above commutes, too, and thus, by uniformity of $\dagger$ we have
\begin{equation}\label{eq:unifdag}
  (h \times B) \cdot \sol{(f \cdot (\ibox \prl \times A))} 
  = 
  \sol{(f' \cdot (\ibox\prl \times A))}
\end{equation}
Using Lemma~\ref{lem:hTr} we now compute:
\[
\begin{array}{rcl}
\Tr_{A,B}^{X'}(f') 
& \stackrel{\text{Lem.~\ref{lem:hTr}}}{=} &
\prr \cdot \sol{(f' \cdot (\ibox \prl \times A))} \\
& \stackrel{\refeq{eq:unifdag}}{=} &
\underbrace{\prr \cdot (h \times B)}_{= \prr} \cdot \sol{(f \cdot (\ibox \prl \times A))}\\
& \stackrel{\text{Lem.~\ref{lem:hTr}}}{=} &
\Tr_{A,B}^X(f).
\end{array}
\]

\section{Proof of Corollary~\ref{cor:iso}}
\label{app:iso}

1.~Let $F: (\C, \ibox^\C, \Tr_{\C}) \to (\D, \ibox^\D, \Tr_{\D})$ be a morphism of guarded traced categories. We show that $F$ preserves $\dagger_\Tr$ as defined in Construction~\ref{con:inv}.1. Let $f: \ibox X \times A \to X$ in $\C$. Then we have (dropping subscripts of $\dagger$ and $\Tr$) 
\[
\begin{array}{rclp{5cm}}
  F(\sol f) & = & F(\Tr^X\langle f, f\rangle) & (by definition of $\dagger$) \\
  & = & \Tr^{FX}(F\langle f, f\rangle) & ($F$ trace preserving) \\
  & = & \Tr^{FX}\langle Ff, Ff\rangle & ($F$ finite product preserving) \\
  & = & \sol{(F f)} & (by definition of $\dagger$).
\end{array}
\]

2.~Let $F: (\C, \ibox^\C, \dagger) \to (\D, \ibox^\D, \ddagger)$ be a morphism of guarded Conway categories. We show that $F$ preserves $\Tr_\dagger$ as defined in Construction~\ref{con:inv}.2. Let $f:\ibox^\C X \times A \to X \times B$ in $\C$. Then we have (again we drop all subscripts of $\Tr$ and $\dagger$)
\[
\begin{array}{rclp{5cm}}
  F(\Tr^X(f)) & = & F\left(\prr \cdot (f \cdot (\ibox^\C \prl\times A))^\dagger\right) & (by Lemma~\ref{lem:hTr}) \\
  & = & F \prr \cdot \left(Ff \cdot F(\ibox^\C \prl \times A)\right)^\ddagger & ($F$ dagger preserving) \\
  & = & \prr \cdot (Ff \cdot (F(\ibox^\C \prl) \times FA))^\ddagger & ($F$ finite product preserving) \\
  & = & \prr \cdot (Ff \cdot (\ibox^\D \underbrace{F\prl}_{=\prl} \times FA))^\ddagger & (by \refeq{eq:sat}) \\
  & = & \Tr^{FX}(Ff) & (by Lemma~\ref{lem:hTr}).
\end{array}
\]
This completes the proof.
\end{document}